\documentclass[11pt,a4paper]{article}
\usepackage{jheppub}
\usepackage{amsthm}
\usepackage[T1]{fontenc}
\usepackage[utf8]{inputenc}
\usepackage{mathtools}   
\usepackage{amsfonts}
\usepackage{physics}
\usepackage{mathrsfs}
\usepackage{natbib}
\usepackage{tikz} 
\usepackage{enumitem}
\usepackage{subcaption}
\usepackage{graphicx}

\title{Blocks and Vortices in the 3d ADHM Quiver Gauge Theory}

\author{Samuel Crew,}
\author{Nick Dorey,}
\author{Daniel Zhang}
\affiliation{Department of Applied Mathematics and Theoretical Physics, University of Cambridge\\Cambridge, CB3 0WA, UK}

\emailAdd{s.c.crew@damtp.cam.ac.uk}
\emailAdd{n.dorey@damtp.cam.ac.uk}
\emailAdd{d.zhang@damtp.cam.ac.uk}

\abstract{We study the hemisphere partition function of a three-dimensional $\mathcal{N}=4$ supersymmetric $U(N)$ gauge theory with one adjoint and one fundamental hypermultiplet---the ADHM quiver theory. In particular, we propose a distinguished set of UV boundary conditions which yield Verma modules of the quantised chiral rings of the Higgs and Coulomb branches. In line with a recent proposal by two of the authors in collaboration with M. Bullimore, we show explicitly that the hemisphere partition functions recover the characters of these modules in two limits, and realise blocks gluing exactly to the partition functions of the theory on closed three-manifolds. We study the geometry of the vortex moduli space and investigate the interpretation of the vortex partition functions as equivariant indices of quasimaps to the Hilbert scheme of points in $\mathbb{C}^2$. We also investigate half indices of the ADHM quiver gauge theory in the presence of a line operator and discuss their geometric interpretation. Along the way we find interesting relations between our hemisphere blocks and related quantities in topological string theory and  equivariant quantum K-theory.}

\theoremstyle{definition}

\newtheorem{lemma}{Lemma}[section]

\begin{document}
\maketitle


\section{Introduction}
Three dimensional gauge theories with ${\mathcal N}=4$ supersymmetry sit at the centre of a remarkable web of connections between physics and mathematics \cite{Bullimore:2016hdc, Aganagic:2017smx, Bullimore:2015lsa, Rozansky:1996bq, Nakajima:2015txa, Braverman:2016wma, Cremonesi:2013lqa, Bullimore:2018jlp, Bullimore:2019qnt, Costello:2018swh}. On the physics side, these theories flow to strongly interacting conformal fixed points in the IR and, in many cases, the resulting conformal theories coincide with worldvolume theories of M-theory membranes \cite{Kapustin:2010xq, Gang:2011xp}. These CFTs also have gravitational duals with supersymmetric black hole solutions. Accounting for the entropy of these black holes from the perspective of the dual gauge theory is an active subject of research \cite{Benini:2015eyy,Benini:2016rke,Choi:2019dfu,Choi:2019zpz}.   

On the mathematical side, the vacuum moduli spaces of a large class of these theories coincide with Nakajima quiver varieties \cite{Nakajima:1994nid}. The quantised coordinate rings of these spaces give rise to interesting non-commutative algebras \cite{Braden:2014iea,MR3594663}. The actions of quantum groups and algebras also arise in the K-theory of these spaces \cite{nakajima1998} and in the enumerative geometry of curves (quantum K-theory) \cite{Aganagic:2017gsx,Smirnov:2016cqz,Koroteev:2017nab}. Three dimensional $\mathcal{N}=4$ theories are equipped with various protected observables which can be evaluated in the IR and are often characterised in terms of the geometry of quiver varieties and the representation theory of the associated algebras.  

Among the protected observables of 3d theories with $\mathcal{N}\ge 2$
supersymmetry are partition functions computed on certain closed three manifolds. Remarkably, a wide class of these partition functions can be constructed from a common constituent, the holomorphic blocks of \cite{Beem:2012mb}. The factorisation into blocks has been demonstrated in a number of examples including the three-sphere partition function \cite{Pasquetti:2011fj}, the superconformal index \cite{Hwang:2012jh} and topologically twisted indices \cite{Cabo-Bizet:2016ars,Crew:2020jyf}. The factorisation can also be understood from the Higgs branch localisation perspective \cite{Benini:2013yva,Fujitsuka:2013fga}. The blocks are interesting quantities in their own right which receive both perturbative quantum corrections and non-perturbative contributions from the vortices of the 3d theory. For theories with an AdS dual, recent work \cite{Choi:2019zpz,Choi:2019dfu} suggests that in a limit of large angular momentum a single ``Cardy block'' dominates the thermodynamic ensemble relevant for calculating black hole entropy. 

Work by two of the authors with M. Bullimore \cite{Bullimore:2020jdq} has provided a first principles construction of these fundamental blocks as hemisphere partition functions on $S^1\times D$ with exceptional Dirichlet UV boundary conditions. Geometrically, these boundary conditions flow to thimble branes in the IR Rozansky-Witten $\sigma$-model. The state-operator correspondence relates the hemisphere partition functions to a half-index counting local operators inserted at the origin of $\Omega$-deformed $\mathbb{R}_{\geq 0}\times \mathbb{R}^2$ and in this picture, as detailed in \cite{Bullimore:2016nji}, these particular UV boundary conditions are associated with Verma modules of the quantised Higgs and Coulomb branch chiral rings.

In this paper we elucidate these ideas for a particularly interesting example: the ADHM quiver theory.  This theory is realised as the three dimensional worldvolume theory of $N$ $D2$ branes on top a single $D6$ brane in type IIA string theory and has a Lagrangian quiver description (see figure \ref{fig:adhm}) with one adjoint and one fundamental hupermultiplet.\footnote{The name ``ADHM quiver'' comes from the fact that the hyperk\"{a}hler quotient description of the Higgs branch of this theory coincides with the ADHM construction \cite{Atiyah:1978ri} of the moduli space of $N$ non-commutative instantons \cite{Nekrasov:1998ss} in a $U(1)$ gauge theory.} The ADHM theory flows in the IR to the ABJM theory describing the worldvolume of $N$ M2 branes and the ABJM theory is in turn holographically dual to M-theory on $\text{AdS}_{4}\times S^{7}$ \cite{Aharony:2008ug}. In interesting recent work \cite{Choi:2019zpz}, the Cardy block of the ADHM quiver theory has been evaluated in the large $N$ limit and found to reproduce the entropy of supersymmetric asymptotically $\text{AdS}_{4}$ black holes. The ADHM quiver theory also appears in recent attempts to formulate a topologically twisted version of the AdS/CFT correspondence \cite{Costello:2017fbo}. 

In the present work we construct the blocks, realised as hemisphere partition functions, for the ADHM theory and verify explicitly that they glue to reproduce the superconformal index, the $S^{3}$ partition function and the $A$- and $B$-twisted indices. The theory is self-dual under 3d mirror symmetry and both the Coulomb and Higgs branch coincide with the Hilbert scheme of $N$ points on $\mathbb{C}^{2}$. The corresponding quantised coordinate ring has been identified by Kodera and Nakajima as a cyclotomic rational Cherednik algebra \cite{Kodera:2016faj} and we show that specialised limits of these blocks reproduce Verma characters of this algebra in accordance with the general theory of \cite{Bullimore:2020jdq}. Along the way, we discuss the implications of 3d mirror symmetry in the geometric setup and discuss the relationship between the twisted indices and the Hilbert series of the Hilbert scheme of $N$ points in $\mathbb{C}^2$.

In the following we provide several different perspectives on the block of the ADHM theory. We first show that the vortex contributions coincide with the equivariant K-theoretic vertex function of quasi-maps to the Hilbert scheme of points---a similar relation appears for linear quivers in the q-Langlands correspondence \cite{Aganagic:2017smx}. We also give a geometric characterisation of the blocks in certain limits as the Poincar\'e polynomials of quasi-map or vortex moduli spaces. 
Finally, we discuss a connection between the 3d blocks of the ADHM theory and the 1-leg K-theoretic PT vertex. In particular, we use holomorphic factorisation to understand the twisted index of the 3d ADHM theory as the $N$ $D2$ brane sector of the conifold amplitude in PT theory.

\paragraph{Outline}
We begin in section \ref{sec:background} with an overview of the UV description of the theory and discuss combinatorial aspects of fixed points on the vacuum moduli space.

In section \ref{sec:vpfn}, we construct the hemisphere partition function of the 3d ADHM theory with boundary conditions associated to vacua. We discuss geometric aspects of two supersymmetry enhancing limits, namely the Verma character limit and the limit in which the vortex partition function is expected to coincide with a generating function of Poincar\'e polynomials of the vortex moduli space. We find combinatorial expressions for the vortex contributions to the hemisphere partition functions and show that 3d mirror symmetry implies interesting identities for generating functions of reverse plane partitions.

In section \ref{sec:factorisation} we turn to holomorphic factorisation of the 3d ADHM theory. We explicitly demonstrate the exact factorisation into hemisphere partition functions of the $A$- and $B$-twisted indices and discuss the connection to the Hilbert series of the Hilbert scheme of points in $\mathbb{C}^2$. Using the geometric interpretation of section \ref{subsec:quasimaps}, we then relate the Hilbert series to the Poincar\'e polynomial of the vortex moduli space and use Macdonald polynomial methods \ref{appendix:symmetricfuns} to compute the large $N$ limit.

In this work we also study an alternative (Neumann) choice of boundary condition \cite{Yoshida:2014ssa}. The corresponding half index in the presence of a line operator can be expressed as a contour integral. In section \ref{sec:qm}, we focus on a particularly simple Neumann boundary condition and show that the index can be realised as counting gauge invariant states in the matrix model for a certain Chern-Simons quantum mechanics. We discuss a geometric interpretation of this boundary condition as an equivariant Euler characteristic counting sections of holomorphic line bundles over a distinguished Lagrangian in the Higgs branch of the ADHM theory. Finally, we compute the Euler characteristic in terms of Milne symmetric polynomials and conjecture that the matrix model yields simple modules of the ADHM Coulomb branch algebra $\mathcal{A}^C_N$.

We include detailed appendices covering conventions for partitions and symmetric functions \ref{appendix:combinatorics} as well as novel results on evaluating Molien integrals using Macdonald polynomial methods \ref{appendix:symmetricfunctionmethods}.


\section{Background}\label{sec:background}
We focus on a particular 3d gauge theory, denoted 3d ADHM, living on N D2-branes on top a single D6-brane in type IIA string theory. In the IR this theory is expected to flow to an $\mathcal{N}=8$ SCFT, the ABJM theory, living on the worldvolume of $N$ M2-branes in flat spacetime. In the UV the theory has a Lagrangian description as a 3d $\mathcal{N}=4$ theory with $G = U(N)$ gauge symmetry.

Various partition conventions used in this section are summarised in appendix \ref{appendix:conventions}.

\paragraph{UV description}
The field content is summarised by the Jordan quiver in figure \ref{fig:adhm}---we refer to this theory as 3d ADHM with one flavour because of the role of this quiver in the ADHM construction \cite{Atiyah:1978ri} of the instanton moduli space. The theory has a vector multiplet, one hypermultiplet in the fundamental representation $(I,J)$ and an adjoint hypermultiplet $(A,B)$. 

\begin{figure}
\centering

\tikzset{every picture/.style={line width=0.75pt}} 

\begin{tikzpicture}[x=0.75pt,y=0.75pt,yscale=-1,xscale=1]

\draw   (265,115) .. controls (265,101.19) and (276.19,90) .. (290,90) .. controls (303.81,90) and (315,101.19) .. (315,115) .. controls (315,128.81) and (303.81,140) .. (290,140) .. controls (276.19,140) and (265,128.81) .. (265,115) -- cycle ;
\draw   (270,190) -- (310,190) -- (310,227.05) -- (270,227.05) -- cycle ;
\draw    (285,140) -- (285,187) ;
\draw [shift={(285,190)}, rotate = 270] [fill={rgb, 255:red, 0; green, 0; blue, 0 }  ][line width=0.08]  [draw opacity=0] (8.93,-4.29) -- (0,0) -- (8.93,4.29) -- cycle    ;
\draw    (310,100) .. controls (389.53,81.02) and (302.12,12.55) .. (300.04,87.68) ;
\draw [shift={(300,90)}, rotate = 270.24] [fill={rgb, 255:red, 0; green, 0; blue, 0 }  ][line width=0.08]  [draw opacity=0] (8.93,-4.29) -- (0,0) -- (8.93,4.29) -- cycle    ;
\draw    (295,190) -- (295,143) ;
\draw [shift={(295,140)}, rotate = 450] [fill={rgb, 255:red, 0; green, 0; blue, 0 }  ][line width=0.08]  [draw opacity=0] (8.93,-4.29) -- (0,0) -- (8.93,4.29) -- cycle    ;
\draw    (270,100) .. controls (191.46,80.69) and (278.56,12.55) .. (279.98,87.68) ;
\draw [shift={(280,90)}, rotate = 270.24] [fill={rgb, 255:red, 0; green, 0; blue, 0 }  ][line width=0.08]  [draw opacity=0] (8.93,-4.29) -- (0,0) -- (8.93,4.29) -- cycle    ;

\draw (284,200.4) node [anchor=north west][inner sep=0.75pt]    {$1$};
\draw (281,107.4) node [anchor=north west][inner sep=0.75pt]    {$N$};
\draw (251,66.4) node [anchor=north west][inner sep=0.75pt]    {$A$};
\draw (315.5,65.4) node [anchor=north west][inner sep=0.75pt]    {$B$};
\draw (302,153.9) node [anchor=north west][inner sep=0.75pt]    {$I$};
\draw (264.5,154.4) node [anchor=north west][inner sep=0.75pt]    {$J$};

\end{tikzpicture}

		\caption{The Jordan quiver specifying the field content of 3d ADHM with one flavour. }\label{fig:adhm}
\end{figure}
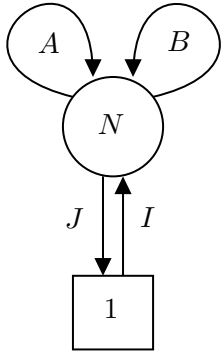

In a fixed $\mathcal{N}=2$ subalgebra the theory has R-symmetries $R_H=2 U(1)_H$ and $R_C=2 U(1)_C$ (normalised to have integer charges) acting on the vectormultiplet and hypermultiplet scalars respectively and the flavour symmetries acting on hypermultiplets and monopole operators are $G_H \cong U(1)$ and $G_C \cong U(1)$ respectively. We summarise the charges of the scalar components of the hypermultiplets below.
\begin{center}
    \begin{tabular}{c|c|c|c|c}
            & $R_H$ & $R_C$ & $G$ & $G_H$  \\ \hline
        $I$ & $+1$ & $0$ & $\square$ & $0$ \\
        $J$ & $+1$ & $0$ & $\overline{\square}$ & $0$ \\
        $A$ & $+1$ & $0$ & $\text{adj}$ & $+1$ \\
        $B$ & $+1$ & $0$ & $\text{adj}$ & $-1$
    \end{tabular}
\end{center}
The scalars can be regarded as linear maps:
\begin{equation}
    A,B \in \text{Hom}(V,V),\quad I \in \text{Hom}(W,V),\quad J \in \text{Hom}(V,W),
\end{equation}
where $V = \mathbb{C}^N$, $W = \mathbb{C}$.

In the presence of a real FI parameter $\xi$ and mass parameter $m$, the space of classical supersymmetric vacua consists of solutions to:
\begin{equation}\label{eq:ADHMmomentmaps}
   \mu_{\mathbb{C}} = [A,B]+I J = 0,\qquad \mu_{\mathbb{R}}=[A,A^\dagger]+[B,B^\dagger]+ II^\dagger - J^\dagger J = \xi,
\end{equation}
and:
\begin{equation}\label{eq:scalargroupaction}
   [\sigma,A]+mA = 0, \qquad [\sigma,B]-m B = 0, \qquad \sigma I = 0, \qquad -J \sigma = 0,
\end{equation}
modulo gauge transformations:
\begin{equation}
    (A,B,I,J) \mapsto \left(g A g^{-1}, g B g^{-1}, g I, J g^{-1}\right),
\end{equation}
with $g \in U(V)$ and $\sigma$ is the real scalar. $\mu_{\mathbb{R}}$ and $\mu_{\mathbb{C}}$ are real and complex moment maps for the $G$ action and, in the language of 3d $\mathcal{N}=2$ supersymmetry, correspond to $D$ and $F$ terms respectively. Throughout this work, we do not consider turning on a complex mass and FI parameter. Assuming $\xi >0$ and setting $m=0$, these equations require $\sigma=0$ and we recover the hyperk\"ahler quotient description of the Higgs branch $\mathcal{M}_H$ as the Hilbert scheme of points in the plane  -- $\text{Hilb}^N(\mathbb{C}^2)$:
\begin{equation}
    \mathcal{M}_{H} = \text{Hilb}^N(\mathbb{C}^2) = \mu_{\mathbb{R}}^{-1}(\xi)\cap \mu_{\mathbb{C}}^{-1}(0) / U(V).
\end{equation}
The FI parameter $\xi$ is a resolution parameter for $\text{Hilb}^N(\mathbb{C}^2) \to \text{Sym}^N(\mathbb{C}^2)$. Alternatively, $\mathcal{M}_H$ can be realised as a complex symplectic quotient by substituting the real moment map for a stability condition and performing the quotient by $\text{GL}(V)$ as in \cite{nakajima1999lectures}:
\begin{equation}
    \text{Hilb}^N(\mathbb{C}^2) = \left. \left\{ (A,B,I,J)\, \left\lvert\,\,
    \begin{aligned}
    &\mu_{\mathbb{C}} = [A,B]+I J = 0,\\
    &\nexists \text{ proper } S\subsetneq V \text{ s.t. } A(S)\subset S,\\
    &\quad B(S)\subset S, \text{im}I \subset S 
    \end{aligned}\right. \right\}\middle/GL(V)\right. .
\end{equation}

\paragraph{Group action and fixed points}
There is a natural $T^2=T_1\times T_2$ action on $ \text{Hilb}^N(\mathbb{C}^2)$ induced by the following action on the linear data:
\begin{equation}
    (t_1,t_2):\,(A,B,I,J) \mapsto (t_1 A, t_2 B, I , t_1 t_2 J),
\end{equation}
where $(t_1,t_2)\in \mathfrak{t}^2$. Further, it is shown in \cite{nakajima1999lectures} that enforcing the complex moment map and stability condition, or alternatively both moment maps (\ref{eq:ADHMmomentmaps}), forces $J=0$. Setting $t_1 = z t^{\frac{1}{2}} $ and $t_2 = z^{-1} t^{\frac{1}{2}}$ we have that, up to gauge transformations, $z$ and $t$ are fugacities for $G_H$ and $(R_H-R_C)/2$ respectively. We can regard these group actions as generated by mass deformations $m$ and $\tau$ with $z=e^{-m}$ and $t=e^{-\tau}$. Note that turning on the mass deformation $\tau$ softly the breaks the supersymmetry to $\mathcal{N}=2^*$. For later use we also define the exponentiated fugacity $\zeta = e^{-\xi}$ for the $G_C$ topological symmetry.

We now consider the fixed points under these group actions and briefly recap the results of \cite{nakajima1999lectures}. The fixed points coincide with isolated supersymmetric vacua in the presence of mass deformations -- we note that a non-zero $m$ is itself enough to give isolated vacua (see e.g. \cite{Smirnov:2018drm}). The fixed points are described by linear data $(A,B,I)$ such that 
\begin{equation}
    t_1 A = \lambda(t_1,t_2)^{-1} A \lambda(t_1,t_2)\,,\quad 
    t_2 B = \lambda(t_1,t_2)^{-1} B \lambda(t_1,t_2)\,,\quad 
    I = \lambda(t_1,t_2)^{-1} I,
\end{equation}
where $\lambda: T^2\rightarrow U(V)$.\footnote{The vacuum equations involving the adjoint scalar (\ref{eq:scalargroupaction}) are precisely the infinitesimal version of these equations.} $V$ can be decomposed with respect to the eigenspaces of $\lambda$ as follows: 
\begin{equation}
    V = \bigoplus_{k, l} V(k,l)\,,\quad V(k,l)=\left\{v \in V\,\rvert\, \lambda(t_1,t_2)\cdot v = t_1^{k}t_2^{l}v\right\}
\end{equation}
and, abusing notation, $\lambda$ denotes the homomorphism as well as the fixed point itself. Indeed, $\lambda$ represents the Young diagram of a partition of weight $N$ with the box $s=(i,j)$ representing an eigenspace $V_s \equiv V(-i_s+1,-j_s+1)$. The components of scalars $(A,B,I,J)$ in the vacuum $\lambda$ are given by
\begin{align}\label{eq:ADHMvacuum}
       A_{ab} =
    \begin{cases}
      1 & \text{if} \quad i_a = i_b+1 \quad \text{and} \quad j_a = j_b\\
      0 & \text{otherwise}
    \end{cases} ,
\nonumber\\
      B_{ab} =
    \begin{cases}
      1 & \text{if} \quad i_a = i_b \quad \text{and} \quad j_a=j_b+1\\
      0 & \text{otherwise}
    \end{cases} ,
\end{align}
\begin{equation}
      I_a = \delta_{a,1},
\qquad 
      J_a = 0 ,\nonumber
\end{equation}
with
\begin{equation}
    \sigma_{ab} = \delta_{ab}(j_a-i_a)m \, ,
\end{equation}
modulo Weyl transformations. Here indices $a,b=1,\ldots,N$ are for the fundamental representation of $U(V)$ and equivalently label boxes $a,b \in \lambda$ or their corresponding $\lambda(t_1,t_2)$ eigenspaces $V_{a}$ and $V_b$. Also $(i_s,j_s)$ denote the coordinates of the $s^{\text{th}}$ box of $\lambda$. 

As explained in more detail in appendix \ref{appendix:localisation}, the fixed points can be realised as critical values of the Morse function $h_H(\lambda) = m\cdot \mu_{H,\mathbb{R}}$ where $\mu_{H,\mathbb{R}}$ is the real moment map for the Hamiltonian action of $G_H$. The Morse flow with respect to this function provides an ordering on vacua
\begin{equation}
    \nu\in \overline{\mathcal{L}_{\lambda}} \Rightarrow \nu \leq \lambda \,,
\end{equation}
where $\mathcal{L}_\lambda$ is the holomorphic Lagrangian attracting submanifold of the vacuum $\lambda$ under upwards gradient flow. This Lagrangian locally coincides with the positive (with respect to $z$) tangent weight space at $\lambda$. The critical value at the fixed point is, see proposition (5.13) of \cite{nakajima1999lectures}:\footnote{Note we have adapted the notation in \cite{nakajima1999lectures} to the standard Macdonald \cite{macdonald1998symmetric} notation.}
\begin{equation}
\begin{split}
     m \cdot \mu_{H,\mathbb{R}}(\lambda) &=  m\, \xi \bigg(\sum_{i} (i-1)\lambda_i - \sum_{j}(j-1)\lambda_j^{\vee}  \bigg) \\
     &= - m \, \xi \sum_{s\in\lambda} c_{\lambda}(s).
\end{split}
\end{equation}
Consequently the ordering on the vacua is related to a partial order on Young diagrams (\ref{eq:dominanceordering}).

\paragraph{3d mirror symmetry}
The theory has a Coulomb branch algebra $\mathbb{C}[\mathcal{M}_C]$ generated by vevs of monopole operators. The Coulomb branch $\mathcal{M}_C$ is defined by the spectrum of this algebra \cite{Bullimore:2015lsa, Nakajima:2015txa} and turning on an FI parameter $\xi$ and real mass $m$ the resolved Coulomb branch is given by
\begin{equation}
    \mathcal{M}_C = \text{Hilb}^N(\mathbb{C}^2).
\end{equation}
The ADHM theory with one flavour is self-dual under 3d mirror symmetry \cite{deBoer:1996mp,Porrati:1996xi} and the duality exchanges
\begin{equation}\label{eq:mirrormap}
\begin{split}
    \mathcal{M}_H &\leftrightarrow \mathcal{M}_C, \\
    \zeta = e^{-\xi} &\leftrightarrow z=e^{-m}, \\
    R_H &\leftrightarrow R_C.
\end{split}
\end{equation}
3d mirror symmetry is an IR duality of gauge theories \cite{Intriligator:1996ex} and as well as providing a duality between Higgs and Coulomb branch geometry, supersymmetric observables are also identified in mirror dual theories.


\section{Hemisphere Partition Function}\label{sec:vpfn}
In this section we introduce the main observable of interest in this work: the hemisphere partition function of the 3d ADHM theory. We construct a UV boundary condition for the hemisphere partition function on $S^1\times D$ that realises an exact factorisation of partition functions on various closed three-manifolds.  We then discuss the relationship between the partition function and the geometry of quasimaps to the Hilbert scheme $\text{Hilb}^N(\mathbb{C}^2)$ and demonstrate that certain specialised limits realise Verma characters of the ADHM Higgs and Coulomb quantised chiral rings.

\subsection{Boundary condition and localisation}\label{sec:hemispherelocalisation}
We first define the half superconformal index of the theory on  $\Omega$-deformed $\mathbb{R}_{\ge 0} \times \mathbb{R}^2$ in the presence of a boundary condition $\mathcal{B}$ on the plane $x^1 = 0$. As a trace over local operators, the half index is defined by
\begin{equation}
    \mathcal{I}(\mathcal{B}) = \text{Tr}\,(-)^F q^{J+\frac{R_H+R_C}{4}}t^{\frac{R_H-R_C}{2}}z^{F_H}\zeta^{F_C}\,,
\end{equation}
where $J$ is the generator of rotations in the $\mathbb{R}^2$ plane, and $F_H$ and $F_C$ are generators of $G_H$ and $G_C$ respectively. In this work we consider a set of $\mathcal{N}=(2,2)$ exceptional Dirichlet boundary conditions denoted $\mathcal{B}_{\lambda}$. These boundary conditions are associated to the massive vacua of the theory $\lambda$ and are expected to flow to thimble boundary conditions in the IR sigma model specified by Lagrangian submanifolds $\mathcal{L}_{\lambda}$ in $\text{Hilb}^N(\mathbb{C}^2)$. In this way, $\mathcal{B}_{\lambda}$ mimics a vacuum $\lambda$ at infinity. We give a more detailed overview of the main aspects of this setup in appendix \ref{appendix:localisation}. The boundary condition $\mathcal{B}_{\lambda}$ preserves a subset of the $\mathfrak{osp}(4|4)$ superconformal algebra generated by the supercharges: $Q_{+}^{1\dot{1}}, \enspace Q_{-}^{1 \dot{2}}, \enspace Q_{-}^{2 \dot{1}}, \enspace Q_{+}^{2\dot{2}}$ and their superconformal conjugates. Our conventions for the 3d $\mathcal{N}=4$ supersymmetry algebra match those of \cite{Bullimore:2020jdq}.

The state-operator correspondence relates the local operator count $\mathcal{I}(\mathcal{B_{\lambda}})$ to a partition function $\mathcal{Z}^{\lambda}_{S^1\times D}$ on the hemisphere with a boundary condition on $\partial(S^1\times D) = T^2$. The $\Omega$-deformation enters as an angular momentum refinement on the hemisphere\footnote{In practice the angular momentum refinement is implemented by a twisted boundary condition.} and we have:
\begin{equation}
    \mathcal{Z}^{\lambda}_{S^1 \times D}=e^{\phi_{\lambda}}\mathcal{I} (\mathcal{B}_{\lambda}).
\end{equation}
where $\phi_{\lambda}$ corresponds to the equivariant integral of the anomaly polynomial encoding boundary mixed 't Hooft anomalies, which are dependent on the boundary condition $\mathcal{B}_{\lambda}$. This term determines the Casimir energy of the vacuum \cite{Bobev:2015kza}. For the exceptional Dirichlet boundary condition it includes a central charge term $\kappa_{\lambda}$:
\begin{equation}
    \kappa_{\lambda}\,:\quad  \mathfrak{t}_H \times \mathfrak{t}_C \rightarrow \mathbb{R},
\end{equation} 
where $\mathfrak{t}_{H}$ and $\mathfrak{t}_C$ are the Lie algebras of the flavour symmetries.  The central charge is such that:
\begin{equation}
    \kappa_{\lambda}(m,\xi) = m \cdot \mu_{H,\mathbb{R}}(\lambda) = \xi \cdot \mu_{C,\mathbb{R}}(\lambda).
\end{equation}
where $\mu_{C,\mathbb{R}}$ is the moment map of the $G_C$ action on the Coulomb branch. This coincides with the value of the effective $G_H\times G_C$ mixed Chern-Simons coupling in the vacuum $\lambda$. As elucidated in \cite{Bullimore:2020jdq}, hemisphere partition functions with such boundary conditions can be interpreted as holomorphic blocks $\mathbb{B}_{\lambda}$. To our knowledge the present work is the first instance of a first principles definition and computation of the exact holomorphic block for a theory with $\mathcal{N}=4$ adjoint matter.

The hemisphere partition function can be computed using localisation and decomposes into classical, 1-loop and vortex contributions:
\begin{equation}
    \mathcal{Z}_{S^1 \times D}^{\lambda} = \mathcal{Z}_{\text{Classical}}^{\lambda}\mathcal{Z}_{\text{1-loop}}^{\lambda}\mathcal{Z}_{\text{Vortex}}^{\lambda} \equiv \mathbb{B}_{\lambda}(q,t), 
\end{equation}
where $\mathcal{Z}_{\text{Classical}}^{\lambda} = e^{\phi_{\lambda}}$ and $\mathcal{Z}_{\text{1-loop}}^{\lambda}\mathcal{Z}_{\text{Vortex}}^{\lambda}=\mathcal{I}(\mathcal{B}_{\lambda}) $.

\paragraph{Boundary condition}
We now specify the $\mathcal{N}=(2,2)$ boundary condition $\mathcal{B}_{\lambda}$. We use the 3d $\mathcal{N}=2$ language of \cite{Dimofte:2017tpi}, referring therein for detailed forms of the boundary conditions in the half-space picture, and to \cite{Bullimore:2020jdq} for the corresponding boundary conditions for $S^1\times D$. We prescribe a Dirichlet and Neumann boundary condition for the $\mathcal{N}=2$ vector and adjoint chiral multiplets comprising the $\mathcal{N}=4$ vector multiplet respectively. The Dirichlet boundary condition for the vector multiplet supports boundary monopole configurations resulting in a sum over abelian flux sectors in the index. 

It remains to specify the boundary condition for the hypermultiplets. We work in the basis of $\mathbb{C}^N$ given by Nakajima's fixed point description, where the indices on the linear data label boxes in the Young diagram specified by the vacuum. In this basis $\mathcal{B}_{\lambda}$ prescribes the boundary condition for the hypermultiplet scalars given in table \ref{table:exceptionalDirichletBCADHM}.
\begin{table}[h!]
\begin{center}
\begin{tabular}{ |c|c|c|c| } 
\hline
Boxes & $\mathcal{D}$ & $\mathcal{N}$\\
\hline
$\forall\,a,b\in\lambda \text{ s.t. }  (b \in \lambda^{B} )\cap( i_{a}> i_b) \text{ or } (b \notin \lambda^{B})$ & $A_{ab}$ & $B_{ba}$ \\ 
$\forall\,a,b\in\lambda\text{ s.t. }  (a \in \lambda^{B} )\cap( i_{b} \leq i_a)$ & $B_{ab}$ & $A_{ba}$ \\ 
$\forall a$ & $I_a$ & $J_a$ \\ 
\hline
\end{tabular}
\caption{Exceptional Dirichlet boundary condition associated to vacuum $\lambda$.}\label{table:exceptionalDirichletBCADHM}
\end{center}
\vspace{-5mm}
\end{table}
The scalars in $\mathcal{D}$ have their values fixed at the boundary, whilst those in $\mathcal{N}$ are allowed to fluctuate. The boundary conditions for the rest of the fields are fixed by supersymmetry.  $\lambda^B$ denotes the set of boxes on the bottom-most edge of the Young diagram, i.e. those $s\in\lambda$ such that $i_s = \lambda_{j_s}^{\vee}$. As part of the boundary condition data, we fix non-zero values for the following scalars in $\mathcal{D}$:
\begin{itemize}
    \item $I_{(1,1)}$ where (1,1) is the top-left box in $\lambda$.
    \item $A_{ab}$ whenever $i_a = i_b+1$ and $j_a = j_b$.
    \item $B_{ab}$ whenever $a\in \lambda^B$, $i_a = i_b$ and $j_a=j_b+1$.
\end{itemize}
We set these scalars equal to their values at the fixed point which we can normalise to $1$ by the action of $\prod_{a\in \lambda} GL(V_a)$. This breaks the $U(N)$ gauge symmetry at the boundary (itself already only a flavour symmetry due to the Dirichlet boundary condition for the vector multiplet) completely, whilst preserving the flavour and R-symmetries up to a boundary gauge transformation.  On the Young diagram this corresponds to a particular tree\footnote{We expect this to be closely related to the elliptic stable envelope for $\text{Hilb}^N(\mathbb{C}^2)$ recently constructed by Smirnov \cite{Smirnov:2018drm}, and hope to explore this connection in future work. Also see loc. cit. for a definition of a tree.}  $T_{\lambda}$, an example of which is shown in figure \ref{fig:treeforboundaryVEV}. The vertical arrows correspond to those components of $A$ fixed to non-zero values at the boundary, and horizontal arrows to those of $B$. In fact, any configuration of non-zero values corresponding to the edges in a tree would have the above property, however only $T_{\lambda}$ is compatible with table \ref{table:exceptionalDirichletBCADHM}.

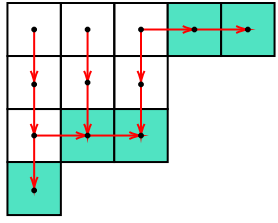
\begin{figure}
\centering
\tikzset{every picture/.style={line width=0.75pt}} 
\begin{tikzpicture}[x=0.4pt,y=0.4pt,yscale=-1,xscale=1]

\draw   (146,64) -- (196,64) -- (196,114) -- (146,114) -- cycle ;
\draw   (146,114) -- (196,114) -- (196,164) -- (146,164) -- cycle ;
\draw   (196,64) -- (246,64) -- (246,114) -- (196,114) -- cycle ;
\draw   (196,114) -- (246,114) -- (246,164) -- (196,164) -- cycle ;
\draw   (246,64) -- (296,64) -- (296,114) -- (246,114) -- cycle ;
\draw   (146,164) -- (196,164) -- (196,214) -- (146,214) -- cycle ;
\draw  [fill={rgb, 255:red, 80; green, 227; blue, 194 }  ,fill opacity=1 ] (146,214) -- (196,214) -- (196,264) -- (146,264) -- cycle ;
\draw  [fill={rgb, 255:red, 80; green, 227; blue, 194 }  ,fill opacity=1 ] (196,164) -- (246,164) -- (246,214) -- (196,214) -- cycle ;
\draw  [fill={rgb, 255:red, 80; green, 227; blue, 194 }  ,fill opacity=1 ] (246,164) -- (296,164) -- (296,214) -- (246,214) -- cycle ;
\draw   (246,114) -- (296,114) -- (296,164) -- (246,164) -- cycle ;
\draw  [fill={rgb, 255:red, 80; green, 227; blue, 194 }  ,fill opacity=1 ] (296,64) -- (346,64) -- (346,114) -- (296,114) -- cycle ;
\draw  [fill={rgb, 255:red, 80; green, 227; blue, 194 }  ,fill opacity=1 ] (346,64) -- (396,64) -- (396,114) -- (346,114) -- cycle ;
\draw [color={rgb, 255:red, 255; green, 0; blue, 0 }  ,draw opacity=1 ]   (171,89) -- (171,137) ;
\draw [shift={(171,139)}, rotate = 270] [color={rgb, 255:red, 255; green, 0; blue, 0 }  ,draw opacity=1 ][line width=0.75]    (10.93,-3.29) .. controls (6.95,-1.4) and (3.31,-0.3) .. (0,0) .. controls (3.31,0.3) and (6.95,1.4) .. (10.93,3.29)   ;
\draw [color={rgb, 255:red, 255; green, 0; blue, 0 }  ,draw opacity=1 ]   (171,139) -- (171,187) ;
\draw [shift={(171,189)}, rotate = 270] [color={rgb, 255:red, 255; green, 0; blue, 0 }  ,draw opacity=1 ][line width=0.75]    (10.93,-3.29) .. controls (6.95,-1.4) and (3.31,-0.3) .. (0,0) .. controls (3.31,0.3) and (6.95,1.4) .. (10.93,3.29)   ;
\draw [color={rgb, 255:red, 255; green, 0; blue, 0 }  ,draw opacity=1 ]   (171,189) -- (171,237) ;
\draw [shift={(171,239)}, rotate = 270] [color={rgb, 255:red, 255; green, 0; blue, 0 }  ,draw opacity=1 ][line width=0.75]    (10.93,-3.29) .. controls (6.95,-1.4) and (3.31,-0.3) .. (0,0) .. controls (3.31,0.3) and (6.95,1.4) .. (10.93,3.29)   ;
\draw [color={rgb, 255:red, 255; green, 0; blue, 0 }  ,draw opacity=1 ]   (221,89) -- (221,137) ;
\draw [shift={(221,139)}, rotate = 270] [color={rgb, 255:red, 255; green, 0; blue, 0 }  ,draw opacity=1 ][line width=0.75]    (10.93,-3.29) .. controls (6.95,-1.4) and (3.31,-0.3) .. (0,0) .. controls (3.31,0.3) and (6.95,1.4) .. (10.93,3.29)   ;
\draw [color={rgb, 255:red, 255; green, 0; blue, 0 }  ,draw opacity=1 ]   (221,139) -- (221,187) ;
\draw [shift={(221,189)}, rotate = 270] [color={rgb, 255:red, 255; green, 0; blue, 0 }  ,draw opacity=1 ][line width=0.75]    (10.93,-3.29) .. controls (6.95,-1.4) and (3.31,-0.3) .. (0,0) .. controls (3.31,0.3) and (6.95,1.4) .. (10.93,3.29)   ;
\draw [color={rgb, 255:red, 255; green, 0; blue, 0 }  ,draw opacity=1 ]   (171,189) -- (219,189) ;
\draw [shift={(221,189)}, rotate = 180] [color={rgb, 255:red, 255; green, 0; blue, 0 }  ,draw opacity=1 ][line width=0.75]    (10.93,-3.29) .. controls (6.95,-1.4) and (3.31,-0.3) .. (0,0) .. controls (3.31,0.3) and (6.95,1.4) .. (10.93,3.29)   ;
\draw [color={rgb, 255:red, 255; green, 0; blue, 0 }  ,draw opacity=1 ]   (221,189) -- (269,189) ;
\draw [shift={(271,189)}, rotate = 180] [color={rgb, 255:red, 255; green, 0; blue, 0 }  ,draw opacity=1 ][line width=0.75]    (10.93,-3.29) .. controls (6.95,-1.4) and (3.31,-0.3) .. (0,0) .. controls (3.31,0.3) and (6.95,1.4) .. (10.93,3.29)   ;
\draw [color={rgb, 255:red, 255; green, 0; blue, 0 }  ,draw opacity=1 ]   (271,89) -- (319,89) ;
\draw [shift={(321,89)}, rotate = 180] [color={rgb, 255:red, 255; green, 0; blue, 0 }  ,draw opacity=1 ][line width=0.75]    (10.93,-3.29) .. controls (6.95,-1.4) and (3.31,-0.3) .. (0,0) .. controls (3.31,0.3) and (6.95,1.4) .. (10.93,3.29)   ;
\draw [color={rgb, 255:red, 255; green, 0; blue, 0 }  ,draw opacity=1 ]   (321,89) -- (369,89) ;
\draw [shift={(371,89)}, rotate = 180] [color={rgb, 255:red, 255; green, 0; blue, 0 }  ,draw opacity=1 ][line width=0.75]    (10.93,-3.29) .. controls (6.95,-1.4) and (3.31,-0.3) .. (0,0) .. controls (3.31,0.3) and (6.95,1.4) .. (10.93,3.29)   ;
\draw [color={rgb, 255:red, 255; green, 0; blue, 0 }  ,draw opacity=1 ]   (271,89) -- (271,137) ;
\draw [shift={(271,139)}, rotate = 270] [color={rgb, 255:red, 255; green, 0; blue, 0 }  ,draw opacity=1 ][line width=0.75]    (10.93,-3.29) .. controls (6.95,-1.4) and (3.31,-0.3) .. (0,0) .. controls (3.31,0.3) and (6.95,1.4) .. (10.93,3.29)   ;
\draw [color={rgb, 255:red, 255; green, 0; blue, 0 }  ,draw opacity=1 ]   (271,139) -- (271,187) ;
\draw [shift={(271,189)}, rotate = 270] [color={rgb, 255:red, 255; green, 0; blue, 0 }  ,draw opacity=1 ][line width=0.75]    (10.93,-3.29) .. controls (6.95,-1.4) and (3.31,-0.3) .. (0,0) .. controls (3.31,0.3) and (6.95,1.4) .. (10.93,3.29)   ;
\draw  [fill={rgb, 255:red, 0; green, 0; blue, 0 }  ,fill opacity=1 ] (169.25,89) .. controls (169.25,88.03) and (170.03,87.25) .. (171,87.25) .. controls (171.97,87.25) and (172.75,88.03) .. (172.75,89) .. controls (172.75,89.97) and (171.97,90.75) .. (171,90.75) .. controls (170.03,90.75) and (169.25,89.97) .. (169.25,89) -- cycle ;
\draw  [fill={rgb, 255:red, 0; green, 0; blue, 0 }  ,fill opacity=1 ] (169.25,140.75) .. controls (169.25,139.78) and (170.03,139) .. (171,139) .. controls (171.97,139) and (172.75,139.78) .. (172.75,140.75) .. controls (172.75,141.72) and (171.97,142.5) .. (171,142.5) .. controls (170.03,142.5) and (169.25,141.72) .. (169.25,140.75) -- cycle ;
\draw  [fill={rgb, 255:red, 0; green, 0; blue, 0 }  ,fill opacity=1 ] (169.25,189) .. controls (169.25,188.03) and (170.03,187.25) .. (171,187.25) .. controls (171.97,187.25) and (172.75,188.03) .. (172.75,189) .. controls (172.75,189.97) and (171.97,190.75) .. (171,190.75) .. controls (170.03,190.75) and (169.25,189.97) .. (169.25,189) -- cycle ;
\draw  [fill={rgb, 255:red, 0; green, 0; blue, 0 }  ,fill opacity=1 ] (169.25,240.75) .. controls (169.25,239.78) and (170.03,239) .. (171,239) .. controls (171.97,239) and (172.75,239.78) .. (172.75,240.75) .. controls (172.75,241.72) and (171.97,242.5) .. (171,242.5) .. controls (170.03,242.5) and (169.25,241.72) .. (169.25,240.75) -- cycle ;
\draw  [fill={rgb, 255:red, 0; green, 0; blue, 0 }  ,fill opacity=1 ] (219.25,89) .. controls (219.25,88.03) and (220.03,87.25) .. (221,87.25) .. controls (221.97,87.25) and (222.75,88.03) .. (222.75,89) .. controls (222.75,89.97) and (221.97,90.75) .. (221,90.75) .. controls (220.03,90.75) and (219.25,89.97) .. (219.25,89) -- cycle ;
\draw  [fill={rgb, 255:red, 0; green, 0; blue, 0 }  ,fill opacity=1 ] (219.25,189) .. controls (219.25,188.03) and (220.03,187.25) .. (221,187.25) .. controls (221.97,187.25) and (222.75,188.03) .. (222.75,189) .. controls (222.75,189.97) and (221.97,190.75) .. (221,190.75) .. controls (220.03,190.75) and (219.25,189.97) .. (219.25,189) -- cycle ;
\draw  [fill={rgb, 255:red, 0; green, 0; blue, 0 }  ,fill opacity=1 ] (219.25,139) .. controls (219.25,138.03) and (220.03,137.25) .. (221,137.25) .. controls (221.97,137.25) and (222.75,138.03) .. (222.75,139) .. controls (222.75,139.97) and (221.97,140.75) .. (221,140.75) .. controls (220.03,140.75) and (219.25,139.97) .. (219.25,139) -- cycle ;
\draw  [fill={rgb, 255:red, 0; green, 0; blue, 0 }  ,fill opacity=1 ] (269.25,189) .. controls (269.25,188.03) and (270.03,187.25) .. (271,187.25) .. controls (271.97,187.25) and (272.75,188.03) .. (272.75,189) .. controls (272.75,189.97) and (271.97,190.75) .. (271,190.75) .. controls (270.03,190.75) and (269.25,189.97) .. (269.25,189) -- cycle ;
\draw  [fill={rgb, 255:red, 0; green, 0; blue, 0 }  ,fill opacity=1 ] (269.25,140.75) .. controls (269.25,139.78) and (270.03,139) .. (271,139) .. controls (271.97,139) and (272.75,139.78) .. (272.75,140.75) .. controls (272.75,141.72) and (271.97,142.5) .. (271,142.5) .. controls (270.03,142.5) and (269.25,141.72) .. (269.25,140.75) -- cycle ;
\draw  [fill={rgb, 255:red, 0; green, 0; blue, 0 }  ,fill opacity=1 ] (269.25,89) .. controls (269.25,88.03) and (270.03,87.25) .. (271,87.25) .. controls (271.97,87.25) and (272.75,88.03) .. (272.75,89) .. controls (272.75,89.97) and (271.97,90.75) .. (271,90.75) .. controls (270.03,90.75) and (269.25,89.97) .. (269.25,89) -- cycle ;
\draw  [fill={rgb, 255:red, 0; green, 0; blue, 0 }  ,fill opacity=1 ] (319.25,89) .. controls (319.25,88.03) and (320.03,87.25) .. (321,87.25) .. controls (321.97,87.25) and (322.75,88.03) .. (322.75,89) .. controls (322.75,89.97) and (321.97,90.75) .. (321,90.75) .. controls (320.03,90.75) and (319.25,89.97) .. (319.25,89) -- cycle ;
\draw  [fill={rgb, 255:red, 0; green, 0; blue, 0 }  ,fill opacity=1 ] (369.25,89) .. controls (369.25,88.03) and (370.03,87.25) .. (371,87.25) .. controls (371.97,87.25) and (372.75,88.03) .. (372.75,89) .. controls (372.75,89.97) and (371.97,90.75) .. (371,90.75) .. controls (370.03,90.75) and (369.25,89.97) .. (369.25,89) -- cycle ;
\end{tikzpicture}
\caption{A tree permitted by the boundary condition corresponding to non-zero expectation values at the boundary. The boxes in $\lambda^B$ are highlighted.}\label{fig:treeforboundaryVEV}
\end{figure}

\paragraph{Higgs branch image} We now motivate this choice of $\mathcal{B}_{\lambda}$. Abusing notation, we also denote by $\mathcal{B}_{\lambda}$ the holomorphic Lagrangian of the affine space parametrised by the scalars $(A,B,I,J)$ with $\mathcal{N}$ boundary conditions. Explicitly:
\begin{equation}
    \mathcal{B}_{\lambda}=  \left\{ (A, B,I,J) \,\left|\,\,
    \begin{aligned}
     &A_{ab} = c_{1,ab} \text{ if } (b \in \lambda^{B} )\cap( i_{a}> i_b) \text{ or } (b \notin \lambda^{B}),\\
     &B_{ab}=  c_{2,ab}\text{ if }  (a \in \lambda^{B} )\cap( i_{b} \leq i_a),\\
     &I_a = \delta_{1,a} 
    \end{aligned}\right.\right\}
\end{equation}
where:
\begin{equation}\label{eq:matrixofconstants}
    c_{i,ab} = \begin{cases} 1 \quad \text{ if $a,b \in \lambda$ linked by an edge in } T_{\lambda}, \\ 0 \quad \text{ otherwise.}
    \end{cases}
\end{equation}
The image of this boundary condition in the bulk Higgs branch is given by:
\begin{equation}
    \mathcal{B}^{(\text{Bulk})}_{\lambda} = \left[\mathcal{B}_{\lambda}\cap \mu_{\mathbb{C}}^{-1}(0) \right]/GL(N,\mathbb{C})\,.
\end{equation}
This contains the fixed point $\lambda$ since the only scalars which are non-zero on the vacuum which are \textit{not} fixed to their values at the vacuum by the boundary condition are $B_{ab}$ with $a,b \in \lambda$ adjacent, with $a$ to the right of $b$, and $a\notin \lambda^B$. These scalars are in $\mathcal{N}$ so are free to take their vacuum values.

The proposal in \cite{Bullimore:2016nji} for such boundary conditions to correspond to thimbles for vacua, and the bulk-boundary system to yield Verma modules of the chiral ring require the following conditions to be met. Firstly, since the bulk-boundary system has vacua $\mathcal{B}_{\lambda}\cap \mu_{\mathbb{C}}^{-1}(0)$, there must be no non-trivial $GL(N,\mathbb{C})$ orbits in $\mathcal{B}^{(\text{Bulk})}_{\lambda}$, else there would be additional non-compact 2d degrees of freedom on the boundary. That is: $\mathcal{B}^{(\text{Bulk})}_{\lambda} = \mathcal{B}_{\lambda}\cap \mu_{\mathbb{C}}^{-1}(0)$. Secondly, we should have:
\begin{equation}
    \mathcal{B}_{\lambda}\cap \mu_{\mathbb{C}}^{-1}(0) = \mathcal{L}_{\lambda}.
\end{equation}
In appendix \ref{appendix:geometricboundarycondition}, we show the local version of these statements i.e. that there are no non-trivial gauge orbits in a neighbourhood of $\lambda$ in $\mathcal{B}^{(\text{Bulk})}_{\lambda}$ and that the tangent space to $\lambda$ in $\mathcal{B}^{(\text{Bulk})}_{\lambda}$ consists of the half-dimensional subspace of $T_{\lambda}\mathcal{M}_H$ with positive $z$-weights. More precisely, Nakajima \cite{nakajima1999lectures} computes the tangent space character at a fixed point $\lambda$:
\begin{equation}
\begin{split}
    T_{\lambda}\mathcal{M}_H = \sum_{s\in\lambda} z^{a_{\lambda}(s)+l_{\lambda}(s)+1} t^{\frac{1}{2}(-a_{\lambda}(s)+l_{\lambda}(s)+1)} + z^{-a_{\lambda}(s)-l_{\lambda}(s)-1} t^{\frac{1}{2}(a_{\lambda}(s)-l_{\lambda}(s)+1)}.
\end{split}
\end{equation}
This is a holomorphic Lagrangian splitting into positive and negative weight spaces for $z$ since the holomorphic symplectic form $\omega_{\mathbb
C}$ is invariant under $G_H$. For our choice of boundary condition we find:
\begin{equation}
    T_{\lambda}\mathcal{B}_{\lambda}^{\text{Bulk}} = \sum_{s\in\lambda} z^{a_{\lambda}(s)+l_{\lambda}(s)+1} t^{\frac{1}{2}(-a_{\lambda}(s)+l_{\lambda}(s)+1)} =  T_{\lambda}\mathcal{L}_{\lambda}.
\end{equation}
which is precisely the character of the positive weight space.

\paragraph{The partition function}
We now compute the hemisphere partition function for this boundary condition following \cite{Dimofte:2017tpi,Bullimore:2020jdq}. The first step is to write the partition function for $c=0$, and then deform with $c$ as in (\ref{eq:matrixofconstants}) by setting to $1$ the combination of fugacities dual to the charge of the chirals which acquire non-zero vevs. Using the 1-loop determinants and classical contributions from the localisation computation in \cite{Bullimore:2020jdq} together with the boundary conditions in table \ref{table:exceptionalDirichletBCADHM} we find, before deformation:
\begin{equation}\label{eq:beforedeformation}
\begin{split}
    \tilde{\mathcal{Z}}^{ \lambda}_{S^1\times D} &= 
    \sum_{k\in \mathbb{Z}^N} e^{\frac{\log\zeta}{\log q}\left(\sum_{a\in\lambda} \log\left(s_aq^{k_a}\right)\right)} \prod_{a,b\,\in \lambda}  \frac{\left(u^2\frac{s_a}{s_b}q^{k_a-k_b};q\right)'_{\infty}}{\left(q\frac{s_a}{s_b}q^{k_a-k_b};q\right)'_{\infty}}
    \prod_{a\in \lambda }  \frac{\left(q u^{-1} s_a q^{k_a};q\right)'_{\infty}}{\left(us_a q^{k_a};q\right)'_{\infty}}\\
    &\prod_{\substack{a,b\in \lambda \text{ s.t. } \\  (b \in \lambda^{B} )\cap( i_{a}> i_b) \\ \text{ or } (b \notin \lambda^{B})}}  \frac{\left(qz^{-1}u^{-1}\frac{s_a}{s_b}q^{k_a-k_b};q\right)'_{\infty}}{\left(z^{-1}u\frac{s_a}{s_b}q^{k_a-k_b};q\right)'_{\infty}}
    \prod_{\substack{a,b\in \lambda \text{ s.t. } \\ (a \in \lambda^{B} )\cap( i_{b} \leq i_a)}}  \frac{\left(qzu^{-1}\frac{s_a}{s_b}q^{k_a-k_b};q\right)'_{\infty}}{\left(zu\frac{s_a}{s_b}q^{k_a-k_b};q\right)'_{\infty}}\\
\end{split}
\end{equation}
where $z,t,\zeta$ are fugacities defined in section \ref{sec:background}. We define $u=t^{\frac{1}{2}}q^{\frac{1}{4}}$, $v=t^{-\frac{1}{2}}q^{\frac{1}{4}}$ as the $U(1)_{H,C}$ R-symmetry fugacities respectively and $s_a^{-1}$ as the fugacity for the abelian factor rotating the eigenspace $V_a \subset V$. The sum is over $k\in \text{cochar}(U(N)) \simeq \mathbb{Z}^N$ i.e. the abelian flux sectors for boundary monopole operators. The function $(a;q)'$ is related to the usual q-Pochhammer function (\ref{eq:qpochhammer}) by:
\begin{equation}
    (a;q)'_{\infty} = e^{-\mathcal{E}[-\log(a)]} (a;q)_{\infty}.
\end{equation}
The factors $\mathcal{E}[x]$ arise as a zeta function regularisation of zero point energies, explicitly:
\begin{equation}
    \mathcal{E}[x]=\frac{\beta}{12}-\frac{x}{4}+\frac{x^2}{8\beta}
\end{equation}
with $q=e^{-2\beta}$. These factors are a crucial ingredient in \cite{Bullimore:2020jdq} to obtain the correct weight of the vacuum state.

Now we deform to the exceptional Dirichlet boundary condition specified above. The chirals with non-zero values at the boundary are the components of $(A,B,I)$ given by the particular tree $T_{\lambda}$. This prescription completely breaks the boundary gauge group (which was already broken to a flavour symmetry by the boundary condition). Specifically, this sets:
\begin{equation}
\begin{split}
    s_{(1,1)} = u\,,\quad s_a = u (z u)^{i_a-1} (z^{-1} u )^{j_a-1} \equiv u^{-1} v_{a}
\end{split}
\end{equation}
and gives the hemisphere partition function $\mathcal{Z}_{S^1\times D}^{\lambda}$. We note that these gauge fugacity values are precisely the Grothendieck roots appearing in \cite{Dinkins:2019pwj} in their computation of the vortex partition function. There are intricate combinatoric cancellations in the perturbative part of the partition function which we describe in appendix \ref{appendix:holoblockdetailedcomputation}. For now we state the final result
\begin{equation}\label{eq:ADHMblock}
    \mathcal{Z}^{\lambda}_{S^1\times D} =  \mathcal{Z}^{\lambda}_{\text{Classical}}  \mathcal{Z}^{\lambda}_{\text{1-loop}} \mathcal{Z}^{\lambda}_{\text{Vortex}}
\end{equation}
where the classical contribution is given by:
\begin{equation}
    \mathcal{Z}^{\lambda}_{\text{Classical}} = e^{-\left[\sum_{s\in\lambda}c(s) \right]\frac{\log\zeta\log z}{\log q}}
    e^{\left[\sum_{s\in\lambda}h(s) \right]\frac{\log v \log z}{\log q}}
    e^{\left[\sum_{s\in\lambda}h(s) \right]\frac{\log u \log \zeta}{\log q}}
    e^{-\left[\sum_{s\in\lambda}c(s) \right]\frac{\log u \log v}{\log q}},\nonumber
\end{equation}
and we again we use the shorthand $u=t^{\frac{1}{2}}q^{\frac{1}{4}}$ and $v=t^{-\frac{1}{2}}q^{\frac{1}{4}}$. We note that, in line with the proposals in \cite{Bullimore:2020jdq,Bobev:2015kza}, the classical piece is precisely the equivariant integral of the boundary 't Hooft anomaly, and includes the mixed central charge term $-\log\zeta\log z \sum_{s\in\lambda}c_{\lambda}(s)$. For the 1-loop piece we find:
\begin{equation}\label{eq:oneloop}
    \mathcal{Z}_{\text{1-loop}}^{\lambda} = \prod_{s\in\lambda}\frac{(q z^{a_{\lambda}(s)+l_{\lambda}(s)+1}u^{-a_{\lambda}(s)+l_{\lambda}(s)-1};q)_{\infty}}{(z^{a_{\lambda}(s)+l_{\lambda}(s)+1}u^{-a_{\lambda}(s)+l_{\lambda}(s)+1};q)_{\infty}},
\end{equation}
and the vortex contributions are given by:
\begin{equation}\label{eq:vortexsum}
    \mathcal{Z}^{\lambda}_{\text{Vortex}} = \sum_{\pi \in \text{RPP}(\lambda)} \left( \zeta t^{\frac{1}{2}}q^{-\frac{1}{4}}\right)^{|\pi|} \prod_{s\in \lambda} \frac{\left(u^2 v_s^{-1};q\right)_{-\pi_s}}{\left(q v_s^{-1};q\right)_{-\pi_s}} \prod_{\substack{s,t\in \lambda \\ s \neq t}} \frac{\left(qu^{-2}\frac{v_t}{v_s};q\right)_{\pi_t-\pi_s}}{\left(\frac{v_t}{v_s};q\right)_{\pi_t-\pi_s}}\frac{\left(zu\frac{v_t}{v_s};q\right)_{\pi_t-\pi_s}}{\left(qzu^{-1}\frac{v_t}{v_s};q\right)_{\pi_t-\pi_s}}.
\end{equation}
where in the above we set $v_s=z^{i_s-j_s}u^{i_s+j_s}$. \textit{A priori}, the vortex sum should be taken over all integers $\{k_s\}$ as in (\ref{eq:beforedeformation}) however, as shown in appendix \ref{appendix:holoblockdetailedcomputation}, the summand vanishes unless $\{k_s\}$ form a reverse plane partition (RPP) -- we thus write the vortex sum in terms of $\pi \in \text{RPP}(\lambda)$.

Our hemisphere partition function provides a first principles UV derivation of the vortex contributions to the holomorphic block as recently derived via factorisation in \cite{Choi:2019zpz}. The vortex partition function also coincides with the quasimap index of \cite{Dinkins:2019pwj} and we discuss this geometric connection in detail in the following section. The present work differs in approach from \cite{Dinkins:2019pwj} in that it is important for us to include perturbative contributions, as these terms are crucial to understanding the representation theory of quantised Coulomb and Higgs branch algebras and the \emph{exact} factorisation of partition functions on closed three manifolds. The `boundary condition' implicitly used in \cite{Dinkins:2019pwj} assigns all of $A$ Dirichlet and all of $B$ Neumann, and the vortex partition function is normalised by dividing by terms with simple poles. In contrast, our perturbative piece is finite. The vortex contribution is insensitive to the choice of polarisation provided the same Grothendieck roots are used, although such a computation does not manifestly have a physical interpretation in the gauge theory. Instead in our work, the Grothendieck roots are recovered via the procedure for computing the exceptional Dirichlet boundary condition. We leave to future work \cite{hunterpending} an investigation of the geometric interpretation of the perturbative contributions.

\paragraph{Superconformal index}
We note briefly that the block derived above fuses \textit{exactly} to the superconformal index of the ADHM theory computed in \cite{Choi:2019zpz}.  There the vortex partition function was obtained by factorising the $S^1\times S^2$ index, but only the block corresponding to the column partition was derived from first principles via a localisation calculation on $S^1 \times D$. Our work completes the derivation of the complete set of blocks for this theory. The 1-loop perturbative contribution to the superconformal index also undergoes drastic cancellations analogous to those in the hemisphere partition function (the details are given in \ref{appendix:holoblockdetailedcomputation}), and we obtain:
\begin{equation}
    \mathcal{Z}^{\text{S.C.}}_{S^1\times S^2} = \sum_{\lambda} \left\lVert\mathcal{Z}^{\lambda}_{S^1\times D}\right\rVert_{\text{S.C.}}^2
\end{equation}
where the sum is taken over partitions of weight $N$ and the gluing is given by:
\begin{equation}\label{eq:SCindexgluing}
    q\rightarrow q^{-1},\quad t\rightarrow t^{-1},  \quad z\rightarrow z^{-1}, \quad \zeta \rightarrow \zeta^{-1}  .  
\end{equation}
Fluxes for flavour symmetries through $S^2$ can be included easily, via shifting fugacities appearing in the blocks \cite{Beem:2012mb}, although we omit this for the purposes of brevity. The factorisation remains exact in the presence of background flux.

\paragraph{$A$- and $B$-shifts}
We define the $A$- and the $B$-shifted hemisphere partition functions with an R-symmetry parameter redefinition $t\to q^{\pm\frac{1}{2}}t$. 
\begin{equation}\label{eq:ABtwists}
    \mathcal{Z}_{S^1\times D}^{\lambda}(t) = \mathcal{Z}_{S^1\times D}^{A,\lambda}(tq^{-\frac{1}{2}}) = \mathcal{Z}_{S^1\times D}^{B,\lambda}(tq^{\frac{1}{2}})\,.
\end{equation}
We make use of these shifted partition functions later when studying Verma modules and factorising the $A$- and $B$- twisted indices. Shifted indices $\mathcal{I}^A(\mathcal{B}_{\lambda})$ and $\mathcal{I}^B(\mathcal{B}_{\lambda})$ are defined similarly, as are the separate perturbative and non-perturbative contributions.

The mirror self-duality exchanges $R_H \leftrightarrow R_C$ and thus exchanges the $A$ and $B$ shifts. Whilst partition functions on closed three-manifolds are preserved under the mirror map, it is expected that 3d mirror symmetry acts non-trivially on the boundary conditions $\mathcal{B}_{\lambda}$ so that the holomorphic blocks transform linearly amongst themselves under mirror symmetry.
In this work we are concerned with two specialised limits where the mirror map is particularly tractable and we explicitly verify that the 1-loop and vortex contributions in these limits are exchanged by mirror symmetry.

\subsection{Interpretation as a quasimap index}\label{subsec:quasimaps}
The vortex contribution to the hemisphere partition function in a particular vacuum can be interpreted as the equivariant K-theoretic vertex function of the Jordan quiver variety \cite{Dinkins:2019pwj,Okounkov:2015spn} evaluated at a fixed point $\lambda \in \text{Hilb}^N(\mathbb{C}^2)$.

\paragraph{ADHM vortices}
The moduli space of vortices in the 3d ADHM theory can be identified with the moduli space $\textsf{QM}^{\boldsymbol{d}}_{\lambda}$ of quasimaps $f: \mathbb{P}^1 \to \mathcal{M}_H=\text{Hilb}^N(\mathbb{C}^2)$ with degree $\boldsymbol{d}\in \mathbb{Z}^N$ and $f(\infty) = \lambda$. The space admits a $\mathbb{C}^*$ action of  rotations of the domain $\mathbb{P}^1$ and a maximal torus $T^2$ action on $\text{Hilb}^N(\mathbb{C}^2)$. The space of quasimaps to the Hilbert scheme of points is studied from the perspective of quantum cohomology in \cite{CIOCANFONTANINE2012268} and in K-theory \cite{Dinkins:2019pwj,Smirnov:2016vaw,Okounkov:2016sya}.

The vortex contributions to the $A$-shifted partition function\footnote{We could equally well use the $B$-shift here. Generally, the $B$-shift should correspond to quasimaps to the Coulomb branch $\mathcal{M}_C$ but since the theory we consider is mirror self-dual the difference is merely a parameter redefinition.} coincide with the equivariant Euler characteristic of the symmetrised virtual structure sheaf $\hat{\mathcal{O}}^{\boldsymbol{d}}_{\text{Vir.}}$ on $\textsf{QM}^{\boldsymbol{d}}_{\lambda}$:
\begin{equation}\label{eq:eulercharacteristics}
\mathcal{Z}_{\text{Vortex}}^{A,\lambda} = \sum_{\boldsymbol{d}} \left(\zeta t^{\frac{1}{2}}\right)^{\boldsymbol{d}}\chi\left(\hat{\mathcal{O}}^{\boldsymbol{d}}_{\text{Vir.}}\right)\,.
\end{equation}
The Euler characteristic can be computed in localised K-theory as the vertex function evaluated at $\lambda \in \text{Hilb}^N(\mathbb{C}^2)$ in the fixed point basis of $K_{T^2}(\text{Hilb}^N(\mathbb{C}^2))$. This yields an expression:\footnote{$\hat{a}(x) = x^{\frac{1}{2}}-x^{-\frac{1}{2}}$ and is extended multiplicatively on weights $\hat{a}(x+y)=\hat{a}(x)\hat{a}(y)$.}
\begin{equation}
    \mathcal{Z}_{\text{Vortex}}^{A,\lambda} = \sum_{\boldsymbol{d}}\left( \zeta t^{\frac{1}{2}} \right)^{\boldsymbol{d}}\sum_{\text{f.p.} \in \textsf{QM}^{\mathbb{C}^*\times T^2}}  \hat{a}\left(T_{\text{f.p.}}^{\text{Vir.}} \textsf{QM}^{\boldsymbol{d}}_{\lambda}\right)\,.
\end{equation}
Consequently, the reverse plane partitions $\pi$ in the vortex sum (\ref{eq:vortexsum}) coincide with fixed points on the quasimap moduli space with fixed degree $\boldsymbol{d}=|\pi|$ and the summand is the character of the virtual tangent space. The gauge theory and geometry parameters are related by:
\begin{center}
\begin{tabular}{ c c  }
$\mathcal{N}=2^*$ mass, $t$ & Fugacity for $\mathbb{C}^* \subset T^2$ \\ 
 Flavour symmetry, $z$ & Fugacity for $\mathbb{C}^* \subset T^2$ \\
 Angular momentum refinement, $q$ & Rotations of $\mathbb{P}^1$ \\
 Vortex number & Quasimap degree $\boldsymbol{d}$
\end{tabular}
\end{center}
\paragraph{K-theoretic PT vertex}
Quasimaps to the Hilbert scheme of $N$ points in $\mathbb{C}^2$ correspond to points in the K-theoretic PT moduli space of $\mathbb{C}^3$ \cite{Okounkov:2018huu,Okounkov:2016sya}. The equivariant Euler characteristics (\ref{eq:eulercharacteristics}) are then identified with the bare 1-leg vertex and following the notation of e.g. \cite{Kononov:2019fni} we have:
\begin{equation}
    \mathcal{Z}^{A,\lambda}_{\text{Vortex}} = V_{\text{PT}}^{\lambda,\emptyset,\emptyset}
\end{equation}
In following subsections we verify this correspondence in two cases. We first study the Poincar\'e polynomial limit of the vertex in section \ref{subsec:pplimit} and show that we recover the refined topological vertex \cite{Iqbal:2007ii}. Later, in section \ref{subsec:hilbertseries}, we show that the holomorphic block gluing for the twisted index can be identified with the vertex gluing in PT theory and indeed, taking the generating function over the gauge rank, we recover the partition function of the resolved conifold i.e. the generating function over $N$ of the Hilbert series of $\text{Hilb}^N(\mathbb{C}^2)$. This correspondence has also been studied from the 2d/homological perspective in the works \cite{Bonelli:2013mma,Bonelli:2013rja}.

\subsection{Verma character limit}\label{subsec:vermacharacterlimit}
We now consider two specialised limits of the hemisphere partition function. The first limit we consider is denoted the $A$-limit and corresponds to setting $t \to 1$ in the $A$-twisted hemisphere partition function. In the $A$-limit the generators in the index commute with both $Q_{-}^{1 \dot{2}}$ and $Q_{+}^{1\dot{1}}$, the index is independent of $q$ and the full $\mathcal{N}=4$ supersymmetry is restored. We have simply:
\begin{equation}
    \lim_{t\to1} I^A(\mathcal{B}_{\lambda}) = \text{Tr}_{\mathcal{H}_A} \zeta^{F_C}\,.
\end{equation}
The index receives contributions from boundary Coulomb branch bosonic operators in $\mathcal{H}_A$. In the $\Omega$ background, these operators admit an action of the bulk quantised Coulomb branch algebra \cite{Bullimore:2016nji} and the boundary condition $\mathcal{B}_{\lambda}$ leads to a Verma character for this action. The details of this limit were recently discussed in general by the authors in \cite{Bullimore:2020jdq} and in this work we compute these limits for the 3d ADHM theory in particular.

In the $A$-limit only the vortex contributions survive and we have:
\begin{equation}
    \lim_{t\to1}I^A(\mathcal{B}_{\lambda}) = \lim_{t\to 1}\mathcal{Z}_{\text{Vortex}}^{A,\lambda} = \sum_{\pi \in \text{RPP}(\lambda)} \zeta^{|\pi|}\,.
\end{equation}
Geometrically, this is an un-graded count of the fixed points on $\textsf{QM}^{\boldsymbol{d}}_{\lambda}$ and is thus consistent with the expectation that the equivariant homology $\sum_{\boldsymbol{d}} H^\bullet \left( \textsf{QM}^{\boldsymbol{d}}_{\lambda}\right)$, in the fixed point basis, forms a Verma module of the Coulomb branch chiral ring \cite{Bullimore:2016hdc}.

The mirror limit, which we denote the $B$-limit, corresponds to sending $t\to1$ in the $B$-shifted index, a similar supersymmetry enhancement occurs but now only the 1-loop excitations contribute to the index:
\begin{equation}
    \lim_{t\to1}I^B(\mathcal{B}_{\lambda}) = \lim_{t\to 1}\mathcal{Z}_{\text{1-loop}}^{B,\lambda} = \prod_{s\in\lambda} \frac{1}{1-z^{a_{\lambda}(s)+l_{\lambda}(s)+1}}\,.
\end{equation}
3d mirror symmetry exchanges the two limits (provided we also exchange $\zeta$ and $z$ as in the mirror map (\ref{eq:mirrormap})) and leads to the familiar generating function identity for reverse plane partitions with the 1-loop piece degenerating to the hook formula:
\begin{equation}
    \lim_{t\to1}\mathcal{Z}_{\text{1-loop}}^{B,\lambda}=\prod_{s\in\lambda} \frac{1}{1-z^{h_{\lambda}(s)}} = \sum_{\pi \in \text{RPP}(\lambda)} z^{|\pi|} = \lim_{t\to1}\mathcal{Z}_{\text{Vortex}}^{A,\lambda}\,.
\end{equation}

\paragraph{Coulomb branch algebra}
The quantum Coulomb branch algebra $\hat{\mathbb{C}}[\mathcal{M}_C]$ of the 3d ADHM theory with one flavour, hereafter denoted $\mathcal{A}^C_N$, depends on the mass parameter $m$ explicitly in the generators and the $\Omega$-deformation parameter $q$ via the quantisation. The monopole operators are built from variables $w_{k,a}$, $v_{k,a}$ and constructed as follows:
\begin{equation}
\begin{split}
E_{k,t}&=\sum_{a=0}^{N_{k}}
\frac{\prod_{b} w_{k,a}-w_{k-1,b}-m}
{\prod_{b\neq a}w_{k,a}-w_{k,b}}w_{k,a}^{t}v_{k,a},\\
F_{k,t}&=\sum_{a=0}^{N_{k}}
\frac{\prod_{b} w_{k,a}-w_{k+1,b}+m}{\prod_{b\neq a}w_{k,a}-w_{k,b}}
v_{k,a}^{-1}w_{k,a}^{t+\delta_{0,k}},
\end{split}
\end{equation}
for a decomposition $\sum_{k}N_{k}=N$. The details of this algebra and the abstract construction of the Verma modules $V_\lambda$, labelled by a vacuum $\lambda$, of $\mathcal{A}^C_N$ were recently studied in \cite{Gaiotto:2019wcc}. The character of such modules can be written:
\begin{equation}
\begin{split}
\chi^{C,\lambda}(z,\zeta;q)
&={\Tr}_{V_{\lambda}} \zeta^{\frac{1}{2 \beta}\sum w_{k,a}}\,.
\end{split}
\end{equation}
The Verma characters were computed with slightly different parameter conventions in \cite{Gaiotto:2019mmf}. As discussed in \cite{Bullimore:2020jdq}, these Verma characters coincide with the $t\to1$ limit of the full hemisphere partition function in the $A$-shift convention. The classical term determines the Casimir energy and thus the highest weight of the vacuum. In the 3d ADHM case we find:
\begin{equation}\label{eq:vermacharacter}
    \lim_{t\to1}\mathcal{Z}^{A,\lambda}_{S^1\times D} = \mathcal{Z}_{\text{Cl.}}^{A,\lambda}\mathcal{Z}_{\text{Vortex}}^{A,\lambda} = e^{\frac{\xi m}{2 \beta}\sum_{s\in\lambda}c_{\lambda}(s)+\frac{\xi}{2}\sum_{s\in\lambda}h_{\lambda}(s)}\sum_{\pi \in \text{RPP}(\lambda)} \zeta^{|\pi|} = \chi^{C,\lambda}\,.
\end{equation}
Since the theory is mirror self-dual, the Verma characters of the Higgs branch algebra $\chi^{H,\lambda}$ are expected to be functionally the same form as $\chi^{C,\lambda}$ with $\xi$ and $m$ interchanged.

\paragraph{Specialised $S^3$ partition function}
We now take a brief detour to the 3-sphere partition function and explain the connection between holomorphic factorisation and the recently proposed ``IR formulae'' of \cite{Gaiotto:2019mmf}. The partition function $\mathcal{Z}_{S^3_b}$ of the theory is defined on the squashed ellipsoid:
\begin{equation}
    b^2|z_1|^2 + \frac{1}{b^2}|z_2|^2 = 1\,.
\end{equation}
This partition function can be computed by Coulomb branch localisation \cite{Pasquetti:2011fj} and factorised into holomorphic blocks:
\begin{equation}
    \mathcal{Z}_{S^3} = \sum_{\alpha} \mathbb{B}_\alpha(z,\zeta;q,t) \mathbb{B}_{\alpha}(\bar{z},\bar{\zeta};\bar{q},\bar{t})
\end{equation}
where the parameter identifications are as follows:
\begin{align}
     q&=e^{-2\pi i b Q},
     &\qquad  t&=e^{2\pi b T},
     &\qquad   z&= e^{-2\pi b m},
     &\qquad \zeta &= e^{-2\pi b \xi},   \\
     \bar{q}&=e^{-\frac{2\pi i}{b}Q},
     &\qquad \bar{t}&=e^{\frac{2\pi T}{b}},
     &\qquad \bar{z} &= e^{-\frac{2\pi m}{b}},
     &\qquad  \bar{\zeta} &= e^{-\frac{2\pi \xi}{b}}
\end{align}
with $Q = b + 1/b$. The specialised $S^3$ partition function enhances the supersymmetry from $\mathcal{N}=2^*$ to $\mathcal{N}=4$ and corresponds to setting $T\to \frac{i}{2}(b-1/b)$. In this limit the partition function becomes:
\begin{equation}
    \mathcal{Z}_{S^3_b} = \sum_{\lambda} e^{\pi m \xi \sum_{s\in\lambda}c_{\lambda}(s)} e^{\pi(m+\xi) \sum_{s\in \lambda}h_{\lambda}(s)}\sum_{\pi \in \text{RPP}(\lambda)}\zeta^{|\pi|}\sum_{\tilde{\pi} \in \text{RPP}(\lambda)} z^{|\pi|}\,.
\end{equation}
Using the Verma character expression (\ref{eq:vermacharacter}), we can alternatively write this as:
\begin{equation}
    \mathcal{Z}_{S^3_b} = \sum_{\lambda} e^{\pi m \xi \sum_{s\in\lambda}c_{\lambda}(s)} \chi^{C,\lambda}(\zeta) \chi^{H,\lambda}(z)
\end{equation}
and, up to phases, we recover the IR formula expressed in terms of Verma characters in (7.45) of \cite{Gaiotto:2019mmf}.

\subsection{Poincar\'e polynomial limit}\label{subsec:pplimit}
We now consider another limit of the index corresponding to sending $q \to 0$ in the $A$- and $B$-shifted partition functions. As a trace over local operators we can write the $A$-shifted index as:
\begin{equation}
\begin{split}
    \mathcal{I}^A(\mathcal{B}_{\lambda}) &= \text{Tr}(-1)^F q^{J+\frac{R_V}{2}}t^{\frac{R_H-R_C}{2}}z^{F_H}\zeta^{F_C} \\
    &= \text{Tr}(-1)^F q^{D-\frac{R_C}{2}}t^{\frac{R_H-R_C}{2}}z^{F_H}\zeta^{F_C}\,. \\
\end{split}    
\end{equation}
Unitarity bounds allow us to re-write the exponent of $q$ in the above and also imply an expansion in positive powers of $q$ of the index. Sending $q \to 0$ selects the subspace where $D=\frac{1}{2}R_C$ i.e. bosonic Higgs branch operators uncharged under $F_C$. 
\begin{equation}
    \lim_{q\to0}\mathcal{I}^A(\mathcal{B}_{\lambda}) = \text{Tr}_{\mathcal{H}_A} t^{\frac{R_H}{2}-D}z^{F_H}\,.
\end{equation}
These operators are annihilated by the same additional supercharges as those in section \ref{subsec:vermacharacterlimit} and so in this limit the index is a refined Verma character of $\mathcal{A}^C_N$ that keeps track of R-charge graded by $t$. A similar argument holds for the $B$-shifted vortex partition function where instead we count Coulomb branch operators graded with $t \to t^{-1}$.

\paragraph{Geometric interpretation}
Quasimap moduli spaces $\textsf{QM}^{\boldsymbol{d}}$ can occasionally be realised as smooth varieties. For example the quasimap moduli space for the $T_{\rho}[SU(N)]$ theory coincides with Laumon space $\mathfrak{Q}^{\boldsymbol{d}}$ \cite{braverman2014macdonald,Crew:2020jyf}. In this case the equivariant Euler characteristic is the $\chi_t$ genus of Laumon space. In the limit $q\to 0$, the authors argue in \cite{Crew:2020jyf}, that $\chi_t(\mathfrak{Q}^{\boldsymbol{d}})$ becomes the Poincar\'e polynomial of the compact core $\pi^{-1}(0) \subset \mathfrak{Q}^{\boldsymbol{d}}$. In this section we study the same ``Poincar\'e polynomial'' limit of the ADHM theory and, by analogy with $T_{\rho}[SU(N)]$,\footnote{See appendix \ref{appendix:pplimit} for a more detailed review of the $T[SU(N)]$ example.} we expect that we are again computing a Poincar\'e polynomial of the compact core of a putative resolution of the space of quasimaps to $\text{Hilb}^N(\mathbb{C}^2)$:
\begin{equation}
    \lim_{q\to0} \mathcal{Z}_{\text{Vortex}} = \sum_{\boldsymbol{d}}\zeta^{\boldsymbol{d}} P_{t}(\widetilde{\textsf{QM}}^{\boldsymbol{d}}_{\lambda})\,.
\end{equation}
Indeed, in the following we find a finite polynomial in $t$ at each vortex number.\footnote{The existence of this limit is related to the large frame vanishing condition of the Higgs branch of the theory. In this case large frame vanishing holds for every $N$.}

\paragraph{Mirror limit as a generating function}
We first compute the Poincar\'e polynomial limit of the $B$-shifted index. We find
\begin{equation}
     \lim_{q\to0}\mathcal{Z}_{\text{1-loop}}^{B,\lambda } = \prod_{s\in\lambda} \frac{1}{1-z^{a_{\lambda}(s)+l_{\lambda}(s)+1}t^{\frac{1}{2}(-a_{\lambda}(s)+l_{\lambda}(s)+1)}}\,.
\end{equation}
Using the Hillman-Grassl correspondence \cite{GANSNER198171}, it is possible to write a refined generating function of reverse plane partitions that further grades the hook formula by the $t$ deformation above:\footnote{We thank Gjergji Zaimi for drawing our attention to the Hillman-Grassl correspondence.} 
\begin{equation}
    \sum_{\pi \in \text{RPP}(\lambda)} t^{\frac{1}{2}(\text{ht}'(\pi)-\text{ht}(\pi)+b(\pi))} z^{|\pi|} = \prod_{s\in\lambda} \frac{1}{1-z^{a_{\lambda}(s)+l_{\lambda}(s)+1}t^{\frac{1}{2}(-a_{\lambda}(s)+l_{\lambda}(s)+1)}}\,.
\end{equation}
The heights $\text{ht}(\pi)$, $\text{ht}'(\pi)$ and $b(\pi)$ are statistics on reverse plane partitions that we review in appendix \ref{appendix:hillman}. Note that sending $t\to1$ un-grades the character and recovers the Verma limit of the previous subsection.

In the $A$-shifted vortex partition function, the limit only receives contributions from vortices and one can check order-by-order in vortex number that sending $q \to 0$ indeed reproduces the refined generating function. We therefore expect the following identification as a consequence of 3d mirror symmetry
\begin{equation}
\begin{split}
    \lim_{q\to0}\mathcal{Z}^{A,\lambda}_{\text{Vortex}} &= \sum_{\pi \in \text{RPP}(\lambda)} t^{\frac{1}{2}(\text{ht}'(\pi)-\text{ht}(\pi)+b(\pi))} \zeta^{|\pi|} \\ &= \prod_{s\in\lambda} \frac{1}{1-z^{a_{\lambda}(s)+l_{\lambda}(s)+1}t^{\frac{1}{2}(-a_{\lambda}(s)+l_{\lambda}(s)+1)}} \\ &= \lim_{q\to 0}\mathcal{Z}_{\text{1-loop}}^{B,\lambda}\,.
\end{split}    
\end{equation}
In the above the identity includes the mirror symmetry exchange $\zeta \leftrightarrow z$ and $t \leftrightarrow t^{-1}$. Again, we observe that 3d mirror symmetry leads to a combinatorial identity for a generating function of reverse plane partitions.

\paragraph{Refined topological vertex}
We conclude this section with an observation relating the Poincar\'e polynomial limit of the shifted hemisphere partition function with the refined topological vertex \cite{Iqbal:2007ii}. This is consistent with the interpretation discussed previously in section \ref{subsec:quasimaps} of the partition function as the one-legged PT vertex.

In the Poincar\'e polynomial limit, computed via 3d mirror symmetry as above, the $A$-shifted vortex sum can be written
\begin{equation}
\begin{split}
    \lim_{q\to 0}\mathcal{Z}_{\text{Vortex}}^{A,\lambda} = \prod_{s\in\lambda} \frac{1}{1-\zeta^{a_{\lambda}(s)+l_{\lambda}(s)+1}t^{\frac{1}{2}(-a_{\lambda}(s)+l_{\lambda}(s)+1)}}\,,
\end{split}    
\end{equation}
we note that this can also be written as a principally specialised Macdonald polynomial as in the vertex of \cite{Awata:2008ed}.\footnote{We review conventions for symmetric functions in appendix \ref{appendix:symmetricfuns}.}
\begin{equation}\label{eq:Macdovertex}
    \lim_{q\to 0}\mathcal{Z}_{\text{Vortex}}^{A,\lambda} = (\zeta t^{\frac{1}{2}})^{-n(\lambda)}P_{\lambda}(1,(\zeta t^{\frac{1}{2}}),(\zeta t^{\frac{1}{2}})^{2},\ldots;\zeta t^{-\frac{1}{2}},\zeta t^{\frac{1}{2}})\,.
\end{equation}
In this limit, the classical terms also contribute:
\begin{equation}
    \lim_{q\to 0}\mathcal{Z}_{\text{Classical}}^{A,\lambda} = \zeta^{\frac{1}{2} \sum_{s\in \lambda} h_{\lambda}(s)} t^{\frac{1}{4} \sum_{s\in \lambda} c_{\lambda}(s)} \, .
\end{equation}
The 1-loop terms are simply one in this limit. Now, switching to the fugacities $(t_1,t_2)=(\zeta t^{\frac{1}{2}},\zeta^{-1}t^{\frac{1}{2}})$ for the torus action on $\text{Hilb}^N(\mathbb{C}^2)$ rather than the gauge theory parameters $\zeta$ and $t$, we have:
\begin{equation}
    \lim_{q\to0} \mathcal{Z}^{A,\lambda}_{S^1 \times D} = t_1^{\frac{1}{4}||\lambda||^2} t_2^{-\frac{1}{4} ||\lambda^{\vee}||^2} \prod_{s\in \lambda} \frac{1}{1-t_1^{l_{\lambda}(s)+1}t_2^{-a_{\lambda}(s)}}\, ,
\end{equation}
where we have re-written the sum over hook and content in terms of the weight $||\lambda||$ as in (\ref{eq:hookandcontentasnorm}). In conclusion, the Poincar\'e polynomial limit, $q\to0$, of the $A$-shifted hemisphere partition function coincides with the refined topological vertex of \cite{Iqbal:2007ii} with one non-trivial leg on the preferred direction:
\begin{equation}
    \lim_{q\to 0} \mathcal{Z}_{S^1 \times D}^{A,\lambda}  = C^{\text{(IKV)}}_{\emptyset, \emptyset, \lambda}(t=t_2^{-1},q=t_1)\,.
\end{equation}
The classical terms coincide with the framing factors in the topological vertex language. Later, in section \ref{subsec:hilbertseries}, we show that the holomorphic factorisation of the twisted index corresponds to gluing topological vertices.

The unrefined Verma character limit corresponds to setting $t\to1$. This sets equal the parameters of the Macdonald polynomial in (\ref{eq:Macdovertex}) and degenerates it to a Schur polynomial, i.e. the un-refined topological vertex of \cite{Aganagic:2003db}.
\begin{equation}
    \lim_{t\to1}\mathcal{Z}_{\text{Vortex}}^{A,\lambda} = \zeta^{-n(\lambda)}s_{\lambda}(1,\zeta,\zeta^2,\ldots) = \chi^{C,\lambda}\,.
\end{equation}
This specialised Schur formula for the Verma characters $\chi^{C,\lambda}$ of $\mathcal{A}^C_N$ were previously derived in \cite{Gaiotto:2019mmf}---in this work we realise the characters as the specialised vortex partition functions.

Via the correspondence with the vortex sum (\ref{eq:vortexsum}), the refined topological vertex is realised as a refined sum over \textit{reverse} plane partitions. This is in contrast to the usual formulation of the refined topological vertex as a sum over ordinary plane partitions. The apparent discrepancy can be explained by a choice of stability condition, i.e. sign of the real FI parameter, in the quiver construction of $\text{Hilb}^N(\mathbb{C}^2)$.\footnote{We thank Andrey Smirnov for explaining this point.} Our reverse plane partition sums are then analytic continuations in $\zeta\to \zeta^{-1}$ of the more conventional plane partition sums.

The natural conjecture for the 3d ADHM theory with $p \ge 1$ flavours is that the vortex partition functions\footnote{Or, in the opposite twist, the 1-loop contributions to the half-index of the necklace quiver mirror dual.} can be expressed as sums over coloured reverse plane partitions $\pi^{(i)}$ with $i=1,\ldots,p$. In this case we expect the $q\to0$ limit of the vortex sum to reproduce the refined amplitude of the strip geometry \cite{Iqbal:2004ne,Taki:2007dh} with external legs $\{\lambda^{i} \}$ corresponding to the bases of the reverse plane partitions i.e. $\pi^{(i)} \in \text{RPP}(\lambda^{(i)})$.


\section{Twisted Indices}\label{sec:factorisation}
In this section we consider the $A$- and $B$-twisted indices on $S^1\times S^2$ \cite{Closset:2016arn} of the 3d ADHM theory. A 3d $\mathcal{N}=4$ theory admits two fully topological twists  by $U(1)_H$ ($A$-twist) and $U(1)_C$ ($B$-twist) corresponding to a choice of $\mathcal{N}=2$ subalgebra:
\begin{itemize}
    \item The $B$-twist preserves $Q_+^{1\dot{1}}, \enspace Q_{+}^{2\dot{1}}, \enspace Q_{-}^{1\dot{2}}, \enspace Q_{-}^{2\dot{2}}$. 
    \item The $A$-twist preserves $Q_+^{1\dot{1}}, \enspace Q_{-}^{2\dot{2}}, \enspace Q_{+}^{1\dot{2}}, \enspace Q_{-}^{2\dot{1}}$.    
\end{itemize}
Both of the twists preserve a common $Q_{+}^{1\dot{1}}$ and $Q_{-}^{2\dot{2}}$ compatible with turning on a mass parameter $\tau$ for $U(1)_H-U(1)_C$. On $S^2$ we can also grade the index by a fugacity $q$ for angular momentum.
\begin{equation}\label{eq:twistedindextrace}
\begin{split}
    \mathcal{I}_{S^2}^A & = \text{Tr}_{\mathcal{H}_{S^2}^{A}} (-1)^F q^{J+\frac{R_H}{2}} t^{\frac{R_H-R_C}{2}} z^{F_H}\zeta^{F_C} \, , \\
    \mathcal{I}_{S^2}^B & = \text{Tr}_{\mathcal{H}_{S^2}^{B}}  (-1)^F q^{J+\frac{R_C}{2}} t^{\frac{R_H-R_C}{2}} z^{F_H}\zeta^{F_C} \,,
\end{split}
\end{equation}
where $\mathcal{H}_{S^2}^{A,B}$ denote respectively states in each twist on $S^2$ that are annihilated by $Q_{+}^{1\dot1}$ and $Q_{-}^{2\dot2}$. It has been shown in various cases that these indices factorise into the $A$- and $B$-shifted holomorphic blocks \cite{Crew:2020jyf,Cabo-Bizet:2016ars}. We will show our hemisphere partition functions associated to vacua $\lambda$ (\ref{eq:ADHMblock}) provide an \textit{exact} factorisation:
\begin{equation}
\begin{split}
     \mathcal{I}_{S^2}^{A} &= \sum_{\lambda} \left\lVert \mathcal{Z}_{S^1\times D}^{A,\lambda}(q,t,z,\zeta)\right\rVert_{\text{Twisted}}^2\,,\\
     \mathcal{I}_{S^2}^{B} &= \sum_{\lambda} \left\lVert \mathcal{Z}_{S^1\times D}^{B,\lambda}(q,t,z,\zeta)\right\rVert_{\text{Twisted}}^2,
\end{split}
\end{equation}
where the gluing is:
\begin{equation}
    q \rightarrow q^{-1},\quad t\rightarrow t,\quad z\rightarrow z, \quad\zeta \rightarrow \zeta.
\end{equation}
The $A$- and $B$-twisted indices of 3d $\mathcal{N}=4$ theories on $S^1\times S^2$ are in fact independent of $q$ \cite{Crew:2020jyf} and coincide with the $\mathcal{M}_C$ and $\mathcal{M}_H$ Hilbert series of the theory. In particular the $A$- and $B$-twisted indices depend only on the fugacities for those symmetries which act non-trivially on $\mathcal{M}_{C}$ and $\mathcal{M}_{H}$ respectively. Explicitly, 
\begin{eqnarray}
\mathcal{I}^{A}_{S^{2}} & = & \mathcal{I}^{A}_{S^{2}}[\zeta,t] \nonumber \\ 
\mathcal{I}^{B}_{S^{2}} & = & \mathcal{I}^{B}_{S^{2}}[z,t] 
\end{eqnarray}
One can see this from the Coulomb branch localisation where in the resulting Jeffreys-Kirwan contour integral $q$ dependence manifestly drops out, provided residue contributions at infinity vanish, and the resulting expression coincides with the Molien integral counting gauge invariant polynomials in the chiral ring. 

\subsection{$A$- and $B$-twisted indices}
The twisted indices can be computed in the UV with a Coulomb branch localisation scheme. This gives an expression for the indices as contour integrals with a Jeffreys-Kirwan contour prescription \cite{Benini:2015noa}.

\paragraph{$B$-twist}
In this work we focus mainly on the $B$-twisted index that reproduces the Hilbert series of the Higgs branch $\mathcal{M}_H$. It is given by the following JK residue integral:
\begin{equation}\label{eq:Btwistintegral}
\begin{split}
    \mathcal{I}_{S^2}^{B} = &\frac{(-1)^N}{N!} \sum_{\mathfrak{m}\in \mathbb{Z}^N}   \oint_{JK}  \prod_{a=1}^N \frac{dw_a}{2 \pi i w_a} (-\zeta)^{\sum_{a=1}^{N}\mathfrak{m}_a} (t^{\frac{1}{2}}-t^{-\frac{1}{2}})^{N}\\
    & \prod_{a,b=1 \atop a\neq b}^N \frac{\left[\left(\frac{w_a}{w_b}\right)^{\frac{1}{2}}\right]^{\mathfrak{m}_a-\mathfrak{m}_b-1}}{\left(\frac{w_a}{w_b}q^{1-\frac{1}{2}\mathfrak{m}_a+\frac{1}{2}\mathfrak{m}_b};q\right)_{\mathfrak{m}_a-\mathfrak{m}_b-1}}
    \frac{\left[\left(\frac{w_a}{w_b}\right)^{\frac{1}{2}}t^{-\frac{1}{2}}\right]^{\mathfrak{m}_a-\mathfrak{m}_b-1}}{\left(\frac{w_a}{w_b}t^{-1}q^{1-\frac{1}{2}\mathfrak{m}_a+\frac{1}{2}\mathfrak{m}_b};q\right)_{\mathfrak{m}_a-\mathfrak{m}_b-1}}\\
    &\prod_{a=1}^N \frac{\left(w_a^{\frac{1}{2}}t^{\frac{1}{4}}\right)^{\mathfrak{m}_a+1}}{\left(w_a t^{\frac{1}{2}} q^{-\frac{1}{2}\mathfrak{m}_a};q\right)_{\mathfrak{m}_a+1}}
    \frac{\left(w_a^{-\frac{1}{2}}t^{\frac{1}{4}}\right)^{-\mathfrak{m}_a+1}}{\left(w_a^{-1} t^{\frac{1}{2}} q^{\frac{1}{2}\mathfrak{m}_a};q\right)_{-\mathfrak{m}_a+1}}\\
    &\prod_{a,b=1}^N \frac{\left[\left(\frac{w_a}{w_b}\right)^{\frac{1}{2}}z^{\frac{1}{2}}t^{\frac{1}{4}}\right]^{\mathfrak{m}_a-\mathfrak{m}_b+1}}{\left(\frac{w_a}{w_b}zt^{\frac{1}{2}} q^{-\frac{1}{2}\mathfrak{m}_a+\frac{1}{2}\mathfrak{m}_b};q\right)_{\mathfrak{m}_a-\mathfrak{m}_b+1}}
    \frac{\left[\left(\frac{w_a}{w_b}\right)^{-\frac{1}{2}}z^{-\frac{1}{2}}t^{\frac{1}{4}}\right]^{-\mathfrak{m}_a+\mathfrak{m}_b+1}}{\left(\frac{w_b}{w_a}z^{-1}t^{\frac{1}{2}} q^{\frac{1}{2}\mathfrak{m}_a-\frac{1}{2}\mathfrak{m}_b};q\right)_{-\mathfrak{m}_a+\mathfrak{m}_b+1}}.
\end{split}
\end{equation}
The second line (together with $(t^{\frac{1}{2}}-t^{-\frac{1}{2}})^{N}$ in the first line corresponds to the $\mathcal{N}=4$ vector multiplet, the third line the fundamental hypermultiplet, and the last line the adjoint hypermultiplet. The contour prescription encloses poles labelled by reverse plane partitions over a Young diagram base $\lambda$ such that $|\lambda|=N$.\footnote{More accurately, similarly to the hemisphere partition function, only those poles labelled by RPPs over a Young diagram have non-vanishing residues.}  The poles take the form for each $s\in \lambda$:
\begin{equation}
\begin{split}
    w_{s} &= t^{\frac{1}{2}}q^{\frac{1}{2}\left(\tilde{\pi}_s-\pi_s\right)} v_s^{-1},\\
    \mathfrak{m}_{s}&=\tilde{\pi}_{s}+\pi_{s},
\end{split}
\end{equation}
where $v_s=z^{i_s-j_s}t^{\frac{1}{2}(i_s+j_s)}$ and  $\tilde{\pi}_{s}$ and $\pi_{s}$ the heights of reverse plane partitions $\pi$ and $\tilde{\pi}$ above a box $s\in\lambda$. After lengthy cancellation, particularly for the 1-loop piece,\footnote{These cancellations are analogous to the cancellations for the perturbative piece of the hemisphere partition function in appendix \ref{appendix:holoblockdetailedcomputation} so we do not reproduce them here.} we obtain:
\begin{equation}\label{eq:Btwistfactorisation}
    \mathcal{I}_{S^2}^{B} = \sum_{|\lambda|=N}
    \mathcal{I}_{\text{Classical}}^{B,\lambda} \mathcal{I}_{\text{1-loop}}^{B,\lambda}
    \mathcal{Z}^{B,\lambda}_{\text{Vortex}}(q,t,z,\zeta)
    \mathcal{Z}^{B,\lambda}_{\text{Vortex}}(q^{-1},t,z,\zeta),
\end{equation}
where:
\begin{equation}\label{eq:Btwistfactorisation2}
    \mathcal{I}_{\text{Classical}}^{B,\lambda} =  \prod_{s\in \lambda} z^{i_s+j_s-1} t^{\frac{1}{2}(i_s-j_s)}=   \prod_{s \in \lambda} z^{h_{\lambda}(s)} t^{-\frac{1}{2}c_\lambda(s)} = \left\lVert \mathcal{Z}^{B,\lambda}_{\text{Classical}}\right\rVert^2\,.
\end{equation}
The 1-loop piece is:
\begin{equation}\label{eq:Btwistfactorisation3}
\begin{split}
    \mathcal{I}_{\text{1-loop}}^{B,\lambda} &= \prod_{s\in \lambda}\frac{\left(z^{a_{\lambda}(s)+l_{\lambda}(s)+1}t^{\frac{1}{2}\left(-a_{\lambda}(s)+l_{\lambda}(s)-1\right)}q;q\right)_{-1}}{\left(z^{a_{\lambda}(s)+l_{\lambda}(s)+1}t^{\frac{1}{2}\left(-a_{\lambda}(s)+l_{\lambda}(s)+1\right)};q\right)_{1}}\\
    &= \left\lVert \mathcal{Z}_{\text{1-loop}}^{B,\lambda} \right\rVert_{\text{Twisted}}^2,
\end{split}
\end{equation}
where in the last line we have used the fusion rule (\ref{eq:twistedindexfusionrule}) to factorise in terms of perturbative contributions to the blocks. In conclusion the twisted index exactly factorises:
\begin{equation}\label{eq:Btwistfactorisation4}
    \mathcal{I}_{S^2}^{B,\lambda} = \sum_{\lambda} \left\lVert \mathcal{Z}_{S^1\times D}^{B,\lambda}(q,t,z,\zeta)\right\rVert_{\text{Twisted}}^2.
\end{equation}

\paragraph{$A$-twist}
For completeness, in this section we state the results of the factorisation in the $A$-twist.
\begin{equation}
\begin{split}
    \mathcal{I}_{S^2}^{A} = &\frac{(-1)^N}{N!} \sum_{\mathfrak{m}\in \mathbb{Z}^n}   \oint_{JK} \prod_{a=1}^N \frac{dw_a}{2 \pi i w_a}(-\zeta)^{\sum_{a=1}^{N}\mathfrak{m}_a} (t^{\frac{1}{2}}-t^{-\frac{1}{2}})^{-N}\\
    & \prod_{a,b=1 \atop a\neq b}^N \frac{\left[\left(\frac{w_a}{w_b}\right)^{\frac{1}{2}}\right]^{\mathfrak{m}_a-\mathfrak{m}_b-1}}{\left(\frac{w_a}{w_b}q^{1-\frac{1}{2}\mathfrak{m}_a+\frac{1}{2}\mathfrak{m}_b};q\right)_{\mathfrak{m}_a-\mathfrak{m}_b-1}}
    \frac{\left[\left(\frac{w_a}{w_b}\right)^{\frac{1}{2}}t^{-\frac{1}{2}}\right]^{\mathfrak{m}_a-\mathfrak{m}_b+1}}{\left(\frac{s_a}{s_b}t^{-1}q^{-\frac{1}{2}\mathfrak{m}_a+\frac{1}{2}\mathfrak{m}_b};q\right)_{\mathfrak{m}_a-\mathfrak{m}_b+1}}\\
    &\prod_{a=1}^N \frac{\left(s_a^{\frac{1}{2}}t^{\frac{1}{4}}\right)^{\mathfrak{m}_a}}{\left(s_a t^{\frac{1}{2}} q^{\frac{1}{2}-\frac{1}{2}\mathfrak{m}_a};q\right)_{\mathfrak{m}_a}}
    \frac{\left(w_a^{-\frac{1}{2}}t^{\frac{1}{4}}\right)^{-\mathfrak{m}_a}}{\left(w_a^{-1} t^{\frac{1}{2}} q^{\frac{1}{2}+\frac{1}{2}\mathfrak{m}_a};q\right)_{-\mathfrak{m}_a}}\\
    &\prod_{a,b=1}^N \frac{\left[\left(\frac{w_a}{w_b}\right)^{\frac{1}{2}}z^{\frac{1}{2}}t^{\frac{1}{4}}\right]^{\mathfrak{m}_a-\mathfrak{m}_b}}{\left(\frac{w_a}{w_b}zt^{\frac{1}{2}} q^{\frac{1}{2}-\frac{1}{2}\mathfrak{m}_a+\frac{1}{2}\mathfrak{m}_b};q\right)_{\mathfrak{m}_a-\mathfrak{m}_b}}
    \frac{\left[\left(\frac{w_a}{w_b}\right)^{-\frac{1}{2}}z^{-\frac{1}{2}}t^{\frac{1}{4}}\right]^{-\mathfrak{m}_a+\mathfrak{m}_b}}{\left(\frac{w_b}{w_a}z^{-1}t^{\frac{1}{2}} q^{\frac{1}{2}+\frac{1}{2}\mathfrak{m}_a-\frac{1}{2}\mathfrak{m}_b};q\right)_{-\mathfrak{m}_a+\mathfrak{m}_b}}\,.
\end{split}
\end{equation}
 The poles take the form of reverse plane partitions over a base Young diagram $\lambda$ such that $|\lambda|=N$. Those which contribute to the JK residue are indexed by boxes $s \in \lambda$, with $q$-dependence corresponding to vortex and anti-vortex number:
\begin{equation}    
    w_{s} = t^{\frac{1}{2}}q^{\frac{1}{2}\left(\tilde{\pi}_s-\pi_s\right)} v_s^{-1}
\end{equation}
where $\pi_s$ and $\tilde{\pi}_{s}$ are the height of two reverse plane partitions  obeying $\mathfrak{m}_{s} = \tilde{k}_{s}+k_{s}+(i_{s}+j_{s}-1)$. Evaluating the integral at these poles we arrive at:
\begin{equation}\label{eq:Atwistfactorisation}
    \mathcal{I}_{S^2}^{A} = \sum_{|\lambda|=N}
    \mathcal{I}_{\text{Classical}}^{A,\lambda} \mathcal{I}_{\text{1-loop}}^{A,\lambda}
    \mathcal{Z}^{A,\lambda}_{\text{Vortex}}(q,t,z,\zeta)
    \mathcal{Z}^{A,\lambda}_{\text{Vortex}}(q^{-1},t,z,\zeta),
\end{equation}
where:
\begin{equation}
    \mathcal{Z}_{\text{cl}}^\lambda =  \prod_{s}(\zeta)^{i_{s}+j_{s}-1}t^{\frac{1}{2}\left(-i_{s}+j_{s}\right)} =  \left\lVert \mathcal{Z}^{A,\lambda}_{\text{Classical}}\right\rVert^2\,.
\end{equation}
For the one-loop piece we find:
\begin{equation}
\begin{split}
    \mathcal{Z}_{\text{1-loop}}^{A,\lambda} =& \prod_{s\in Y} \frac{\left(z^{a_{\lambda}(s)+l_{\lambda}(s)+1}(tq)^{\frac{1}{2}\left(-a_{\lambda}(s)+l_{\lambda}(s)-1\right)}q;q\right)_{a_{\lambda}(s)-l_{\lambda}(s)}}{\left(z^{a_{\lambda}(s)+l_{\lambda}(s)+1}(tq)^{\frac{1}{2}\left(-a_{\lambda}(s)+l_{\lambda}(s)+1\right)};q\right)_{a_{\lambda}(s)-l_{\lambda}(s)}}\\
    &= \left\lVert \mathcal{Z}_{\text{1-loop}}^{A,\lambda} \right\rVert_{\text{Twisted}}^2\,.
\end{split}
\end{equation}
In conclusion:
\begin{equation}
    \mathcal{I}_{S^2}^{A} = \sum_{\lambda} \left\lVert \mathcal{Z}_{S^1\times D}^{A,\lambda}(q,t,z,\zeta)\right\rVert_{\text{Twisted}}^2.
\end{equation}

\subsection{Hilbert series of the Hilbert scheme}\label{subsec:hilbertseries}
We now focus on the $B$-twisted index. As argued at the beginning of this section, the index coincides with the Higgs branch Hilbert series. Since the index is independent of $q$, we are free to send $q\to0$ in the factorisation (\ref{eq:Btwistfactorisation}--\ref{eq:Btwistfactorisation4}). In this limit only the 1-loop and classical terms survive and we have:
\begin{equation}
\begin{split}
    \lim_{q \to 0}\mathcal{Z}_{\text{1-loop}}^{B,\lambda}(z,\zeta;q,t) &= \prod_{s\in\lambda}\frac{1}{1-z^{a_{\lambda}(s)+l_{\lambda}(s)+1}t^{\frac{1}{2}\left(-a_{\lambda}(s)+l_{\lambda}(s)+1\right)}} \,,\\ 
    \lim_{q \to 0}\mathcal{Z}_{\text{1-loop}}^{B,\lambda}(z,\zeta;q^{-1},t) &= \prod_{s\in\lambda}\frac{1}{1-z^{a_{\lambda}(s)+l_{\lambda}(s)+1}t^{\frac{1}{2}\left(-a_{\lambda}(s)+l_{\lambda}(s)-1\right)}} \,,\\
    \lim_{q\to 0}\mathcal{Z}_{\text{Classical}}^{B,\lambda}(z,\zeta;q,t) &= z^{\frac{1}{2} \sum_{s\in\lambda} h_{\lambda}(s)}t^{-\frac{1}{4} \sum_{s\in\lambda}c_{\lambda}(s)} \, , \\
    \lim_{q\to 0}\mathcal{Z}_{\text{Classical}}^{B,\lambda}(z,\zeta;q^{-1},t) &= z^{\frac{1}{2} \sum_{s\in\lambda} h_{\lambda}(s)}t^{-\frac{1}{4} \sum_{s\in\lambda}c_{\lambda}(s)} \, .
\end{split}
\end{equation}
The $B$-twisted index can then be expressed as,\footnote{We have used the shorthand $\bar{\mathcal{Z}}$ to denote the twisted index gluing $q\to q^{-1}$.}
\begin{eqnarray}
\mathcal{I}_{S^2}^{B}[z,t] & = &\lim_{q \to 0} \mathcal{I}_{S^2}^{B}[z,t]  \nonumber \\ &= & \sum_{\substack{\lambda\\ |\lambda|=N}} \lim_{q \to 0}
    \mathcal{Z}_{\text{Classical}}^{B,\lambda} \bar{\mathcal{Z}}_{\text{Classical}}^{B,\lambda}\mathcal{Z}_{\text{1-loop}}^{B,\lambda} \bar{\mathcal{Z}}_{\text{1-loop}}^{B,\lambda} \nonumber  \\
    &= & \sum_{\substack{\lambda \\ |\lambda|=N}}\prod_{s\in \lambda} \frac{z^{a_{\lambda}(s)+l_{\lambda}(s)+1}t^{\frac{1}{2}(-a_{\lambda}(s)+l_{\lambda}(s)})}{\left(1-z^{a_{\lambda}(s)+l_{\lambda}(s)+1}t^{\frac{1}{2}(-a_{\lambda}(s)+l_{\lambda}(s)+1)}\right) \left(1-z^{a_{\lambda}(s)+l_{\lambda}(s)+1}t^{\frac{1}{2}(-a_{\lambda}(s)+l_{\lambda}(s)-1)} \right)} \nonumber \,.
\end{eqnarray}
We remark that, by mirror symmetry, this is an expression for the Hilbert series of the Higgs branch $\mathcal{M}_H = \text{Hilb}^N(\mathbb{C}^2)$ in terms of Higgs branch Verma denominators since, in the notation of section \ref{subsec:vermacharacterlimit}, we have:
\begin{equation}
     \lim_{q\to0}\mathcal{Z}_{\text{1-loop}}^{B,\lambda} = \lim_{q\to0} \mathcal{Z}_{\text{Vortex}}^{A,\lambda}(t\to t^{-1}, \zeta \to z) = \chi^{H,\lambda}.
\end{equation}
Geometrically, this yields a formula for the Higgs branch hilbert series in terms of the Poincar\'e polynomials of maps to the Coulomb branch.

After changing variables to $(t_1,t_2) = (zt^{\frac{1}{2}},z^{-1}t^{\frac{1}{2}})$ to match to the more conventional symmetry generators for the torus action on $\mathbb{C}^{2}$, 
the twisted index as expressed above recovers the familiar fixed point formula for the Hilbert series of $\text{Hilb}^N(\mathbb{C}^2)$ \cite{nakajima1999lectures}. The latter is also conveniently written via the generating function,\footnote{\text{PE} denotes the plethystic exponential as defined in equation \ref{eq:plethysticexp}.}
\begin{eqnarray}
{\tt Z}[\Lambda,t_{1}, t_{2}]  &  :=& \sum_{N=0}^{\infty}\Lambda^{N}\mathcal{Z}_{\text{H.S.}} \left[\text{Hilb}^N(\mathbb{C}^2) \right](t_1,t_2)= 
\text{PE} \left[\frac{-\sqrt{t_1 t_2} \Lambda}{(1-t_{1})(1-t_{2})}\right]
\label{HS}
\end{eqnarray}
Thus, for the rank $N$ theory we have, 
\begin{eqnarray}
\mathcal{I}^{B}_{S^{2}}[z,t] & =& \mathcal{Z}_{\text{H.S.}}\left[\text{Hilb}^{N}\left(\mathbb{C}^{2}\right)\right](zt^{\frac{1}{2}},z^{-1}t^{\frac{1}{2}})= \text{PE} \left[\frac{-\Lambda t^{\frac{1}{2}}}{(1- zt^{\frac{1}{2}})(1-z^{-1}t^{\frac{1}{2}})}\right]\Bigg|_{O(\Lambda^{N})}
\label{HSH}
\end{eqnarray}
where the final suffix indicates we are extracting the term of order $\Lambda^{N}$. 
Finally, taking account of the self mirror property of the theory we also have, 
\begin{eqnarray}
\mathcal{I}^{A}_{S^{2}}[\zeta ,t] & = & \mathcal{Z}_{\text{H.S.}}\left[\text{Hilb}^{N}\left(\mathbb{C}^{2}\right)\right](\zeta t^{-\frac{1}{2}},\zeta^{-1}t^{-\frac{1}{2}})= \text{PE} \left[\frac{-\Lambda t^{-\frac{1}{2}}}{(1- \zeta t^{-\frac{1}{2}})(1-\zeta^{-1}t^{-\frac{1}{2}})}\right]\Bigg|_{O(\Lambda^{N})}
\label{HSC}
\end{eqnarray}
This is consistent with the identification of both the Higgs and Coulomb branches of the rank $N$ theory as the Hilbert scheme of $N$ points on $\mathbb{C}^{2}$. 
Thus the $A$- and $B$-twisted indices coincide with the Hilbert series of the Coulomb branch and Higgs branch respectively as expected.

\paragraph{Type IIA string theory interpretation}
Recall from section \ref{subsec:quasimaps} that the vortex partition function is expected to coincide with the bare K-theoretic PT vertex with one non-trivial leg.
\begin{equation}
    \mathcal{Z}^{A,\lambda}_{\text{Vortex}} = V_{\text{PT}}^{\emptyset,\emptyset,\lambda}\,.
\end{equation}
Indeed, as discussed in section \ref{subsec:pplimit}, the Poincar\'e polynomial limit reproduces the refined topological vertex \cite{Iqbal:2007ii} with the preferred direction on the non-trivial leg. The classical term yields the topological vertex framing factor:
\begin{equation}
    \lim_{q\to0} \mathcal{Z}_{\text{Classical}}^{A,\lambda}\mathcal{Z}_{\text{Vortex}}^{A,\lambda} = C^{\text{(IKV)}}_{\emptyset,\emptyset,\lambda}(t_2^{-1},t_1)
\end{equation}
where, as above, the gauge theory and vertex parameters are identified as $t_{1}=\zeta t^{-\frac{1}{2}}$ and $t_{2}=\zeta^{-1}t^{-\frac{1}{2}}$. Now we note that the gluing together of the blocks to form the $A$-twisted index described above is identical to the gluing of refined topological vertices\footnote{The block gluing distributes the framing factor equally amongst the vertices, in contrast to \cite{Iqbal:2007ii}, and coincides with the alternative framing factor choice of \cite{Awata:2008ed}.} to get the partition function for the resolved conifold $\mathcal{C}=\mathcal{O}(-1)\oplus\mathcal{O}(-1) \rightarrow \mathbb{P}^{1}$. The conjugation of the vertices coincides with the conjugation on the topologically twisted gluing. In particular, the vertex calculation is given as, 
\begin{eqnarray}
\mathcal{Z}_{\mathcal{C}}[Q,t_{1},t_{2}] & = &  
\sum_{\lambda} (-Q)^{|\lambda|}C_{\emptyset, \emptyset, \lambda}(t_2^{-1},t_1) C_{ \emptyset,\emptyset,\lambda^{\vee}}(t_1,t_2^{-1}) \nonumber \\
&  = &  \prod_{i,j=1}^{\infty} 
\left(1-Q t_{1}^{i-\frac{1}{2}}t_{2}^{-j+\frac{1}{2}}\right) \nonumber \\ 
& = & {\rm PE} \left[ - \frac{\sqrt{t_{1}t_{2}}Q}{(1-t_{1})(1-t_{2}^{-1})}\right]
\label{int}
\end{eqnarray}
As shown in \cite{Kononov:2019fni}, the full K-theoretic PT vertex glues to give the same result. Indeed, the resulting PT partition function is independent of the additional parameters corresponding to the torus action on the conifold. This precisely parallels the cancellation of $q$ and $z$ ($\zeta$)  dependence of the blocks when glued to form the $A$-twisted  ($B$-twisted) partition function.

The refined partition function of the resolved conifold $\mathcal{C}$ also has an interpretation in terms of Type IIA string theory on $\mathcal{C}\times \mathbb{R}^{3,1}$. 
In a particular chamber of the K\"{a}hler moduli space, it corresponds to an index computing the bound states of $D0$ and $D2$ branes in the presence of a single $D6$ brane wrapped on 
$\mathcal{C}$ \cite{Dimofte:2009bv}. Specifically, in each sector of fixed $D$-brane charge $\gamma$, it computes a trace over the Hilbert space $\mathcal{H}(\gamma,u)$ of BPS states weighted by 
their four-dimensional spin $J_{3}$,  
\begin{eqnarray}
\Omega^{\rm ref}(\gamma;u;y) & := &  {\rm Tr}_{\mathcal{H}(\gamma,u)}\, (-y)^{2J_{3}}
\end{eqnarray}
Here $u=B+iJ$ denotes the asymptotic value of complexified K\"{a}hler parameter on which the index has piecewise constant dependence, splitting the K\"{a}hler moduli space into chambers separated by walls of marginal stability. The resulting refined BPS index of the resolved conifold is defined as, 
\begin{eqnarray}
\mathcal{Z}^{\rm ref}_{IIA}\left(\mathfrak{q},Q,y;u\right) &:= & \sum_{m,n\in \mathbb{Z}}\, (-\mathfrak{q})^{n} Q^{m} 
\Omega^{\rm ref}(\gamma_{n,m};u;y)
\end{eqnarray}
where the integers $n$ and $m$ correspond to $D0$ and $D2$ branes charges respectively. 
In a particular region $\mathcal{U}_{PT}$ of moduli space\footnote{This region is described in \cite{Dimofte:2009bv}  as an $n\rightarrow\infty$ limit of a certain sequence $\{\tilde{C}_{n}\}$ of chambers. See the discussion around Eqn (2.18) in this reference for a more detailed explanation.}, the index coincides refined partition function of topological string theory computed above, 
\begin{eqnarray}
\mathcal{Z}^{\rm ref}_{IIA}
\left(\mathfrak{q},Q,y;u\in \mathcal{U}_{PT}\right) & = &  \prod_{i,j=1}^{\infty} 
\left(1- Q (\mathfrak{q}y)^{i-\frac{1}{2}}\left(\frac{\mathfrak{q}}{y}\right)^{j-\frac{1}{2}}\right) \nonumber \\ 
& =& {\rm PE} \left[ - \frac{\mathfrak{q}Q}{(1-\mathfrak{q}/y)(1-\mathfrak{q}y)}\right] \nonumber 
\end{eqnarray}
This coincides with (\ref{int}) above with the identifications $t_{1}=\mathfrak{q}y$, $t_{2}=\mathfrak{q}/y$. To understand the connection to the gauge theory twisted index, note that the infinite product on the RHS of this equation is convergent in the region,  $|\mathfrak{q}y|$, $|\mathfrak{q}/y|<1$. The analytic continuation to the region $|\mathfrak{q}y|$, $|y/\mathfrak{q}|<1$ is given as,  
\begin{eqnarray}
\mathcal{Z}^{\rm ref}_{IIA} 
\left(\mathfrak{q},Q,y;u\in \mathcal{U}_{PT}\right) & = &  \sum_{m\in \mathbb{Z}} 
Q^{m} \mathcal{Z}^{\rm ref}_{m}(\mathfrak{q},y)
\nonumber \\ 
& =& \prod_{i,j=1}^{\infty} 
\left(1- Q (\mathfrak{q}y)^{i-\frac{1}{2}}\left(\frac{y}{\mathfrak{q}}\right)^{j-\frac{1}{2}}\right)^{-1} \nonumber \\ 
& =& {\rm PE} \left[\frac{yQ}{(1-y/\mathfrak{q})(1-y\mathfrak{q})}\right]\,. \nonumber 
\end{eqnarray}
With appropriate identifications of the parameters, this is equal to the  generating function ${\tt Z}[\Lambda,t_{1},t_{2}]$ for the Hilbert series of the Hilbert scheme defined in (\ref{HS}) above, 
\begin{eqnarray}
\mathcal{Z}^{\rm ref}_{IIA} 
\left(\mathfrak{q},Q,y;u\in \mathcal{U}_{PT}\right) & = & 
{\tt Z}[yQ,y\mathfrak{q}, {y}/{\mathfrak{q}}]
\end{eqnarray}
and can thus be related to the gauge theory twisted index using (\ref{HSH}, \ref{HSC}). 

Putting together the various equalities described above, we deduce that the $A$-twisted index of the ADHM quiver theory of rank $N$ 
computes an index for BPS bound states of a configuration consisting of $N$ $D2$ branes wrapped on the compact $\mathbb{P}^{1}$ of the conifold in the presence of a single $D6$ brane and an arbitrary number of $D0$ branes. The vortex counting parameter $\zeta$ corresponds to $\mathfrak{q}$ where $-\mathfrak{q}$ is the fugacity for $D0$ brane charge, while the fugacity $t^{-\frac{1}{2}}$ for the Coulomb branch $R$ symmetry corresponds to the fugacity $y$ for spin on the IIA side. More precisely, 
\begin{eqnarray}
\mathcal{I}^{A}_{S^{2}}\left[\zeta, t \right] & = & t^{\frac{N}{2}}\mathcal{Z}^{\rm ref}_{N}(\zeta,t^{-\frac{1}{2}}) 
\end{eqnarray}
Such a correspondence can be motivated heuristically as follows. The ADHM quiver theory on $\mathbb{R}^{1,2}$ can be realised in Type IIA string theory as the worldvolume  theory of $N$ $D2$ branes in the presence of a single $D6$ on flat ten dimensional space. In this context the vortices of the 3d gauge theory correspond to
$D0$ branes bound to the $D2$s. The vacuum moduli space of the 3d theory corresponds to motion of $N$ identical $D2$s in the eight transverse dimensions.\footnote{One of which arises from the dual photon and becomes a geometrical dimension when lifted to M theory.} Motion in the four tranverse dimensions parallel to the $D6$ corresponds to the Higgs branch of the 3d theory while motion 
in the four remaining transverse dimensions corresponds to the Coulomb branch. The $A$-twisted index arises from compactification of the three-dimensional theory on $\mathbb{P}^{1}$ with a twist involving the Higgs branch $R$-symmetry. Such twisted compactifications of the $D$-brane world volume can indeed be realised in string theory by wrapping the branes on a non-trivial cycle in a Calabi-Yau threefold \cite{Bershadsky:1995qy} where the R-symmetry twist is induced by the non-trivial fibering of the normal bundle over $\mathbb{P}^{1}$. It would be interesting to make this precise in the present context.  

\subsection{Large $N$ limit of the Hilbert series}\label{subsec:largeN}
We now compute the large gauge rank limit of the $B$-twisted index.

\paragraph{Adding flavours} From this point on in the paper we consider the 3d ADHM theory with $p \ge 1$ flavours. This allows us to elucidate some of the structure in this and the following section more clearly. We add hypermultiplets $(I_i,J_i)$ with $i=1,\ldots, p$ in the fundamental representation. The theory is no longer self-mirror and is now mirror dual to an affine quiver theory \cite{Hosseini:2016ume}. The Higgs branch coincides with the moduli space of $\mathbb{C}^2$ instantons $\mathcal{M}_H = \mathcal{M}_{N,p}$ and has a larger global symmetry group $G_H = U(1) \times U(1) \times SU(p)$ for which we introduce the additional flavour fugacities $x_i$. The fixed points under this group action are now labelled by p-coloured Young diagrams $\{\lambda^{(i)}\enspace | \enspace i=1,\ldots,p\}$ such that $\sum_i |\lambda^{(i)}| = N$.

\paragraph{Molien integral and symmetric functions}
The $B$-twisted index coincides with the Hilbert series of the Higgs branch $\mathcal{M}_H$. In the absence of external flux only the sector with $\mathfrak{m}=0$ contributes to the integral (\ref{eq:Btwistintegral}) and the JK prescription picks out poles in the unit circle. In this case $q$ drops out of the integrand and, up to an unimportant $t$ prefactor, we find:
\begin{equation}
\begin{split}
    \mathcal{Z}_{\text{H.S.}}\left[ \mathcal{M}_{N,p}\right] &= \mathcal{I}_{S^2}^{B} = \frac{1}{N!}\oint_{S^1} \prod_{a=1}^N \frac{ds_a}{2 \pi i s_a} \prod_{a\neq b}^N \left(1-\frac{s_a}{s_b} \right) \prod_{a,b=1}^N \frac{\left( 1-t \frac{s_a}{s_b}\right)}{\left(1-zt^{\frac{1}{2}}\frac{s_a}{s_b}\right) \left( 1-z^{-1}t^{\frac{1}{2}} \frac{s_a}{s_b} \right)} \\
    &\qquad\qquad\qquad\qquad \qquad\qquad \prod_{a=1}^N \prod_{i=1}^p \frac{1}{1-t^{\frac{1}{2}}s_a x_i} \frac{1}{1-t^{\frac{1}{2}}s_a^{-1}x_i^{-1}}\,.
\end{split}    
\end{equation}
This integral is a Molien integral counting gauge invariant polynomials in the scalars $(A,B,I,J)$ generating $\mathbb{C}[\mathcal{M}_H]$, it can be evaluated using symmetric function methods. We review the details of (a generalisation of) this calculation in appendix \ref{appendix:molienintegrals} -- the upshot in this example is:
\begin{equation}\label{eq:adhmhs}
    \mathcal{Z}_{\text{H.S.}}\left[ \mathcal{M}_{N,p}\right] = \sum_{\lambda, \mu} \frac{1}{(zt^{\frac{1}{2}};zt^{\frac{1}{2}})_{N-l(\mu)}}\left(z^{-1}t^{\frac{1}{2}}\right)^{|\sigma|}P'_{\mu/\lambda}\left(t^{\frac{1}{2}}X;zt^{\frac{1}{2}} \right) Q'_{\mu/\lambda}\left(t^{\frac{1}{2}}\bar{X};zt^{\frac{1}{2}}\right)
\end{equation}
where $P'_{\mu/\lambda}(X;t)$ and $Q'_{\mu/\lambda}(\bar{X};t)$ are different normalisations of skew Milne polynomials in the flavour fugacities $X=\{x_1,\ldots,x_p\}$ and $\bar{X}=\{x_1^{-1},\ldots,x_p^{-1}\}$. In the large $N$ limit we can use the Cauchy type identity proved in \ref{appendix:molienintegrals} to find:
\begin{equation}
    \lim_{N\to\infty}\mathcal{Z}_{\text{H.S.}}\left[\mathcal{M}_{N,p}\right] =  \prod_{k=0}^{\infty} \frac{1}{1-\left(zt^{\frac{1}{2}}\right)^{k+1}}\frac{1}{1-\left(z^{-1}t^{\frac{1}{2}}\right)^{k+1}} \prod_{i,j=1}^p \prod_{l=0}^{\infty}\frac{1}{1-\left(zt^{\frac{1}{2}}\right)^{l}\left(z^{-1}t^{\frac{1}{2}}\right)^{k}x_i/x_j}.
\end{equation}
This calculation is consistent with the fact that at large $N$ there are no trace relations in the chiral ring $\mathbb{C}[\mathcal{M}_H]$ and it becomes freely generated by the gauge invariant polynomials $\text{Tr}A^i$, $\text{Tr}B^j$ and $IA^iB^jJ$.


\section{Quantum Mechanics and Simple Modules}\label{sec:qm}
In this section we consider an alternative Neumann boundary condition for the half index of the theory in the presence of a line operator. As in the previous subsection \ref{subsec:largeN}, we work with $p \ge 1$ flavours and introduce corresponding fugacities $x_i$ with $i=1,\ldots,p$. In this context, we find a connection to the matrix model of a one dimensional quantum mechanics and discuss a geometric interpretation of the half index of this boundary condition as counting sections of line bundles over a particular Lagrangian in the ADHM moduli space.

We refer the reader to \cite{Bullimore:2016nji} for a detailed construction of the Neumann boundary condition. We note here only that setting the gauge multiplet to Neumann preserves gauge symmetry at the boundary and so the half index is computed by a contour integral that projects onto gauge invariant operators. We choose a particularly simple Lagrangian splitting for the matter of the ADHM theory corresponding to the natural splitting associated to the quiver in figure \ref{fig:adhm}. Specifically, in the notation of section \ref{sec:background}, $J$ and $A$ are set to zero on the boundary whilst $I$ and $B$ are allowed to fluctuate.

In this section we also include a Wilson line of charge $\mathfrak{n}>0$ in the totally symmetric representation of $GL(\mathbb{C}^N)$. The line operator is inserted at $x^2=x^3=0$ and extends perpendicularly out of the boundary---we denote the Wilson line by $\mathcal{W}_{\mathfrak{n}}$. The half index then counts the boundary local operators that transform under the representation corresponding to $\mathcal{W}_{\mathfrak{n}}$. We refer the reader to \cite{Dimofte:2017tpi} for a more detailed discussion of the computation of half indices in the presence of a Wilson line.

\paragraph{Contour integral form}
In the setup discussed above and working with the $B$-shifted R-symmetry convention, the half index can be expressed as the following contour integral
\begin{equation}
\begin{split}
    \mathcal{I}_{N,p}(\mathfrak{n}) = \frac{1}{N!}\oint_{\left(S^1\right)^N} \prod_{a=1}^N \frac{ds_a}{2 \pi i s_a} s_a^{-\mathfrak{n}} & \frac{\prod_{a \neq b}^N (s_a s_b^{-1};q)_{\infty}}{\prod_{a,b=1}^N(s_as_b^{-1} t^{-1}q;q)_{\infty}} \prod_{a,b=1}^N \frac{(s_a s_b^{-1} z t^{-\frac{1}{2}}q;q)_{\infty}}{(s_a s_b^{-1} z t^{\frac{1}{2}};q)_{\infty}} \\
    &\prod_{a=1}^N \prod_{i=1}^p \frac{(s_a x_i t^{-\frac{1}{2}}q;q)_{\infty}}{(s_a x_i  t^{\frac{1}{2}};q)_{\infty}}\,.
\end{split}
\end{equation}
This integral also appears in equation (2.11) of the work \cite{Choi:2019zpz} where (up to a shift of the $R$-symmetry) it arises as a Coulomb branch localisation formula for the disk partition function with a Neumann boundary condition and quantised FI parameter. In the present work, we instead interpret the integral as a count of boundary local operators in the presence of a Wilson line. We discuss the geometric and algebraic interpretation of this operator count in the following subsections.

\subsection{Matrix model limit}
Now we consider the limit $t \to 1$ previously discussed in detail in section \ref{subsec:vermacharacterlimit}. In this limit, the Pochhammer terms in the integrand telescope and the index becomes
\begin{equation}\label{eq:wzwintegral}
    \lim_{t\to 1}\mathcal{I}_{N,p}(\mathfrak{n}) = \frac{1}{N!}\oint_{\left(S^1\right)^N} \prod_{a=1}^N \frac{ds_a}{2 \pi i s_a} s_a^{-\mathfrak{n}}  \frac{\prod_{a \neq b}^N \left( 1-s_a s_b^{-1}\right)}{\prod_{a,b=1}^N \left(1-z s_a s_b^{-1}\right)}  \prod_{a=1}^N \prod_{i=1}^p \frac{1}{1- x_i s_a}\,.
\end{equation}
This integral can be interpreted as the partition function counting gauge invariant states in the following quantum mechanics.

\paragraph{Chern-Simons quantum mechanics}
We consider a gauged quantum mechanics with a $U(N)$ gauge symmetry. The model includes a gauge field $\alpha$, complex adjoint scalar $Z$ and $p$ fundamental scalars $\varphi_i$ with $i=1,\ldots,N$. The following action was first considered by \cite{Polychronakos:2001mi} as a matrix model description of the quantum Hall effect:
\begin{equation}
    S = \int dt \left[ i \text{tr} (Z^{\dagger} \mathcal{D}_t Z) + i \sum_{i=1}^p \varphi_i^{\dagger} \mathcal{D}_t \varphi_i - (\mathfrak{n}+p)\tr \alpha + m \tr Z Z^{\dagger} \right] \,,
\end{equation}
in the above $\mathfrak{n}$ is a positive integer that is identified with our line operator charge, and $m$ corresponds to the ADHM axial mass. The covariant derivatives act by:
\begin{equation}
\begin{split}
    \mathcal{D}_t Z &= \partial_t Z - i [\alpha,Z] \\
    \mathcal{D}_t \varphi &= \partial_t \varphi_i - i \alpha \varphi_i\,.
\end{split}    
\end{equation}
This model was further studied in \cite{Dorey:2016mxm} from the perspective of vortex dynamics in a $2+1$d Chern-Simons-matter theory, we expect this quantum mechanics to be related to the topological quantum mechanics of Higgs/Coulomb branch operators in $\Omega$-deformed $\mathcal{N}=4$ theories, as in \cite{Gaiotto:2019wcc}. In \cite{Dorey:2016hoj} the model is canonically quantised and the partition function gives rise to a contour integral expression that we identify with (\ref{eq:wzwintegral}). 

\subsection{Evaluating the partition function}
The poles of the integral (\ref{eq:wzwintegral}) are parametrised by $p$-coloured vertical Young diagrams with $N$ total boxes, specifically: $s_a=z^{-k_a}x_a^{-1}$ and $k_1+\ldots +k_p=N$. Evaluating the residues at these poles we find:
\begin{equation}
\lim_{t\to 1}\mathcal{I}_{N,p}(\mathfrak{n}) = \sum_{k_1 + \ldots + k_p = N} \prod_{i=1}^p (x_i^{k_i}z^{\frac{1}{2}k_i(k_i-1)})^{\mathfrak{n}} \prod_{i,j=1}^p \frac{1}{(z^{k_i-k_j+1}x_i/x_j;z)_{k_j}}\,.
\end{equation}
Using specialised Macdonald polynomial raising operators, see appendix \ref{appendix:handsawmilne}, one can show that this expression is in fact a polynomial in $x_i$. In particular it is a Milne polynomial in the $x_i$ variables labelled by the partition $(\mathfrak{n}^N)$ and with parameter $z$:
\begin{equation}
    \lim_{t\to 1}\mathcal{I}_{N,p}(\mathfrak{n}) = \frac{1}{(z;z)_N} Q'_{(\mathfrak{n}^N)}(x_1,\ldots,x_p;z)\,.
\end{equation}
Milne polynomials have a positive integral Schur expansion via the Kostka polynomial transition matrix:
\begin{equation}
    Q'_{\mu}(x;q) = \sum_{\lambda}K_{\mu \lambda}(q) S_{\lambda}(x)\,.
\end{equation}
Consequently, we expect this boundary condition to correspond to a simple module of Nakajima and Kodera's \cite{Kodera:2016faj} Coulomb branch algebra $\mathcal{A}_{N,p}^C$ for the ADHM quiver with $p \ge 1$ flavours. It would be interesting to study such modules abstractly.

We further support this claim by analogy with the $T[SU(N)]$ theory. In this example the Higgs branch is the cotangent bundle to a complete flag variety $\mathcal{M}_H = T^*F_N$. The holomorphic block integral for this Neumann boundary condition with an appropriate Wilson line insertion coincides with the contour integral form of a Schur polynomial $s_\lambda$ in the $G_H$ fugacities, i.e. realises a simple module for the chiral ring. Furthermore, the $T[SU(N)]$ analogy to the geometric construction in the following subsection is simply the Borel-Weil-Bott theorem where simple modules of $\mathfrak{sl}_N$ are realised as sections of holomorphic line bundles over the flag $F_N$---this is the analogous Lagrangian in the $T[SU(N)]$ case to the Hanany-Tong Lagrangian that we discuss in more detail below.\footnote{We note a curious difference with in this analogy in that the Hanany-Tong moduli space is a non-compact Lagrangian and is not the core of the Jordan quiver. Indeed, it only contains a subset of the fixed points corresponding to column Young diagrams. In the $T[SU(N)]$ Higgs branch there is one Lagrangian, the compact core, that gives rise to a finite dimensional simple module. It is unclear if the Hanany-Tong Lagrangian is the only Lagrangian in $\mathcal{M}_{N,p}$ that gives rise to simple modules.} Thus we propose that this section of the paper can be viewed as describing a Borel-Weil-Bott analogue for the ADHM chiral algebra acting on the cohomology of line bundles over the Hanany-Tong Lagrangian. Further details of the $T[SU(N)]$ example are discussed in appendix \ref{appendix:tsunexamples}.

\subsection{IR image}\label{subsec:IRimage}
There is a distinguished Lagrangian sub-manifold $\mathcal{V}_{N,p}$ of dimension $2pN = \frac{1}{2}\text{dim}(\mathcal{M}_H)$ in the ADHM quiver that we refer to as the Hanany-Tong \cite{Hanany:2003hp} vortex moduli space.\footnote{Note this has a dual life as the moduli space of vortices in (N)-[N], as studied in \cite{Crew:2020jyf}, but does not correspond to the moduli space of vortices in the ADHM theory. In our context the Hanany-Tong vortex moduli space is a Lagrangian in the Higgs branch of the ADHM theory.} The field content, as shown in figure \ref{fig:ht}, arises from the naive Lagrangian splitting of the hypermultiplets in the ADHM quiver.

This moduli space is a simple example of a handsaw quiver and inherits a group action $G = U(1) \times SU(p)$ from $\mathcal{M}_{N,p}$, we introduce fugacities $(z,x_1,\ldots,x_p)$ for $G$. The fixed points are described in \cite{Nakajima:2011yq} and are in 1-1 correspondence with $p$ column Young diagrams with $\sum k_i = N$ boxes---these fixed points are a subset of the coloured Young diagram $\lambda^{(i)}$ fixed points of $\mathcal{M}_H$ and coincide with the poles of the integral (\ref{eq:wzwintegral}). We denote a fixed point by $x(k) \in \mathcal{V}_{N,p}^G$. As a one node Nakajima quiver variety, $\mathcal{V}_{N,p}$ has a single tautological line bundle $\mathcal{D}$. The character of sections of this line bundle at a fixed point $x(k)$ is given by \cite{Crew:2020jyf}:
\begin{equation}
    \text{ch}_{x(k)}\mathcal{D} = \prod_{i=1}^p \left(x_i^{k_i}z^{\frac{1}{2}k_i(k_i-1)}\right)\,.
\end{equation}
The group action on the tangent bundle at a fixed point has character:\footnote{This can be derived using the fact that the handsaw quiver is a submanifold of instanton moduli space \cite{Nakajima:2011yq}.}
\begin{equation}
    \text{ch} T_{x(k)} \mathcal{V}_{N,p} = \sum_{i,j=1}^p \frac{x_i}{x_j} \sum_{s=0}^{k_{j}-1}q^{k_{i}-k_j+1+s} \,.
\end{equation}
Atiyah-Hirzebruch-Riemann-Roch\footnote{See \cite{Pestun:2016qko} for a review.} localisation then gives a formula for the equivariant Euler characteristic of tensor products of this line bundle:
\begin{equation}
\begin{split}
    \chi(\mathcal{D}^{\otimes \mathfrak{n}};\mathcal{V}_{N,p}) &= \sum_{{x(k)} \in \mathcal{V}_{N,p}^G}  \left(\text{ch}_{x(k)}\mathcal{D}\right)^{\mathfrak{n}} \text{PE} \left[ \text{ch}T_{x(k)}\mathcal{V}_{N,p} \right] \\
    &= \lim_{t\to 1}\mathcal{I}_{N,p}(\mathfrak{n})\,.
\end{split}    
\end{equation} 
We conclude that the half index of this boundary condition in the presence of $\mathcal{W}_{\mathfrak{n}}$ can be interpreted as a count of holomorphic sections of the tautological line bundle on the Hanany-Tong Lagrangian $\mathcal{V}_{N,p}$.
We further note that setting the Wilson line charge $\mathfrak{n}$ to zero leaves a non-trivial half index since $\mathcal{V}_{N,p}$ is non-compact and the half index recieves contributions from operators arising from the non-trivial holomorphic functions on $\mathcal{V}_{N,p}$.

\begin{figure}
\centering

\tikzset{every picture/.style={line width=0.75pt}} 

\begin{tikzpicture}[x=0.75pt,y=0.75pt,yscale=-1,xscale=1]

\draw   (275,112) .. controls (275,98.19) and (286.19,87) .. (300,87) .. controls (313.81,87) and (325,98.19) .. (325,112) .. controls (325,125.81) and (313.81,137) .. (300,137) .. controls (286.19,137) and (275,125.81) .. (275,112) -- cycle ;
\draw   (280,187) -- (320,187) -- (320,224.05) -- (280,224.05) -- cycle ;
\draw    (295,137) -- (295,184) ;
\draw [shift={(295,187)}, rotate = 270] [fill={rgb, 255:red, 0; green, 0; blue, 0 }  ][line width=0.08]  [draw opacity=0] (8.93,-4.29) -- (0,0) -- (8.93,4.29) -- cycle    ;
\draw    (320,97) .. controls (399.53,78.02) and (312.12,9.55) .. (310.04,84.68) ;
\draw [shift={(310,87)}, rotate = 270.24] [fill={rgb, 255:red, 0; green, 0; blue, 0 }  ][line width=0.08]  [draw opacity=0] (8.93,-4.29) -- (0,0) -- (8.93,4.29) -- cycle    ;
\draw    (305,187) -- (305,140) ;
\draw [shift={(305,137)}, rotate = 450] [fill={rgb, 255:red, 0; green, 0; blue, 0 }  ][line width=0.08]  [draw opacity=0] (8.93,-4.29) -- (0,0) -- (8.93,4.29) -- cycle    ;
\draw    (280,97) .. controls (201.46,77.69) and (288.56,9.55) .. (289.98,84.68) ;
\draw [shift={(290,87)}, rotate = 270.24] [fill={rgb, 255:red, 0; green, 0; blue, 0 }  ][line width=0.08]  [draw opacity=0] (8.93,-4.29) -- (0,0) -- (8.93,4.29) -- cycle    ;
\draw   (151.65,111.82) .. controls (151.65,98.01) and (162.84,86.82) .. (176.65,86.82) .. controls (190.46,86.82) and (201.65,98.01) .. (201.65,111.82) .. controls (201.65,125.62) and (190.46,136.82) .. (176.65,136.82) .. controls (162.84,136.82) and (151.65,125.62) .. (151.65,111.82) -- cycle ;
\draw   (156.65,186.82) -- (196.65,186.82) -- (196.65,223.87) -- (156.65,223.87) -- cycle ;
\draw    (175.65,186.82) -- (175.65,139.82) ;
\draw [shift={(175.65,136.82)}, rotate = 450] [fill={rgb, 255:red, 0; green, 0; blue, 0 }  ][line width=0.08]  [draw opacity=0] (8.93,-4.29) -- (0,0) -- (8.93,4.29) -- cycle    ;
\draw    (166,88) .. controls (116.95,39.93) and (224.54,36.38) .. (186.83,84.75) ;
\draw [shift={(185,87)}, rotate = 310.2] [fill={rgb, 255:red, 0; green, 0; blue, 0 }  ][line width=0.08]  [draw opacity=0] (8.93,-4.29) -- (0,0) -- (8.93,4.29) -- cycle    ;

\draw (218,148.4) node [anchor=north west][inner sep=0.75pt]  [font=\LARGE]  {$\subset $};
\draw (217,113.4) node [anchor=north west][inner sep=0.75pt]    {$\omega =0$};
\draw (297,202.4) node [anchor=north west][inner sep=0.75pt]    {$p$};
\draw (293,104.4) node [anchor=north west][inner sep=0.75pt]    {$N$};
\draw (171,199.4) node [anchor=north west][inner sep=0.75pt]    {$p$};
\draw (168.65,104.22) node [anchor=north west][inner sep=0.75pt]    {$N$};

\end{tikzpicture}

		\caption{The Hanany-Tong vortex moduli space Lagrangian $\mathcal{V}_{N,p} \subset \mathcal{M}_{N,p}$.}\label{fig:ht}
\end{figure}
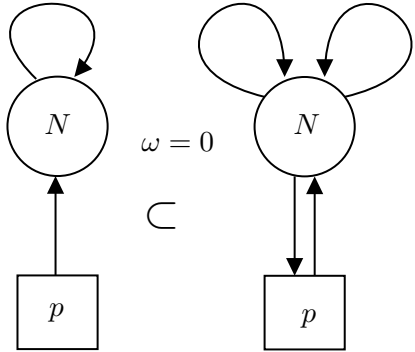


\section{Outlook}\label{sec:conclusion}
In this work we have discussed several combinatoric, geometric and algebraic aspects of the hemisphere partition functions of the worldvolume theory on a stack of $N$ M2-branes described in terms of the UV 3d ADHM theory. We conclude with some directions for further research.

\paragraph{Geometric interpretation of Cardy limit} In this work we looked at the geometric interpretation of a limit $q \to 0$. It would be interesting to investigate the geometric interpretation of the Cardy limit $q \to 1$ as recently studied in \cite{Choi:2019dfu,Choi:2019zpz}. We briefly discuss some ideas in this direction. Consider the $N=1$ case where the adjoint field in 3d ADHM decouples and the theory is $\text{SQED}[1]$. In this case, we can explicitly re-sum the vortex contributions to the block using the $q$-binomial theorem:
\begin{equation}
    \mathcal{Z}_{\text{Vortex}} = \sum_{k}(\zeta t)^k \frac{(tq;q)_{k}}{(q;q)_{k}}=\frac{(t \zeta ;q)_{\infty}}{(\zeta;q)_{\infty}} = \text{PE}\left[ \frac{1-t}{1-q}\zeta\right]\,.
\end{equation}
The $q \to 1$ limit on the right hand side of this equation can be understood as the numerical Donaldson-Thomas invariants of a particular dual quiver as described in e.g. \cite{Ekholm:2018eee}. It would be interesting to upgrade this calculation to general $N$. We also remark that in certain cases, e.g. for the column vacuum $\lambda=(1^N)$, the vortex partition function coincides with a one-point torus block with modular transformation properties in $q$, this is a possible route to relating the geometric interpretation of the $q \to 1$ and $q \to 0$ limits.

\paragraph{Simple modules of the Coulomb branch algebra} It would be interesting to study in more detail the UV Neumann boundary condition of section \ref{sec:qm} that leads to simple modules of the chiral ring. For example, to the authors' knowledge, a careful mathematical understanding of the simple modules of the Nakajima-Kodera algebra is currently lacking in the literature.

\paragraph{Cardy block and Hanany-Tong moduli space} Finally, we remark that in the works \cite{Choi:2019dfu,Choi:2019zpz} the macroscopic black hole states are dominated by a particular holomorphic block associated to the vacuum $\lambda = (1^N)$. In the general $p \ge 1$ flavour case, such column fixed points are the fixed points contained in the Hanany-Tong Lagrangian discussed in section \ref{subsec:IRimage}. The geometry of this Lagrangian appears closely connected to the Cardy block---it would be interesting to combine this observation with the fact that the Hanany-Tong Lagrangian yields simple modules of the Coulomb branch chiral algebra.

\section*{Acknowledgements}
The authors would like to thank Alec Barns-Graham, Andrey Smirnov, Yakov Kononov, Tadashi Okazaki and Gjergji Zaimi for many useful discussions. We especially thank Mathew Bullimore for helpful comments on a draft of the paper. This work has been partially supported by STFC consolidated grants ST/P000681/1, ST/T000694/1.

\appendix

\section{Combinatorics, Polynomials and Characters}\label{appendix:combinatorics}
\subsection{Partitions and Pochhammer symbols}\label{appendix:conventions}
\paragraph{Partitions}
Our conventions for partitions and statistics on partitions coincide with those of \cite{macdonald1998symmetric}. In particular, we use the ``English'' convention for the diagram associated to a partition, see figure \ref{fig:partitionconvention} for an example. Partitions are specified by their parts $\lambda=(\lambda_1,\lambda_2,\ldots)$, we make use of the shorthand $\lambda = (1^{m_1},2^{m_2},...)$ to express a partition in terms of it's multiplicities $m_i= m_{i}(\lambda)$. The transpose partition is denoted $\lambda^{\vee}$. Partitions can be written as Young diagrams in $\mathbb{Z}^2$ with boxes labelled by $s=(i,j) \in \lambda$, where $(i,j)$ run over the rows and columns respectively. The arm and leg lengths of $s \in \lambda$ are defined as follows:
\begin{equation}
\begin{split}
    a_{\lambda}(s) &= \lambda_{i} - j, \\
    l_{\lambda}(s) &= \lambda^\vee_j-i\,.
\end{split}    
\end{equation}
The hook and the content of a box $s$ are given by:
\begin{equation}
\begin{split}
    h_{\lambda}(s) &= a_{\lambda}(s) + l_{\lambda}(s) +1\,, \\
    c_{\lambda}(s) &= j-i\,.
\end{split}
\end{equation}
We use the following shorthand notation for the sums over arm and leg lengths in a partition:
\begin{equation}
\begin{split}
    n(\lambda) &= \sum_{s \in \lambda} l_{\lambda}(s)\,, \\
    n(\lambda^{\vee}) &= \sum_{s\in \lambda} a_{\lambda}(s)\,.
\end{split}    
\end{equation}
In terms of which the sums over hook and content can be expressed as:
\begin{equation}\label{eq:sumofcontentshooks}
\begin{split}
    \sum_{s\in\lambda} c_{\lambda}(s) &= n(\lambda^\vee) -n(\lambda)  = \sum_{(i,j)\in\lambda} j-i\,,\\
    \sum_{s\in \lambda} h_{\lambda}(s) &= n(\lambda)+n(\lambda^\vee)+|\lambda|=\sum_{(i,j)\in\lambda} i+j-1\,.
\end{split}
\end{equation}
We also make use of a particular weight defined by
\begin{equation}
    ||\lambda||^2 \equiv \sum_i \lambda_i^2\, .
\end{equation}
The sums over hook and content can be expressed in terms of this weight:
\begin{equation}\label{eq:hookandcontentasnorm}
\begin{split}
    \sum_{s\in\lambda} c_{\lambda}(s) &= \frac{1}{2}(||\lambda||^2 - ||\lambda^\vee||^2) \\
    \sum_{s\in \lambda} h_{\lambda}(s) &= \frac{1}{2}(||\lambda||^2+||\lambda^\vee||^2) 
\end{split}    
\end{equation}
We write $|\lambda|=\sum_{i} \lambda_i$ for the total number of boxes in a partition $\lambda$ (the weight) and $l(\lambda)$ for the length, i.e. the total number of parts. Partitions are partially ordered by the dominance ordering; we write $\lambda \le \mu$ whenever
\begin{equation}\label{eq:dominanceordering}
    \lambda_1 + \ldots + \lambda_k \le \mu_1 + \ldots + \mu_k
\end{equation}
holds for all $k \ge 1$. The sum of contents $\sum c_{\lambda}(s)$ respects this partial order since whenever $\mu$ dominates $\lambda$ we have
\begin{equation}
    \sum_{s\in\lambda} c_{\lambda}(s) \le \sum_{s\in \mu} c_{\mu}(s).
\end{equation}
\begin{figure}
\centering
\begin{subfigure}[t]{0.45\textwidth}
\centering
\begin{tikzpicture}[x=0.75pt,y=0.75pt,yscale=-1,xscale=1]

\draw   (270,90) -- (290,90) -- (290,110) -- (270,110) -- cycle ;
\draw   (290,90) -- (310,90) -- (310,110) -- (290,110) -- cycle ;
\draw  [fill={rgb, 255:red, 126; green, 211; blue, 33 }  ,fill opacity=1 ] (310,90) -- (330,90) -- (330,110) -- (310,110) -- cycle ;
\draw   (330,90) -- (350,90) -- (350,110) -- (330,110) -- cycle ;
\draw   (270,110) -- (290,110) -- (290,130) -- (270,130) -- cycle ;
\draw   (290,110) -- (310,110) -- (310,130) -- (290,130) -- cycle ;
\draw   (270,130) -- (290,130) -- (290,150) -- (270,150) -- cycle ;
\end{tikzpicture}
\caption{The partition $\lambda=(4,2,1)$ with the box $s=(1,3)$ highlighted.}\label{fig:partitionconvention}
\end{subfigure}
\hfill
\begin{subfigure}[t]{0.45\textwidth}
\centering
\includegraphics[width=3.5cm]{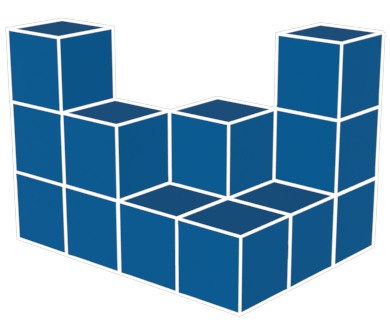}
\caption{An example reverse plane partition with base $\lambda$. Here $\pi_{(1,3)}=2$.}\label{fig:rppexample}
\end{subfigure}
\caption{Conventions for partitions and reverse plane partitions.}
\end{figure}

\paragraph{Skew diagrams}
If $\lambda$ and $\mu$ are two partitions then $\mu \subset \lambda$ means that the diagram for $\mu$ is a subset of the diagram for $\lambda$. The set $\lambda/\mu = \{\theta_i = \lambda_i-\mu_i \enspace | \enspace i=1,2,\ldots \}$ is called a \textit{skew diagram}. A skew diagram $\theta$ is \textit{connected} if all of the boxes in $\theta$ share at least one common side. We say $\theta$ is a \textit{border strip} of a partition $\lambda$ if $\theta$ is contained in $\lambda$ and $\theta$ is connected with no $2\times2$ blocks of boxes. The height $\text{ht}$ of a border strip is defined to be one less than the number of rows it occupies, similarly $\text{ht}'$ is one less than the number of columns it occupies. We further say $\theta$ is a \textit{maximal border strip} of $\lambda$ if the box $s=(i,j)\in \theta$ with maximal content is such that $s=(i+1,j)$ is not in $\lambda$ and the box $s'=(i',j')$ with minimal content is such that $(i',j'+1)$ is not in $\lambda$. 

Every skew diagram $\lambda/\mu$ can be uniquely decomposed into maximal border strips. We define $b(\lambda/\mu)$ to be the number of maximal border strips in this decomposition. The height of a skew diagram $\text{ht}(\lambda/\mu)$ is the sum of the heights of the maximal border strips in the decomposition of $\lambda/\mu$, similarly $\text{ht}(\lambda/\mu)$ is the sum of all the primed heights $\text{ht}'$ of the maximal border strips in the decomposition.

\paragraph{Reverse plane partitions}\label{appendix:hillman}
A reverse plane partition (RPP) $\pi$ with base $\lambda$ is a 3d partition with non-negative integer heights $\pi_s$ above each box $s \in \lambda$ such that $\pi_s$ weakly decrease along the rows and columns of $\lambda$, an example is shown in figure \ref{fig:rppexample}. We write $|\pi|=\sum_{s\in\lambda} \pi_s$ for the total number of boxes in the reverse plane partition.
An RPP can be thought of in terms of layers of skew shapes $\lambda/\mu_i$, with $i=1,2,\ldots$ stacked on top of each other. We define the following statistics on RPPs in terms of their skew counterparts as follows:
\begin{equation}
\begin{split}
    \text{ht}(\pi) &= \sum_{i\ge1} \text{ht}(\lambda/\mu_i), \\
    \text{ht}'(\pi) &= \sum_{i\ge1} \text{ht}'(\lambda/\mu_i), \\
    b(\pi) &= \sum_{i\ge1} b(\lambda/\mu_i)\,.
\end{split}    
\end{equation}
These statistics are involved in the refined sum over RPPs in section \ref{subsec:pplimit}.

\paragraph{Pochhammer symbols}
In this section we summarise the $q$-Pochhammer function definitions and identities used throughout the work. The $q$-Pochhammer symbol, convergent for $|q|<1$, is defined by:
\begin{equation}\label{eq:qpochhammer}
    (x;q)_{\infty} = \prod_{i=0}^{\infty}(1-q^i x).
\end{equation}
The analytic continuation for $|q|>1$ is:
\begin{equation}\label{eq:qpochhameranalyticcontinuation}
    (x;q^{-1})_{\infty} = \prod_{i=0}^{\infty}\frac{1}{1-q^{i+1}x}\,.
\end{equation}
The finite $q$-Pochhammer symbol is defined by:
\begin{equation}
    (x;q)_{n} = \frac{(x;q)_{\infty}}{(x q^n;q)_{\infty}}.
\end{equation}
For integer $n$ this expression reduces to a finite product:
\begin{equation}
    (x;q)_n = 
    \begin{cases} 
    \prod_{i=0}^{n-1}(1-q^i x) &\text{ if } n\geq0, \\
    \prod_{i=1}^{|n|}\frac{1}{1-q^{-j}x} &\text{ if } n < 0.
    \end{cases}
\end{equation}
We note the identity:
\begin{equation}\label{eq:twistedindexfusionrule}
    \left(xq^{\frac{n}{2}};q\right)_{\infty}\left(xq^{-\frac{n}{2}};q^{-1}\right)_{\infty} = \left(xq^{\frac{n}{2}};q\right)_{-n+1}
\end{equation}

\subsection{Symmetric functions and characters}\label{appendix:symmetricfuns}
In this subsection we review symmetric functions and Macdonald polynomials following \cite{macdonald1998symmetric}. We also prove a generalisation of the usual Macdonald Cauchy identity and discuss the connection between Milne polynomials and characters of Kirrilov-Reshetikhin modules.

\paragraph{Symmetric Functions}
We begin with a review of symmetric functions. We denote by $\Lambda_N = \mathbb{Z}[x_1,\ldots,x_N]^{S_N}$ the ring of symmetric functions in $N$ variables, the set of variables is denoted $X = \{ x_1,\ldots,x_N\}$. $\Lambda$ denotes the ring of symmetric functions in infinitely many variables, understood as infinite formal sums of monomials. The ring of symmetric functions over $\mathbb{Q}$, denoted $\Lambda_{\mathbb{Q}}$, is generated by the power sum symmetric functions $p_n$:
\begin{equation}
    p_n(X) \equiv \sum_{i\ge1} x_i^n\,.
\end{equation}
Any function $f\in\Lambda_{\mathbb{Q}}$ can be expanded in power sum symmetric functions. We occasionally make use of the simple plethystic substitution where $f\left(\frac{X}{1-q}\right)$ means in the power sum expansion we replace:
\begin{equation}
    p_n(X) \to \frac{1}{1-q^n}p_{n}(X)\,.
\end{equation}
Monomial symmetric functions are a basis of $\Lambda_N$ labelled by partitions $\lambda$ with $l(\lambda) \le N$ defined by:
\begin{equation}
    m_{\lambda}(X) = \sum x_1^{\lambda_1}\ldots x_N^{\lambda_N}
\end{equation}
where the sum is taken over all permutations of $\lambda=(\lambda_1,\ldots,\lambda_N)$.

A ubiquitous basis for the ring of symmetric functions is given by the Schur polynomials. Schur polynomials are labelled by a partition $\lambda$ with $l(\lambda) \le N$ and are defined by:
\begin{equation}
    s_{\lambda}(X) \equiv \frac{\text{det}_{1\le i,j \le N}\left(x_i^{\lambda_j+N-j}\right)}{\text{det}_{1 \le i,j \le N} \left(x_i^{N-j}\right)}\,.
\end{equation}
Schur polynomials are homogeneous of degree $|\lambda|$. The transition matrix between the Schur basis and the monomial basis defines the Kostka numbers $K_{\lambda \mu}$:
\begin{equation}\label{eq:kostkanumbers}
    s_{\lambda} = \sum_{\mu} K_{\lambda \mu} m_{\mu}\,.
\end{equation}

\paragraph{Macdonald Polynomials}
Macdonald polynomials $P_{\lambda}(X;q,t)$ are two parameter generalisations of the Schur and monomial symmetric functions. They are symmetric functions in $\Lambda_{q,t} = \Lambda \otimes_{\mathbb{Z}} \mathbb{Q}(q,t)$ and homogeneous of degree $|\lambda|$. Macdonald \cite{macdonald1998symmetric} proves existence and uniqueness theorems for these polynomials in terms of their monomial expansion and orthogonality properties. In this subsection we focus on the properties of Macdonald polynomials relevant to the present work. The degeneration limits to Schur and monomial symmetric functions are as follows:
\begin{equation}
\begin{split}
    &P_{\lambda}(X;q,q) = s_{\lambda}\,, \\
    &P_{\lambda}(X;q,1) = m_{\lambda}\,.
\end{split}
\end{equation}
Macdonald polynomials also degenerate to the one parameter Hall-Littlewood polynomials $P_{\lambda}(X;t)$ in the limit $q\to0$. Hall-Littlewood polynomials enjoy an explicit sum formula:
\begin{equation}\label{eq:HLpolys}
    P_{\lambda}(X;0,t)=P_{\lambda}(X;t) = \prod_{i\ge0}\prod_{j=1}^{m_i(\lambda)}\frac{1-t}{1-t^j} \sum_{\sigma \in S_N} x_1^{\lambda_1} \ldots x_N^{\lambda_N} \prod_{i<j}\frac{1-tx_j/x_i}{1-x_j/x_i}\,.
\end{equation}
where the permutations $\sigma \in S_N$ act on the variables $X$.

Macdonald polynomials satisfy a Cauchy identity:
\begin{equation}\label{eq:macdonaldcauchy}
\begin{split}
    \sum_{\lambda}P_{\lambda}(X;q,t)Q_{\lambda}(Y;q,t) &= \prod_{\substack{x \in X \\ y\in Y}} \frac{(txy;q)_{\infty}}{(xy;q)_{\infty}} = \exp(\sum_{n>0}\frac{1}{n}\frac{1-t^n}{1-q^n}p_n(X)p_n(Y)) \\ &\equiv \Pi_{q,t}(X,Y)\,.
\end{split}    
\end{equation}
where $Q_{\lambda}(X;q,t)$ is a modified normalisation of the Macdonald polynomial given by:
\begin{equation}
    Q_{\lambda}(X;q,t)=b_{\lambda}(q,t)P_{\lambda}(X;q,t)
 \end{equation}
and the normalisation constant is defined as follows:
\begin{equation}
\begin{split}
    c_{\lambda}(q,t) &= \prod_{s\in\lambda}\left(1-q^{a_{\lambda}(s)}t^{l_{\lambda}(s)+1}\right)\,, \\
    c_{\lambda^{\vee}}(q,t) &= \prod_{s\in\lambda}\left(1-q^{a_{\lambda}(s)+1}t^{l_{\lambda}(s)}\right) \,,\\
    b_{\lambda}(q,t) &\equiv \frac{c_{\lambda}(q,t)}{c_{\lambda^\vee}(q,t)}\,.
\end{split}    
\end{equation}
These normalisation constants have finite limits as $q \to 0$ and we define the normalised Hall-Littlewood polynomial $Q_{\lambda}(X;t)$ similarly.

One can define an inner product on $\Lambda_{q,t}$ as follows:
\begin{equation}\label{eq:macdonaldinnerproduct}
    \langle f,g \rangle_{q,t} = \oint d\mu\left[X;q,t\right]f(\bar{X})g(X)\,,
\end{equation}
where the contour is a product of unit circles and here and throughout this appendix the variables $\bar{X}$ denote the set of inverse variables $\bar{X}=\{x_1^{-1},\ldots x_N^{-1}\}$. The Macdonald measure $d\mu$ is defined as follows:
\begin{equation}\label{eq:macdonaldmeasure}
    d\mu\left[X;q,t\right] = \frac{1}{N!}\prod_{i=1}^N \frac{dx_i}{2 \pi x_i} \prod_{i\neq j}^N \frac{(x_i/x_j;q)_\infty}{(t x_i/x_j;q)_{\infty}} 
\end{equation}
and the normalisation constant is:
\begin{equation}\label{eq:macdonorm}
    \langle P_{\lambda},P_{\mu}\rangle_{q,t}= \frac{\tilde{c}_N(\lambda;q,t)}{b_{\lambda}(q,t)} \delta_{\lambda \mu}\,.
\end{equation}
We do not discuss $\tilde{c}_N$ for general $q,t$ in this work but we do make use of the Hall-Littlewood limit:
\begin{equation}\label{eq:HLlimit}
    \lim_{q\to 0} \tilde{c}_N(\lambda;q,t)=\frac{(1-t)^N}{(t;t)_{N-l(\lambda)}}\,.
\end{equation}

\paragraph{Integral representation}
Macdonald polynomials can be realised explicitly as iterated contour integrals \cite{Awata:1995eh}. The integral is constructed inductively using the following two observations:
\begin{equation}\label{eq:macdonaldfacts}
\begin{split}
    P_{\lambda+(s^r)}(x_1,\ldots,x_r;q,t) &= x_1^s \ldots x_r^s P_{\lambda}(x_1,\ldots,x_r;q,t) \\
    P_{\lambda}(x_1,\ldots,x_n;q,t) &= C_{\lambda}(q,t) \oint d\mu[w_1,\ldots,w_m] \Pi_{q,t}(\bar{W},X) P_{\lambda}(w_1,\ldots,w_m;q,t)
\end{split}    
\end{equation}
where $C_{\lambda}(q,t)$ is a constant that we do not consider in generality in this work. Using these identities, \cite{Awata:1995eh} prove that Macdonald polynomials can be expressed as:
\begin{equation}\label{eq:macdonaldintegralrepresentation}
    P_{\lambda}(X;q,t)=C(q,t)\oint \prod_{a=1}^N d\mu[W^{(a)}] \Pi_{q,t}(W^{(a+1)},\bar{W}^{(a)}) \prod_{j=1}^{r_a} (w_{j}^{(a)})^{s_a} 
\end{equation}
where each set $W^{(a)}$ consists of $r_a$ integration variables and the last set $W^{(N+1)}$ is identified with the variables $X$. The partition $\lambda$ can be expressed as:
\begin{equation}
    \lambda = (s_N^{r_N}) + \ldots + (s_1^{r_1})\,.
\end{equation}

\paragraph{Skew Macdonald polynomials}
Macdonald polynomials form an algebra with structure constants $f_{\mu \nu}^{\lambda}(q,t)$ defined as follows:
\begin{equation}\label{eq:macdonaldstructureconstants}
    P_{\mu}P_{\nu} = \sum_{\lambda}f_{\mu \nu}^{\lambda} P_{\lambda}\,.
\end{equation}
The structure constants vanish unless $\mu \subset \lambda$ and $\nu \subset \lambda$ and under those conditions, skew Macdonald polynomials are defined by:
\begin{equation}
    Q_{\lambda/\mu}(X;q,t) \equiv \sum_{\nu} f^{\lambda}_{\mu \nu}(q,t)Q_{\nu}(X;q,t)\,.
\end{equation}
These polynomials are homogeneous of degree $|\lambda|-|\mu|$. An alternative normalisation is given by:
\begin{equation}
    P_{\lambda/\mu}(X;q,t) = \frac{b_{\mu}(q,t)}{b_{\lambda}(q,t)}Q_{\lambda/\mu}(X;q,t)\,.
\end{equation}
Skew Macdonald polynomials satisfy a skew Cauchy identity:
\begin{equation}\label{eq:skewcauchy}
    \sum_{\lambda} P_{\rho/\lambda}\left(X;q,t\right)Q_{\rho/\mu}\left(Y;q,t\right) = \Pi_{q,t}\left(X,Y\right) \sum_{\rho} P_{\mu/\rho}\left(X;q,t\right) Q_{\lambda/\rho}\left(Y;q,t\right)\,.
\end{equation}

\begin{lemma}
Skew Macdonald polynomials satisfy the following generalised Cauchy identity:
\begin{equation}
    \sum_{\lambda,\mu}A^{|\lambda|}Q_{\mu/\lambda}(X;q,t)P_{\mu/\lambda}(Y;q,t) = \prod_{k=0}^{\infty}\frac{1}{1-A^{k+1}}\Pi_{q,t}\left(A^kX,Y\right)\,.
\end{equation}
\end{lemma}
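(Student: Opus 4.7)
Denote the left hand side by $S(X;Y)$. The strategy is to derive a functional equation $S(X;Y) = \Pi_{q,t}(X,Y)\,S(AX;Y)$ and then iterate. The starting point is the skew Cauchy identity (\ref{eq:skewcauchy}), specialised to $\mu = \lambda$: combining it with the symmetry $Q_{\mu/\lambda}(X)P_{\mu/\lambda}(Y)=P_{\mu/\lambda}(X)Q_{\mu/\lambda}(Y)$ (which follows from $Q_{\mu/\lambda}=(b_\mu/b_\lambda)P_{\mu/\lambda}$) and the symmetry $\Pi_{q,t}(X,Y)=\Pi_{q,t}(Y,X)$ of the Cauchy kernel, one obtains the key recursion
\begin{equation*}
    \sum_{\nu\supset\lambda} Q_{\nu/\lambda}(X;q,t)\,P_{\nu/\lambda}(Y;q,t) \;=\; \Pi_{q,t}(X,Y)\sum_{\tau\subset\lambda}Q_{\lambda/\tau}(X;q,t)\,P_{\lambda/\tau}(Y;q,t).
\end{equation*}

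Next, I would plug this into $S(X;Y)=\sum_\lambda A^{|\lambda|}\sum_{\mu\supset\lambda}Q_{\mu/\lambda}(X)P_{\mu/\lambda}(Y)$, rename the dummy partitions, and use the homogeneity $Q_{\mu/\lambda}(AX)=A^{|\mu|-|\lambda|}Q_{\mu/\lambda}(X)$ to absorb the extra factor of $A^{|\mu|-|\lambda|}$ that appears. After this manipulation one finds
\begin{equation*}
    S(X;Y) \;=\; \Pi_{q,t}(X,Y)\,S(AX;Y),
\end{equation*}
which iterated $n$ times gives $S(X;Y)=\prod_{k=0}^{n-1}\Pi_{q,t}(A^kX,Y)\,S(A^nX;Y)$. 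Sending $n\to\infty$ and using that $Q_{\mu/\lambda}$ evaluated at the empty alphabet equals $\delta_{\lambda\mu}$ yields $\lim_{n\to\infty}S(A^nX;Y)=\sum_\lambda A^{|\lambda|}=\prod_{k\ge 0}(1-A^{k+1})^{-1}$, giving the claimed product.

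The only non-routine step is justifying this limit, which is the main subtlety. I would argue order-by-order in $A$: the coefficient of $A^M$ in $S(A^nX;Y)=\sum_{\lambda\subset\mu}A^{(1-n)|\lambda|+n|\mu|}Q_{\mu/\lambda}(X)P_{\mu/\lambda}(Y)$ forces $|\mu|\le M$ and $|\lambda|=(n|\mu|-M)/(n-1)$; for $n$ sufficiently large relative to $M$ this is only compatible with $|\lambda|=|\mu|=M$, i.e.\ $\lambda=\mu$. Hence the coefficient of $A^M$ stabilises to the number of partitions of $M$, and $\lim_{n\to\infty}S(A^n X;Y)=\prod_{k\ge 0}(1-A^{k+1})^{-1}$ as formal power series in $A$, with coefficients in the ring of symmetric functions of $X$ and $Y$. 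The infinite product $\prod_{k\ge 0}\Pi_{q,t}(A^kX,Y)$ likewise converges in the same formal sense, completing the proof.
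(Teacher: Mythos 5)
Your proposal is correct and follows essentially the same route as the paper: the functional equation $S(X;Y)=\Pi_{q,t}(X,Y)\,S(AX;Y)$ derived from the skew Cauchy identity plus homogeneity, followed by iteration and identification of the limit with the partition generating function $\prod_{k\ge 1}(1-A^{k})^{-1}$. The only (cosmetic) difference is that the paper evaluates the limit as $F(0,Y)$ using $P_{\lambda/\mu}(0)=\delta_{\lambda\mu}$, whereas you justify it by a coefficient-of-$A^M$ stabilisation argument; these are equivalent in content.
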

\begin{proof}
The method of proof used here is an adaptation of the Schur case found in exercise (28) of Chapter II.5 in Macdonald \cite{macdonald1998symmetric}. We let
\begin{equation}
    F(X,Y;q,t) = \sum_{\lambda,\mu}A^{|\lambda|}Q_{\mu/\lambda}(X;q,t)P_{\mu/\lambda}(Y;q,t)\,.
\end{equation}
Using the identity (\ref{eq:skewcauchy}) and the fact that Macdonald polynomials are homogeneous we can perform the sum over $\mu$ to find:
\begin{equation}
    F(X,Y;q,t) = \Pi_{q,t}(X,Y)\sum_{\lambda,\mu}A^{|\mu|} Q_{\lambda/\mu}\left(AX ;q,t\right)P_{\lambda/\mu}(Y;q,t)\,.
\end{equation}
In other words:
\begin{equation}
    F(X,Y;q,t) = \Pi_{q,t}(X,Y)F(AX,Y;q,t)\,.
\end{equation}
Now, provided $|A|<1$, we can iterate this relation to find:
\begin{equation}
    F(X,Y;q,t) = F(0,Y;q,t)\prod_{k=0}^{\infty}\Pi_{q,t}(A^k X,Y) \,.
\end{equation}
Using the fact that $P_{\lambda/\mu}$ vanishes unless $\mu \subset \lambda$ together with the fact $P_{\lambda/\mu}(0)$ vanishes unless $\mu=\lambda$ (where it equals $1$) we find:
\begin{equation}
    F(0,Y;q;t) = \sum_{\lambda}A^{|\lambda|} = \prod_{k=1}^{\infty}\frac{1}{1-A^k}\,.
\end{equation}
The lemma then follows.
\end{proof}
Later, in appendix \ref{appendix:molienintegrals}, we make use of a plethystically substituted form of this result. Under the power sum replacements:
\begin{equation}
    X \to \frac{X}{1-t}, \quad Y \to \frac{Y}{1-t},
\end{equation}
The Cauchy kernel becomes:
\begin{equation}
    \Pi_{q,t}\left(\frac{X}{1-t},\frac{Y}{1-t}\right) = \exp(\sum_{n>0}\frac{1}{n}\frac{1}{(1-t^n)(1-q^n)}p_n(X)p_n(Y)) = \prod_{\substack{x\in X\\y\in Y}}\prod_{k,l=0}^{\infty}\frac{1}{1-q^k t^l x y}
\end{equation}
and the analogous generalised Cauchy identity is then:
\begin{equation}\label{eq:generalisedcauchyplethy}
    \sum_{\lambda,\mu}A^{|\lambda|}Q_{\mu/\lambda}\left(\frac{X}{1-t};q,t\right)P_{\mu/\lambda}\left(\frac{Y}{1-t};q,t\right) = \prod_{k=0}^{\infty}\frac{1}{1-A^{k+1}}\prod_{\substack{x\in X\\y\in Y}}\prod_{l,m=0}^{\infty} \frac{1}{1-q^lt^m A^k xy}\,.
\end{equation}

\paragraph{Principal specialisation}
When the parameter $|t|<1$ Macdonald polynomials have a principal specialisation formula:
\begin{equation}
    P_{\lambda}(1,t,t^2,\ldots;q,t) = t^{n(\lambda)} \prod_{s\in \lambda} \frac{1}{1-q^{a_{\lambda}(s)}t^{l_{\lambda}(s)+1}}
\end{equation}

\paragraph{Milne polynomials}
We consider an additional normalisation of Macdonald polynomials:
\begin{equation}
    J_{\lambda}(X;q,t) = c_{\lambda}(q,t)P_{\lambda}(X;q,t)\,.
\end{equation}
Together with a plethystic substitution, these Macdonald polynomials have a positive integral Schur expansion in terms of $(q,t)$ Kostka polynomials:
\begin{equation}
    J_{\lambda}\left(\frac{1}{1-t}X;q,t\right) = \sum_{\mu} K_{\mu \lambda}(q,t) s_{\mu}(X)\,.
\end{equation}
Setting $q=0$ degenerates $K_{\mu \lambda}(q,t)$ to the Kostka polynomial $K_{\mu \lambda}(t)$, which itself is a $t$-deformation of the Kostka numbers (\ref{eq:kostkanumbers}), and we recover the one parameter Milne polynomials:
\begin{equation}
    Q'_{\lambda}(X;t) \equiv \sum_{\mu} K_{\mu \lambda}(t) s_{\mu}(X)\,.
\end{equation}
Equivalently, the Milne polynomials $P_{\lambda}'$ and $Q'_{\lambda}$ can be understood as plethystic substitutions $X \to \frac{X}{1-t}$ in the appropriately normalised Hall-Littlewood polynomials (\ref{eq:HLpolys}).

\paragraph{Kirrilov-Reshetikhin characters}
In this section we review the difference operators of \cite{di2018difference} and discuss the connection of this work to Milne polynomials. Milne polynomials realise graded characters of Kirrilov-Reshetikhin modules of $U_{q}(\hat{\mathfrak{sl}}_{N+1})$, these modules are specified by a set of non-negative integers:
\begin{equation}
    \vec{n} = \{ n_{l}^{(\alpha)} : 1 \le l \le k, 1 \le \alpha \le N\}
\end{equation}
and they have a tensor decomposition into $\mathfrak{sl}_{N+1}$ modules as follows:
\begin{equation}
    V = \bigotimes_{1 \le \alpha \le N} \bigotimes_{1 \le l \le k} V(l \omega_\alpha)^{n_l^{(\alpha)}}
\end{equation}
where $\omega_i$ are fundamental weights of $\mathfrak{sl}_{N+1}$.
In this work we consider only the case $\alpha=1$,\footnote{This is the opposite case considered in \cite{di2018difference} where in this work $k=1$ and they find $q$-Whittaker functions which can be realised instead as involution Milne polynomials $\iota Q'$ where $\iota$ acts on power sums by $p_n \to -p_n$} the highest weight is specified by a partition formed from ordering the $\vec{n}=n_l^{(1)}$, we denote this partition by $\lambda$. In this case, the Kostka polynomial $K_{\lambda \mu}(q)$ gives the graded multiplicity of the $\mathfrak{sl}_{N+1}$ representation associated to $\mu$ in the tensor decomposition and the graded character is then identified with the Milne polynomial:
\begin{equation}
    Q'_{\lambda}(X;q) = \chi_{\vec{n}}(X;q).
\end{equation}
The characters of Kirrilov-Reshetikhin modules satisfy the quantum Q-system relations. It follows that graded characters can be constructed iteratively from the raising operators introduced in \cite{di2018difference}:
\begin{equation}\label{eq:raisingop}
    D_{\alpha,n} = \sum_{\substack{I \subset [1,N+1] \\ |I|=\alpha}} X_I^n \prod_{\substack{i \in I \\ j \notin I}}\frac{1}{1-x_j/x_i} \Gamma_{q;I}
\end{equation}
where $X_I$ denotes the multiset $\{x_{i_1} \ldots x_{i_{|I|}}\}$ with $i_k \in I$ and $\Gamma_{q;i}$ is the shift operator acting on the variables $X$ as follows:
\begin{equation}
    \Gamma_{q;i}(x_1,\ldots, x_{N+1}) = (x_1,\ldots,q x_i,\ldots, x_{N+1})\,.
\end{equation}
Milne polynomials can then be constructed, up to a constant in $q$, from the $\alpha=1$ raising operators as follows:
\begin{equation}
    Q'_{\lambda}(X;q) = D_{1,k}^{n_k}D_{1,k-1}^{n_{k-1}}\ldots D_{1,1}^{n_1}.1\,.
\end{equation}
Indeed, these raising operators correspond to the Milne degeneration of the Macdonald polynomial raising operators of \cite{kirillov1996affine}. We use this formalism to understand the Euler characteristic of line bundles on the Hanany-Tong moduli space in appendix \ref{appendix:symmetricfunctionmethods}.

\paragraph{Plethystic exponential}
The plethystic exponential of a function $f(t_1,\ldots,t_N)$ is defined formally by
\begin{equation}\label{eq:plethysticexp}
    \text{PE} \left[f\right](t_1,\ldots,t_n) = \exp \left( \sum_{n=0}^{\infty} \frac{f(t_1^n,\ldots, t_N^n)}{n}\right) \,.
\end{equation}

\section{Symmetric Function Methods}\label{appendix:symmetricfunctionmethods}
In this appendix we discuss symmetric function methods to evaluate particular enumerative invariants of quiver varieties. We consider the equivariant Euler characteristic of line bundles over a simple handsaw quiver and the Hilbert series of an arbitrary chainsaw quiver, both expressed as Molien integrals.

\subsection{Handsaw quivers and Milne polynomials}\label{appendix:handsawmilne}
In section \ref{subsec:IRimage} we found the following form for the Euler characteristic of line bundles on the Hanany-Tong moduli space:
\begin{equation}
    \chi(\mathcal{D}^{\otimes \mathfrak{n}};\mathcal{V}_{N,p}) = \mathcal{I}_{N,p}(\mathfrak{n}) = \sum_{k_1 + \ldots + k_p = N} \prod_{i=1}^p (x_i^{k_i}z^{\frac{1}{2}k_i(k_i-1)})^{\mathfrak{n}} \prod_{i,j=1}^p \frac{1}{(z^{k_i-k_j+1}x_i/x_j;z)_{k_j}} \,.
\end{equation}

\begin{lemma}
The Euler characteristic of the tautological line bundle over the Hanany-Tong Lagrangian is a particular Milne polynomial with highest weight $(\mathfrak{n}^N)$. In terms of repeated iteration of the raising operators (\ref{eq:raisingop}) we have:
\begin{equation}
    \mathcal{I}_{N,p} = \frac{1}{(z;z)_N} \left( D_{1,\mathfrak{n}}\right)^N.1 = \frac{1}{(z;z)_N} Q'_{(\mathfrak{n}^N)}(x_1,\ldots,x_p;z) \,.
\end{equation}
\end{lemma}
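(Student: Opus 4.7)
The strategy is to compute $(D_{1,\zeta})^N \cdot 1$ explicitly and match the result with $(z;z)_N\,\mathcal{Z}_{N,p}$; the identification with $Q'_{(\zeta^N)}/(z;z)_N$ then follows from the raising-operator construction of Milne polynomials recalled in Appendix A.2 (for the rectangular shape $\lambda=(\zeta^N)$ only the operator $D_{1,\zeta}$ enters, so the overall constant is the only thing left to fix).

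First I would expand
\[
(D_{1,\zeta})^N \cdot 1 \;=\; \sum_{(i_1,\ldots,i_N)\in[1,p]^N} \prod_{m=1}^N \bigl(\Gamma_{z;i_1}\cdots\Gamma_{z;i_{m-1}}\bigr)\,A_{i_m},
\]
with $A_i = x_i^\zeta\prod_{j\neq i}(1-x_j/x_i)^{-1}$, and regroup terms by the composition $(k_1,\ldots,k_p)$ recording how many times each index is chosen, with $\sum_i k_i = N$. Tracking the shifts, the cumulative effect on the $x_i^\zeta$ factors is $z^{0+1+\cdots+(k_i-1)\zeta}x_i^{k_i\zeta}$, which produces exactly the monomial prefactor $\prod_i(x_i^{k_i}z^{\binom{k_i}{2}})^\zeta$ of $\mathcal{Z}_{N,p}$.

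The substantive step is to evaluate the sum over the $\binom{N}{k_1,\ldots,k_p}$ orderings within each fixed composition and verify that the accumulated rational factors collapse to $(z;z)_N\prod_{i,j}(z^{k_i-k_j+1}x_i/x_j;z)_{k_j}^{-1}$. I would proceed by induction on $N$: assuming the identity for $(D_{1,\zeta})^{N-1}\cdot 1$, apply one more $D_{1,\zeta}$ to the inductive expression and check that the new sum over $i$ cleanly increments a single entry $k_i\to k_i+1$ while updating every Pochhammer symbol $(z^{k_i-k_j+1}x_i/x_j;z)_{k_j}$ by the correct factor. This reduces to a partial-fraction identity of the form
\[
\sum_{i=1}^p \prod_{j\neq i}\frac{1}{1-z^{k_i-k_j}x_j/x_i} \;=\; \text{(telescoping right-hand side)},
\]
which can be established by recognising the left-hand side as a residue sum for a rational function whose behaviour at infinity controls the cancellations, in the spirit of Cauchy's partial-fraction expansion.

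The main obstacle is this combinatorial telescoping: one must simultaneously track the mass of shifts acting on every Pochhammer and confirm that the $p$-fold sum over the next chosen index assembles them into the prescribed closed form. Once the induction closes, the overall normalisation is pinned down by taking the limit $x_i\to 0$ for $i\ge 2$: this isolates the single composition $(N,0,\ldots,0)$ on both sides, reducing each to $x_1^{\zeta N}z^{\zeta\binom{N}{2}}$ up to the factor $(z;z)_N$, thereby fixing the constant in $Q'_{(\zeta^N)} = (z;z)_N\,\mathcal{Z}_{N,p}$ and completing the proof.
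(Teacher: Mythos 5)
Your proposal is correct and follows essentially the same route as the paper's proof in appendix \ref{appendix:handsawmilne}: an induction on $N$ in which one further application of $D_{1,\zeta}$ to the closed-form sum over compositions increments a single $k_i$, the shift operators acting on the Pochhammers produce compensating factors, and everything collapses via a residue/partial-fraction identity (in the paper, $\sum_{i}(1-z^{k_i})\prod_{j\neq i}\frac{1-z^{k_j}x_j/x_i}{1-x_j/x_i}=1-z^{\sum_i k_i}$, proved exactly by the degree-counting argument you indicate). The only cosmetic differences are that the schematic identity you display omits the numerator factors generated by $\Gamma_{z;i}$ acting on the Pochhammer denominators, and your final $x_i\to 0$ normalisation check is superfluous once the base case $N=1$ is verified.
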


\begin{proof}
We proceed inductively on $N$. When $N=1$ the sum over $\{k_i\}$ in $\mathcal{I}_{N,p}$ is a choice of which $k_i$ is set equal to $1$. Further, the first product over $i$ becomes simply $x_i^{\mathfrak{n}}$ for this choice $k_i$ and the second product receives contributions only from terms involving the non-zero $k_i$, and brings out a factor of $1/(1-z)$---thus we find:
\begin{equation}
    \mathcal{I}_{1,p} = \frac{1}{1-z}\sum_{i=1}^p x_i^{\mathfrak{n}} \prod_{\substack{j=1 \\ j\neq i}}^{p}\frac{1}{1-x_i/x_j}\,.
\end{equation}
This coincides with the raising operator $D_{1,\mathfrak{n}}$ divided by $(z;z)_1$ as required. 

Now we act with $D_{1,\mathfrak{n}}$ on $\mathcal{I}_{N,p}$. Firstly, we consider the action of the shift operator $\Gamma_{z;i}$ on the summand. We denote the summand by $\mathcal{I}_{N,p}^{\{k\}}$ so that:
\begin{equation}
    \mathcal{I}_{N,p} = \sum_{k_1+\ldots+k_p=N} \mathcal{I}_{N,p}^{\{k\}}\,.
\end{equation}
The shift operator acts on the summand as follows:
\begin{equation}
    \Gamma_{q,i}\left(\mathcal{I}_{N,p}^{\{k\}} \right) = x_i^{-\mathfrak{n}}\mathcal{I}_{N,p}^{\{\tilde{k}^{(i)}\}} \prod_{j=1}^p \left( 1-z^{\tilde{k}_j^{(i)}}x_j/x_i\right)\,.
\end{equation}
The set of integers $\{\tilde{k}^{(i)}\}$ is the same as the set $\{k\}$ except the $i^{\text{th}}$ integer is shifted by $1$ i.e. $\tilde{k}_i = k_i+1$. Now applying the whole raising operator (\ref{eq:raisingop}) we have:
\begin{equation}
\begin{split}
    D_{1,\mathfrak{n}} \mathcal{I}_{N,p} = \sum_{i=1}^p x_i^{\mathfrak{n}} \prod_{\substack{j=1 \\ j\neq i}}^p \frac{1}{1-x_j/x_i}  \sum_{\{k\}} \left[ x_i^{-\mathfrak{n}}\mathcal{I}_{N,p}^{\{\tilde{k}^{(i)}\}} \prod_{j=1}^p \left( 1-z^{\tilde{k}_j^{(i)}}x_j/x_i\right) \right]\,.
\end{split}    
\end{equation}
Now we seek to change variable in the sum over $\{k\}$. We can reparametrise the sum as a sum over $\{k'\}$ with $\sum_{i=1}^p k'_i = N+1$ but with $k'_i \ge 1$. Now the term in square brackets vanishes if $k'_i=0$ so we can write the expression as a sum over all $\{k' \}$ with $\sum_{i}k'_i=N+1$. The result is then an expression:\footnote{Relabelling $\{k'\} \to \{k\}$ for ease of notation.}
\begin{equation}
    D_{1,\mathfrak{n}} \mathcal{I}_{N,p} = \sum_{k_1+\ldots+k_p=N+1} \mathcal{I}_{N+1,p}^{\{k\}} \left[ \sum_{i=1}^p (1-z^{k_i}) \prod_{\substack{j=1 \\ j \neq i}} \frac{1-z^{k_j}x_j/x_i}{1-x_j/x_i} \right]\,.
\end{equation}
One can verify that the term in square brackets is in fact independent of $x_i$ and gives simply $1-z^{\sum_{i=1}^p k_i} = 1-z^{N+1}$ thus completing the proof:
\begin{equation}
    D_{1,\mathfrak{n}} \mathcal{I}_{N,p} = \mathcal{I}_{N+1,p}.
\end{equation}
\end{proof}

\subsection{Molien integral symmetric functionology}\label{appendix:molienintegrals}

In this appendix we use symmetric function methods to evaluate the following integral:
\begin{equation}
\begin{split}
    &\mathcal{I}^N_{\{N_a\} \{k_a\}}(Z^{(a)};\zeta_a;\alpha_a,\beta_a,\gamma_a;q,t) \\= &\prod_{a=1}^N \frac{1}{k_a!(t;q)_{\infty}^{k_a}}\oint \prod_{a=1}^N\prod_{i=1}^{k_a} \frac{d w^{(a)}_i}{2 \pi i w^{(a)}_i} \left( w_i^{(a)}\right)^{-\zeta_a} \prod_{a=1}^N \prod_{i \neq j}^{k_a}  \frac{(w_i^{(a)}/w_j^{(a)};q)_{\infty}}{(tw_i^{(a)}/w_j^{(a)};q)_{\infty}}\\&\prod_{a=1}^N \prod_{i=1}^{k_a} \prod_{j=1}^{k_{a+1}}\frac{(\gamma_a t w_i^{(a)}/w_j^{(a+1)};q)_{\infty}}{(\gamma_a  w_i^{(a)}/w_j^{(a+1)};q)_{\infty}}
     \prod_{a=1}^N  \prod_{i=1}^{k_a}\prod_{m=0}^{N_a} \prod_{n=0}^{N_{a+1}} \frac{1}{(\alpha_a w_i^{(a)}z_m^{(a)};q)_{\infty}}\frac{1}{(\beta_a \frac{1}{w_i^{(a)}z_n^{(a+1)}};q)_{\infty}}\,.
\end{split}    
\end{equation}
The data in this integral corresponds to the chainsaw quiver in figure \ref{fig:chainsaw}. The quiver has $N$ gauge nodes with gauge ranks $k_a$ and $N$ flavour nodes $N_a$, the $(N+1)^{\text{th}}$ flavour is identified with the $1^{\text{st}}$ flavour, and the flavour fugacities are grouped into $N$ sets $Z^{(a)}$ each with $N_a$ variables. We take the contour to be a product of unit circles and $W^{(a)}$ are $N$ sets of $k_a$ integration variables parametrising unit circles. The integral has positive integer parameters $\zeta_a$ that specify tensor powers of tautological line bundles over the chainsaw quiver. The integral also depends on the set of auxiliary parameters $\{q,t; \alpha_a,\beta_a,\gamma_a\}$. The variables $W^{(N+1)}$ are identified with $W^{(1)}$ and $Z^{(N+1)}$ are similarly identified with $Z^{(1)}$.

\begin{figure}
\centering
\begin{tikzpicture}[x=0.75pt,y=0.75pt,yscale=-1,xscale=1]

\draw   (95,95) .. controls (95,81.19) and (106.19,70) .. (120,70) .. controls (133.81,70) and (145,81.19) .. (145,95) .. controls (145,108.81) and (133.81,120) .. (120,120) .. controls (106.19,120) and (95,108.81) .. (95,95) -- cycle ;
\draw   (195,95) .. controls (195,81.19) and (206.19,70) .. (220,70) .. controls (233.81,70) and (245,81.19) .. (245,95) .. controls (245,108.81) and (233.81,120) .. (220,120) .. controls (206.19,120) and (195,108.81) .. (195,95) -- cycle ;
\draw    (145,95) -- (193,95) ;
\draw [shift={(195,95)}, rotate = 180] [color={rgb, 255:red, 0; green, 0; blue, 0 }  ][line width=0.75]    (10.93,-3.29) .. controls (6.95,-1.4) and (3.31,-0.3) .. (0,0) .. controls (3.31,0.3) and (6.95,1.4) .. (10.93,3.29)   ;
\draw   (100,160) -- (140,160) -- (140,200) -- (100,200) -- cycle ;
\draw   (200,160) -- (240,160) -- (240,200) -- (200,200) -- cycle ;
\draw    (120,160) -- (120,122) ;
\draw [shift={(120,120)}, rotate = 450] [color={rgb, 255:red, 0; green, 0; blue, 0 }  ][line width=0.75]    (10.93,-3.29) .. controls (6.95,-1.4) and (3.31,-0.3) .. (0,0) .. controls (3.31,0.3) and (6.95,1.4) .. (10.93,3.29)   ;
\draw    (140,110) -- (218.3,158.94) ;
\draw [shift={(220,160)}, rotate = 212.01] [color={rgb, 255:red, 0; green, 0; blue, 0 }  ][line width=0.75]    (10.93,-3.29) .. controls (6.95,-1.4) and (3.31,-0.3) .. (0,0) .. controls (3.31,0.3) and (6.95,1.4) .. (10.93,3.29)   ;
\draw    (220,160) -- (220,122) ;
\draw [shift={(220,120)}, rotate = 450] [color={rgb, 255:red, 0; green, 0; blue, 0 }  ][line width=0.75]    (10.93,-3.29) .. controls (6.95,-1.4) and (3.31,-0.3) .. (0,0) .. controls (3.31,0.3) and (6.95,1.4) .. (10.93,3.29)   ;
\draw  [dash pattern={on 0.84pt off 2.51pt}]  (245,95) -- (285,95) ;
\draw  [dash pattern={on 0.84pt off 2.51pt}]  (55,95) -- (95,95) ;
\draw    (230,70) .. controls (260.2,30.4) and (181.11,30) .. (209.12,68.81) ;
\draw [shift={(210,70)}, rotate = 232.67000000000002] [color={rgb, 255:red, 0; green, 0; blue, 0 }  ][line width=0.75]    (10.93,-3.29) .. controls (6.95,-1.4) and (3.31,-0.3) .. (0,0) .. controls (3.31,0.3) and (6.95,1.4) .. (10.93,3.29)   ;
\draw    (130,70) .. controls (160.2,30.4) and (81.11,30) .. (109.12,68.81) ;
\draw [shift={(110,70)}, rotate = 232.67000000000002] [color={rgb, 255:red, 0; green, 0; blue, 0 }  ][line width=0.75]    (10.93,-3.29) .. controls (6.95,-1.4) and (3.31,-0.3) .. (0,0) .. controls (3.31,0.3) and (6.95,1.4) .. (10.93,3.29)   ;

\draw (113,86.4) node [anchor=north west][inner sep=0.75pt]    {$k_{a}$};
\draw (207,85.4) node [anchor=north west][inner sep=0.75pt]    {$k_{a+1}$};
\draw (110,171.4) node [anchor=north west][inner sep=0.75pt]    {$N_{a}$};
\draw (204,171.4) node [anchor=north west][inner sep=0.75pt]    {$N_{a+1}$};

\end{tikzpicture}

		\caption{Periodic chainsaw quiver with gauge nodes $\{k_a\}$ and flavour nodes $\{N_a\}$.}\label{fig:chainsaw}
\end{figure}
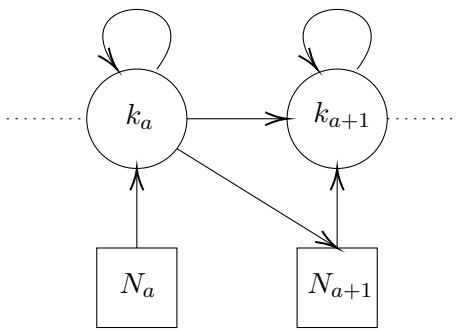

Using the Macdonald measure (\ref{eq:macdonaldmeasure}) and the Macdonald Cauchy identity (\ref{eq:macdonaldcauchy})\footnote{We use a plethystically substituted form $Z \to \frac{Z}{1-t}$ of this identity for the flavour terms.}, we can re-write the integrand in terms of symmetric functions:
\begin{equation}
\begin{split}
    &\mathcal{I}^N_{\{N_a\} \{k_a\}}(Z^{(a)};\zeta_a;\alpha_a,\beta_a,\gamma_a;q,t) = \\
    &\prod_{a=1}^N \frac{1}{(t;q)_{\infty}^{k_a}}\oint \prod_{a=1}^N d\mu[W^{(a)};q,t]  \prod_{i=1}^{k_a}\left( w^{(a)}_i\right)^{-\zeta_a} \sum_{\{\lambda^{(a)}\}} P_{\lambda^{(a)}}\left( W^{(a)};q,t\right)Q_{\lambda^{(a)}}\left( \frac{\alpha_a Z^{(a)}}{1-t};q,t\right) \\
    &\sum_{\{\mu^{(a)}\}} P_{\mu^{(a)}}\left( \bar{W}^{(a)};q,t\right)Q_{\mu^{(a)}}\left( \frac{\beta_a \bar{Z}^{(a+1)}}{1-t};q,t\right) \\
    &\sum_{\{\sigma^{(a)}\}} \gamma_a^{|\sigma|}b_{\sigma^{(a)}}(q,t) P_{\sigma^{(a)}}\left(W^{(a)};q,t\right)P_{\sigma^{(a)}}\left(\bar{W}^{(a+1)};q,t\right)\,.
\end{split}    
\end{equation}
Now we can use (\ref{eq:macdonaldfacts}) to absorb the factors of $\left(w^{(a)}\right)^{-\zeta_a}$. Further, using the Macdonald algebra structure constants (\ref{eq:macdonaldstructureconstants}) we can write the integral as:
\begin{equation}
\begin{split}
    &\mathcal{I}^N_{\{N_a\} \{k_a\}}(Z^{(a)};\zeta_a;\alpha_a,\beta_a,\gamma_a;q,t) = \\
     &\prod_{a=1}^N \frac{1}{(t;q)_{\infty}^{k_a}}\oint \prod_{a=1}^N d\mu[W^{(a)};q,t] \\ &\sum_{\substack{\{\lambda^{(a)},\mu^{(a)},\sigma^{(a)}\} \\ \{\nu^{(a)},\rho^{(a)}\}}}\prod_{a=1}^N f^{\nu^{(a)}}_{\lambda^{(a)} \sigma^{(a)}}(q,t) P_{\nu^{(a)}}\left( W^{(a)};q,t\right) f^{\rho^{(a)}}_{\mu^{(a)} \tilde{\sigma}^{(a-1)}} P_{\rho^{(a)}}\left( \bar{W}^{(a)};q,t\right) \\
    &\gamma_a^{|\sigma^{(a)}|} Q_{\lambda^{(a)}}\left( \frac{\alpha_a Z^{(a)}}{1-t};q,t \right)Q_{\mu^{(a)}}\left( \frac{\beta_a\bar{Z}^{(a+1)}}{1-t};q,t \right) b_{\sigma^{(a)}}(q,t)\,.
\end{split}    
\end{equation}
In the above we write $\tilde{\sigma}^{(a)}$ to denote the partition shifted by $(\zeta^k)$, and $\sigma^{(0)}$ is identified with $\sigma^{(N)}$. Precisely:
\begin{equation}
    \tilde{\sigma}^{(a)} = \sigma^{(a)} + (\zeta_{a+1}^{k_{a+1}})
\end{equation}
where again $k_{N+1}$ and $\zeta_{N+1}$ are identified with $k_1$ and $\zeta_1$ respectively. In the next step of the calculation, we use the orthogonality of the Macdonald polynomials with respect to the inner product (\ref{eq:macdonaldinnerproduct}), this introduces a normalisation factor (\ref{eq:macdonorm}). Finally, we use the definition of skew Macdonald polynomials to write the integral as:
\begin{equation}\label{eq:generalchainsaw}
\begin{split}
    &\mathcal{I}^N_{\{N_a\} \{k_a\}}(Z^{(a)};\zeta_a;\alpha_a,\beta_a,\gamma_a;q,t) = \\
    &\sum_{\{\nu^{(a)},\sigma^{(a)}\}}\prod_{a=1}^N \frac{\tilde{c}_{k_a}(\nu^{(a)};q,t)}{(t;q)_{\infty}^{k_a}} \gamma_a^{|\sigma^{(a)}|}P_{\nu^{(a)}/ \sigma^{(a)}}\left( \frac{\alpha_a Z^{(a)}}{1-t} ;q,t\right) Q_{\nu^{(a)}/ \tilde{\sigma}^{(a-1)}}\left( \frac{\beta_a \bar{Z}^{(a+1)}}{1-t} ;q,t\right)\,.
\end{split}    
\end{equation}
In the above, we have also used part of the normalisation of the inner product (the $b_{\nu}$ term) combined with the $b_{\sigma}$ term to re-normalise the first skew Macdonald polynomial.

Now, if we send $q \to 0$ then $\mathcal{I}$ becomes the Molien integral for the Hilbert series of the chainsaw quiver of figure \ref{fig:chainsaw}---if we include line bundle charge $\zeta$ this is the equivariant Euler characteristic of the corresponding tautological line bundle. This limit degenerates the plethystically substituted Macdonald polynomials to Milne polynomials and the normalisation constant simplifies using (\ref{eq:HLlimit}).

In fact the ADHM formula of section \ref{subsec:hilbertseries} is a special case of a chainsaw quiver with one gauge node and $\zeta_1=0$, identifying the parameters as follows recovers the expression (\ref{eq:adhmhs}):
\begin{equation}
    \mathcal{I}_{p, N}\left(\zeta=0,\alpha=\beta=t_{\text{There}}^{\frac{1}{2}}, \gamma=z^{-1}t^{\frac{1}{2}}_{\text{There}};q=0,t_{\text{Here}}=zt^{\frac{1}{2}}_{\text{There}}\right) = \mathcal{Z}_{\text{H.S.}}\left[\mathcal{M}_{N,p}\right]\,.
\end{equation}
Explicitly, for ADHM with $p \ge 1$ flavours we find:
\begin{equation}
    \mathcal{Z}_{\text{H.S.}}\left[\mathcal{M}_{N,p}\right] = \sum_{\lambda, \mu} \frac{1}{(zt^{\frac{1}{2}};zt^{\frac{1}{2}})_{N-l(\mu)}}\left(z^{-1}t^{\frac{1}{2}}\right)^{|\lambda|}P'_{\mu/\lambda}\left(t^{\frac{1}{2}}X;zt^{\frac{1}{2}} \right) Q'_{\mu/\lambda}\left(t^{\frac{1}{2}}\bar{X};zt^{\frac{1}{2}}\right)\,.
\end{equation}
where $X$ is the set of fugacities for the flavour symmetry $X = \{ x_1,\ldots,x_{p}\}$.

\paragraph{Large rank limit}
We now consider the limit $N\to \infty$. In this limit the normalisation constant is independent of $\mu$ and becomes simply:
\begin{equation}
    (zt^{\frac{1}{2}};zt^{\frac{1}{2}})_{\infty} = \prod_{k=0}^{\infty}\frac{1}{1-\left(zt^{\frac{1}{2}}\right)^{k+1}}\,.
\end{equation}
Combining this with the plethystic form of the generalised Cauchy identity (\ref{eq:generalisedcauchyplethy}) in the Hall-Littlewood limit $q \to 0$ we find:
\begin{equation}
    \lim_{N\to\infty}\mathcal{Z}_{\text{H.S.}}\left[\mathcal{M}_{N,p}\right] =  \prod_{k=0}^{\infty} \frac{1}{1-\left(zt^{\frac{1}{2}}\right)^{k+1}}\frac{1}{1-\left(z^{-1}t^{\frac{1}{2}}\right)^{k+1}} \prod_{i,j=1}^N \prod_{l=0}^{\infty}\frac{1}{1-\left(zt^{\frac{1}{2}}\right)^{l}\left(z^{-1}t^{\frac{1}{2}}\right)^{k}x_i/x_j}\,.
\end{equation}
Before concluding this appendix we remark that in fact a large rank limit of the more general expression (\ref{eq:generalchainsaw}) is also possible in the case $\zeta=0$. Iterating the identity (\ref{eq:skewcauchy}) allows us to concatenate the variables before applying (\ref{eq:generalisedcauchyplethy}), although in this work we focus on the ADHM case rather than the more general chainsaw.

\section{$T[SU(N)]$ Examples}\label{appendix:tsunexamples}
In this appendix we consider the $T[SU(N)]$ theory. This theory has a product gauge group $U(1)\times \ldots \times U(N-1)$ with $\mathcal{N}=4$ vector multiplets for each gauge node, bifundamental hypermultiplets transforming in $U(k)\times U(k+1)$ for $k=1,\ldots,N-2$ and $N$ fundamental hypermultiplets of $U(N-1)$. The field content of this theory is summarised by the $\mathcal{N}=4$ quiver diagram \ref{fig:tsunquiver}.

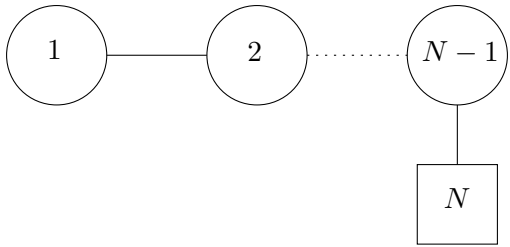
\begin{figure}
\centering

\begin{tikzpicture}[x=0.75pt,y=0.75pt,yscale=-1,xscale=1]

\draw   (60,95) .. controls (60,81.19) and (71.19,70) .. (85,70) .. controls (98.81,70) and (110,81.19) .. (110,95) .. controls (110,108.81) and (98.81,120) .. (85,120) .. controls (71.19,120) and (60,108.81) .. (60,95) -- cycle ;
\draw   (160,95) .. controls (160,81.19) and (171.19,70) .. (185,70) .. controls (198.81,70) and (210,81.19) .. (210,95) .. controls (210,108.81) and (198.81,120) .. (185,120) .. controls (171.19,120) and (160,108.81) .. (160,95) -- cycle ;
\draw    (110,95) -- (160,95) ;
\draw  [dash pattern={on 0.84pt off 2.51pt}]  (210,95) -- (260,95) ;
\draw   (260,95) .. controls (260,81.19) and (271.19,70) .. (285,70) .. controls (298.81,70) and (310,81.19) .. (310,95) .. controls (310,108.81) and (298.81,120) .. (285,120) .. controls (271.19,120) and (260,108.81) .. (260,95) -- cycle ;
\draw    (285,120) -- (285,150) ;
\draw   (265,150) -- (305,150) -- (305,190) -- (265,190) -- cycle ;

\draw (79,86.4) node [anchor=north west][inner sep=0.75pt]    {$1$};
\draw (179,87.4) node [anchor=north west][inner sep=0.75pt]    {$2$};
\draw (266,86.4) node [anchor=north west][inner sep=0.75pt]    {$N-1$};
\draw (277,160.4) node [anchor=north west][inner sep=0.75pt]    {$N$};

\end{tikzpicture}

		\caption{$\mathcal{N}=4$ Quiver diagram for the  $T[SU(N)]$ theory.}\label{fig:tsunquiver}
\end{figure}

The theory is self mirror dual and has resolved Higgs/Coulomb branches given by the cotangent bundle to the complete flag variety in $\mathbb{C}^N$:
\begin{equation}
    \mathcal{M}_H = \mathcal{M}_C = T^*F_N.
\end{equation}
The Higgs/Coulomb branch admits an action of the $SU(N)$ flavour symmetry and an anti-diagonal combination of the $\mathcal{N}=4$ R-symmetry that acts by contracting the cotangent directions. In the presence of generic real mass and FI parameters, the fixed points under this group action on the Higgs/Coulomb branches are labelled by permutations of $N$ which we denote by $\sigma \in S_N$. 

We denote the mass parameters/flavour fugacities as $z_i=e^{-m_i}$ for $i=1,\ldots,N$, the FI parameters/topological fugacities as $\zeta_i = e^{-\xi_i}$ and $t$ denotes the usual diagonal combination of R-symmetries.

We now turn to the Neumann half index of $T[SU(N)]$, we follow the recipe of \cite{Dimofte:2017tpi} to compute this object. In the $B$-shifted R-symmetry convention and with a Wilson line insertion $\mathcal{W}_{\mathfrak{n}}$ analogous to the setup in section \ref{sec:qm} we have
\begin{equation}
\begin{split}
    &\mathcal{I}_{\mathfrak{n}}[z_i;q,t] =\\ &\oint_{\Gamma} \prod_{a=1}^{N-1}\prod_{i=1}^{a} \frac{dx_a^{(i)}}{x_i^{(a)}} e^{\log(x_{i}^{(a)}) \log(\mathfrak{n}_{i}^{(a)})} \prod_{a=1}^{N-1} \frac{\prod_{i \neq j}^a \left(x_j^{(a)}/x_i^{(a)};q\right)_{\infty}}{\prod_{i,j}^N\left(tqx_j^{(a)}/x_i^{(a)};q\right)_{\infty}} \prod_{a=1}^{N-1}\prod_{i=1}^a \prod_{j=1}^{a+1} \frac{\left(tqx_j^{(a+1)}/x_i^{(a)};q\right)_{\infty}}{\left(x_j^{(a+1)}/x_i^{(a)};q\right)_{\infty}} \,,
\end{split}    
\end{equation}
where in the above we identify $x_j^{(N)}=z_j$ for $j=1,\ldots, N$.

In the work of \cite{Zenkevich:2017ylb}, these integrals realise holomorphic blocks of the $T[SU(N)]$ theory and in that context there is a basis of contours $\Gamma_\sigma$ in 1-1 correspondence with vacua $\sigma \in S_N$. In this appendix we replicate the half index setup of section \ref{sec:qm} and instead take all poles in the unit circle. This corresponds to a linear combination of holomorphic blocks $\Gamma = \sum_{\sigma \in S_N} \Gamma_\sigma$. 

The index coincides, up to a $(q,t)$ constant, with the integral representation of a Macdonald polynomial (\ref{eq:macdonaldintegralrepresentation}) with the line operator charge setting the highest weight $\lambda$:
\begin{equation}
    \mathcal{I}_{\mathfrak{n}_i^{(a)}=\lambda_a}[z_i;q,t] = P_{\lambda}\left(z;q,tq\right)\,.
\end{equation}
Now we consider the $\mathcal{N}=4$ limit. Sending $t\to1$ we have:
\begin{equation}
    \mathcal{I}_{\mathfrak{n}_i^{(a)}=\lambda_a}[z_i;q,t\to1] = P_{\lambda}\left(z;q,q\right) = s_{\lambda}\left(z\right)\,.
\end{equation}
The Macdonald polynomial degenerates to a Schur polynomial and we recover the finite dimensional simple module of the Coulomb branch algebra. Physically, the $t\to1$ limit suppresses the vortex contributions and the holomorphic block integral receives contributions only from fixed points on the Higgs branch.

The geometry of this example is simply the Borel-Weil-Bott theorem. Given a weight $\lambda$ for $\mathfrak{sl}_N$ one can define a line bundle $L_{\lambda}$ over $F_N$, it is well-known that the higher cohomology groups vanish and $H^0(F_N,L_{\lambda})$ forms the highest weight irreducible $\mathfrak{sl}_N$ module corresponding to $\lambda$. The equivariant Euler characteristic with respect to the maximal torus $T\subset \mathfrak{sl}_N$ then yields the character:
\begin{equation}
    \chi_T(F_N,L_{\lambda}) = s_{\lambda}.
\end{equation}
That is, the half index in the presence of a Wilson line insertion counts sections of appropriate tensor products of tautological line bundles over $F_N$.

\subsection{$T[SU(N)]$ Poincar\'e polynomial limit}\label{appendix:pplimit}
In this appendix we give an explicit example where the $q \to 0$ limit of the vortex partition function is identified with the Poincar\'e polynomial of vortex moduli space. This section is essentially a review of results in section 6 of the work \cite{Crew:2020jyf}.

The vortex contributions to the holomorphic block in the $A$-twist can be identified with the $\chi_t$ genus of local Laumon space $\mathfrak{Q}_{\boldsymbol{d}}$. The vortex number is identified with the degree of the Laumon space and we have: 
\begin{equation}
    \mathcal{Z}_{S^1 \times D}^{\text{Vortex}} = \sum_{\boldsymbol{d}} \prod_{s=1}^{N-1} \left(t^{-1/2} \zeta_{s}/ \zeta_{s+1}\right)^{d_s} \chi_t\left( \mathfrak{Q}_{\boldsymbol{d}}\right)\,.
\end{equation}
The action corresponding to the $q$ fugacity is a Reeb vector on $\mathfrak{Q}_{\boldsymbol{d}}$. Sending $q\to 0$ computes the $\chi_t$ genus of the compact fixed point submanifold, in this case the compact core of the Laumon space denoted $\pi^{-1}(o) \subset \mathfrak{Q}_{\boldsymbol{d}}$. Since the core is compact the $\chi_t$ genus coincides with the Poincar\'e polynomial and we have
\begin{equation}
    \lim_{q\to0}\chi_t\left({\mathfrak{Q}_{\boldsymbol{d}}}\right) = P_t\left(\pi^{-1}(o)\right)\,.
\end{equation}
A more careful discussion of this argument can be found in \cite{Dorey:2019kaf}. Nakajima \cite{nakajima2012handsaw} derives a generating function for these Poincar\'e polynomials and thus the vortex sum can be computed explicitly in this limit:
\begin{equation}
    \mathcal{Z}_{S^1\times D}^{\text{Vortex}} = \sum_{\boldsymbol{d}} \prod_{s=1}^{N-1} \left(t^{-1/2} \zeta_{s}/ \zeta_{s+1}\right)^{d_s} P_t\left(\pi^{-1}(o)\right) = \prod_{i<j}^N \frac{1}{1-t^{\frac{1}{2}} \zeta_i/\zeta_j}\,.
\end{equation}
The right hand side is a $t$-graded Verma denominator for $\mathfrak{sl}_N$. This is the $T[SU(N)]$ analogue of the result discussed in section \ref{sec:vpfn}.

\section{Localisation and Boundary Condition}\label{appendix:localisation}

We give the detailed computation of the fundamental blocks for the ADHM theory, which fuse \textit{exactly} to the twisted and superconformal indices, and the squashed ellipsoid partition function.\footnote{The latter after introducing boundary 2d matter to cancel the boundary 't Hooft anomaly.} To our knowledge this is the first example of a first principles derivation of a ``block'' for a theory with adjoint matter without appealing to holomorphic factorisation. In \cite{Bullimore:2020jdq} the authors propose the blocks associated to vacua $\{\alpha\}$ for a 3d $\mathcal{N}=4$ theory should be given by a hemisphere partition function on $S^1\times D$ with $\mathcal{N}=(2,2)$ exceptional Dirichlet boundary conditions on the boundary $\partial(S^1 \times D) = T^2$. Two limits are shown to correspond to characters of Verma modules of the quantised Coulomb and Higgs branch algebras. We refer to \cite{Bullimore:2020jdq, Bullimore:2016nji} for details, including the precise form of the boundary conditions, but give a brief introduction here. 

To associate a boundary condition to each vacuum we work in the half-space operator picture -- the count of BPS states on $S^1\times D$ is the same as counting BPS operators on $\mathbb{R}_{\geq 0} \times \mathbb{R}^2$ inserted at the origin. On the boundary, the $\mathcal{N}=(2,2)$ BPS equations on the Higgs (Coulomb) branch become gradient flow with respect to the Morse function given by contracting the real mass (FI) parameter generating the flavour symmetry with the real moment map of the flavour group action, i.e. $m_{\mathbb{R}} \cdot \mu_{H,\mathbb{R}}$ and  $\xi_{\mathbb{R}} \cdot \mu_{C,\mathbb{R}}$. To avoid repetition we focus on the Higgs branch picture -- analogous statements can be made on the Coulomb branch with the obvious replacements. The boundary condition is specified by:
\begin{itemize}
    \item Dirichlet and Neumann boundary conditions for the $\mathcal{N}=2$ vector and adjoint chiral multiplets comprising the $\mathcal{N}=4$ vector multiplet respectively.
    \item A holomorphic Lagrangian splitting $\mathcal{R}=L\oplus L^*$ for the linear quaternionic representation $\mathcal{R} \simeq \mathbb{H}^N$ of the gauge group $G$ specifying the gauge representation of $N$ hypermultiplets. This specifies a splitting $(X_L,Y_L)$ of the hypermultiplet scalars, such that the scalars in $L^*$ are set to some (matrix) of constant values $Y_L|_{\partial} = c$. The remaining boundary conditions for the components of the hypermultiplets are fixed by supersymmetry. $X_L$ are allowed to fluctuate at the boundary and the image of this under the hyperk\"ahler quotient by $G$ automatically defines a holomorphic Lagrangian submanifold of the Higgs branch.
\end{itemize}
The matrix of constants $c$ is chosen such that the gauge group $G$ is completely broken at the boundary, but a maximal torus of the Higgs and Coulomb branch flavour symmetries, $T_H\times T_C \subset G_H\times G_C$, is preserved. The Lagrangian splitting $L$ is chosen such that its image on $\mathcal{M}_H$ under the quotient gives the holomorphic attracting submanifold $\mathcal{L}_{\alpha} \subset \mathcal{M}_H$ (under Morse flow) associated to the vacuum $\alpha$.\footnote{There is a subtlety for non-abelian theories in that $L$ must also be chosen such that there are no non-trivial orbits of the complexified gauge group $G_{\mathbb{C}}$, else there would be noncompact 2d degrees of freedom on the boundary. This is discussed in section 4.4 of \cite{Bullimore:2016nji} and also dealt with in section \ref{sec:vpfn} and appendix \ref{appendix:geometricboundarycondition} in this work for the theory of interest. } In this way the boundary condition at $x^1=0$ on $\mathbb{R}_{\geq 0 }$ mimics a vacuum at infinity on the full $\mathbb{R}$.  We work in the convention where the Morse function increases along the flow. 

The Morse function evaluated at each vacuum (fixed point) provides a partial ordering on the vacua:
\begin{equation}
    \alpha \in \overline{\mathcal{L}_{\beta}} \Rightarrow \alpha \leq \beta.
\end{equation}

The values of both functions at each vacua (which are the critical values of the Morse functions) coincide with a single central charge, labelled by the vacuum. 
\begin{equation}
\begin{split}
    &\kappa_{\alpha}\,:\quad  \mathfrak{t}_H \times \mathfrak{t}_C \rightarrow \mathbb{R}, \\
    \kappa_{\alpha}(m_{\mathbb{R}},\xi_{\mathbb{R}})& = m_{\mathbb{R}} \cdot \mu_{H,\mathbb{R}}(\alpha) = \xi_{\mathbb{R}} \cdot \mu_{C,\mathbb{R}}(\alpha)\,.
\end{split}
\end{equation} 
The central charge coincides with the value of the effective $T_H\times T_C$  mixed Chern-Simons coupling in the vacuum.

One can compute the half-index $\mathcal{I}^{\alpha}$ for such boundary conditions \cite{Dimofte:2017tpi}. As shown in \cite{Bullimore:2020jdq}, the half index on $\mathbb{R}_{\geq 0} \times \mathbb{R}^2$ and the hemisphere partition function $\mathcal{Z}^{\alpha}_{S^1\times D}$ differ precisely by a factor $e^{\phi_{\alpha}}$ corresponding to the Casimir energy:
\begin{equation}
    \mathcal{Z}_{S^1\times D}^{\alpha} = e^{\phi_{\alpha}}\mathcal{I}^{\alpha}
\end{equation}
where $\phi_{\alpha}$ is determined by boundary 't Hooft anomalies and correspond to effective (mixed) Chern-Simons couplings. For an $\mathcal{N}=4$ theory with $\mathcal{N}=(2,2)$ boundary conditions, the possible boundary 't Hooft anomalies are mixed $U(1)_V - U(1)_A$ , $T_H  - U(1)_A$, $T_C - U(1)_V$ and  $T_H - T_C$ anomalies. The latter corresponds exactly to the central charge. In the limits $t\rightarrow q^{\pm \frac{1}{2}}$, the hemisphere partition function specialises to the characters of Verma modules of quantised Higgs and Coulomb branch algebras with $\log q$ playing the role of the $\Omega$-deformation parameter, and where the module is that of boundary Higgs and Coulomb branch operators. The Casimir energy specialises to the lowest weights of these modules.
\begin{equation}
\begin{split}
    \lim_{t\rightarrow q^{-\frac{1}{2}}} \mathcal{Z}_{S^1\times D}^{\alpha} = \chi^{H,\alpha}\,,\\
    \lim_{t\rightarrow q^{\frac{1}{2}}} \mathcal{Z}_{S^1\times D}^{\alpha} = \chi^{C,\alpha}.\\
\end{split}
\end{equation}

\subsection{Detailed computation of hemisphere partition functions}\label{appendix:holoblockdetailedcomputation}

We give the details of the computation of the hemisphere partition function from section \ref{sec:vpfn}. Before deformation, for Dirichlet boundary conditions with $c=0$ we have:
\begin{equation}
\begin{split}
    \tilde{\mathcal{Z}}^{\lambda}_{S^1\times D} &= \sum_{k\in \mathbb{Z}^N}  I(k,q,t,\{u^{-1}v_a\},z,\zeta)\\
    &\equiv 
    \sum_{k\in \mathbb{Z}^N} e^{\frac{\log\zeta}{\log q}\left(\sum\limits_{a\in\lambda} \log\left(s_aq^{k_a}\right)\right)} \prod_{a,b\,\in \lambda}  \frac{\left(u^2\frac{s_a}{s_b}q^{k_a-k_b};q\right)'_{\infty}}{\left(q\frac{s_a}{s_b}q^{k_a-k_b};q\right)'_{\infty}}
    \prod_{a\in \lambda }  \frac{\left(q u^{-1} s_a q^{k_a};q\right)'_{\infty}}{\left(us_a q^{k_a};q\right)'_{\infty}}\\
    &\prod_{\substack{a,b\in \lambda \text{ s.t. } \\  (b \in \lambda^{B} )\cap( i_{a}> i_b) \\ \text{ or } (b \notin \lambda^{B})}}  \frac{\left(qz^{-1}u^{-1}\frac{s_a}{s_b}q^{k_a-k_b};q\right)'_{\infty}}{\left(z^{-1}u\frac{s_a}{s_b}q^{k_a-k_b};q\right)'_{\infty}}
    \prod_{\substack{a,b\in \lambda \text{ s.t. } \\ (a \in \lambda^{B} )\cap( i_{b} \leq i_a)}}  \frac{\left(qzu^{-1}\frac{s_a}{s_b}q^{k_a-k_b};q\right)'_{\infty}}{\left(zu\frac{s_a}{s_b}q^{k_a-k_b};q\right)'_{\infty}}, \\
\end{split}
\end{equation}
where recall
\begin{equation}\label{eq:zetafunctionreg}
    (a;q)' = e^{-\mathcal{E}[-\log(a)]} (a;q)\,,\qquad \mathcal{E}[x]=\frac{\beta}{12}-\frac{x}{4}+\frac{1}{8\beta}x^2
\end{equation}
and $q=e^{-2\beta}$. We deform to the partition function $\mathcal{Z}^{\lambda}_{S^1\times D}$ for the exceptional Dirichlet boundary condition $\mathcal{B}_{\lambda}$ by setting to 1 the product of fugacities dual to the charges of the chirals whose scalars are fixed to non-zero values at the boundary. This corresponds to the non-zero value of scalars at the boundary breaking the combination of gauge, flavour and R-symmetry under which they are charged. These are the components of $(A,B,I)$ given by the particular tree $T_{\lambda}$. Thus:
\begin{equation}
\begin{split}
    s_{(1,1)} = u\,,\quad s_a = u (z u)^{i_a-1} (z^{-1} u )^{j_a-1} \equiv u^{-1} v_{a}\,.
\end{split}
\end{equation}

\paragraph{The vortex partition function.} We can isolate the dependence on monopole charge as the vortex partition function:
\begin{equation}
    \mathcal{Z}^{\lambda}_{S^1\times D} = I(0,q,t,\{u^{-1}v_a\},z,\zeta)\, \mathcal{Z}^{\lambda}_{\text{Vortex}}
\end{equation}
by using the identity: $(aq^n;q)_{\infty} = \frac{(a;q)_{\infty}}{(a;q)_n}$, and the form of the $\mathcal{E}(x)$ function. We have:
\begin{equation}
    \mathcal{Z}^{\lambda}_{\text{Vortex}} = \sum_{k_a} \left( \zeta t^{\frac{1}{2}}q^{-\frac{1}{4}}\right)^{\sum k_a} \prod_{a\in Y} \frac{\left(u^2 v_a^{-1};q\right)_{-k_a}}{\left(q v_a^{-1};q\right)_{-k_a}} \prod_{a\neq b} \frac{\left(qu^{-2}\frac{v_b}{v_a};q\right)_{k_b-k_a}}{\left(\frac{v_b}{v_a};q\right)_{k_b-k_a}}\frac{\left(zu\frac{v_b}{v_a};q\right)_{k_b-k_a}}{\left(qzu^{-1}\frac{v_b}{v_a};q\right)_{k_b-k_a}}\,.
\end{equation}
We now prove that the only non-zero contributions to $\mathcal{Z}^{\lambda}_{\text{Vortex}}$ are when $\{k\}$ form a reverse plane partition. To do this, first note when various q-Pochhamers could develop poles or zeros:
\begin{itemize}
    \item For $a=(1,1)$: $\left(u^2 v_a^{-1};q\right)_{-k_a}$ is 0 if $k_a <0 $ and non-zero if $k_a \geq 0$.
    \item For $a$ directly below $b$: 
        \begin{tikzpicture}[x=0.2pt,y=0.2pt,yscale=-1,xscale=1]
        \draw  [fill={rgb, 255:red, 74; green, 144; blue, 226 }  ,fill opacity=0.5 ] (196,114) -- (246,114) -- (246,164) -- (196,164) -- cycle ;
        \draw  [fill={rgb, 255:red, 255; green, 77; blue, 77 }  ,fill opacity=0.5 ] (196,164) -- (246,164) -- (246,214) -- (196,214) -- cycle ;
        \draw (221,189) node    {$a$};
        \draw (221,139) node    {$b$};
        \end{tikzpicture} 
        , $\left(zu\frac{v_b}{v_a};q\right)_{k_b-k_a}$ is a zero if $k_a<k_b$, and non-zero if $k_a \geq k_b$.
    \item If $b$ directly right of $a$: 
            \begin{tikzpicture}[x=0.2pt,y=0.2pt,yscale=-1,xscale=1]
            \draw  [fill={rgb, 255:red, 74; green, 144; blue, 226 }  ,fill opacity=0.5 ] (196,114) -- (246,114) -- (246,164) -- (196,164) -- cycle ;
            \draw  [fill={rgb, 255:red, 255; green, 77; blue, 77 }  ,fill opacity=0.5 ] (146,114) -- (196,114) -- (196,164) -- (146,164) -- cycle ;
            \draw (171,139) node    {$a$};
            \draw (221,139) node    {$b$};
            \end{tikzpicture}
            , $\left(qzu^{-1}\frac{v_b}{v_a};q\right)_{k_b-k_a}^{-1}$ is zero if $k_b<k_a$ and non-zero if $k_b\geq k_a$.
    \item For $b$ diagonally to the right of $a$: 
    \begin{tikzpicture}[x=0.2pt,y=0.2pt,yscale=-1,xscale=1]
        \draw  [fill={rgb, 255:red, 74; green, 144; blue, 226 }  ,fill opacity=0.5 ] (196,164) -- (246,164) -- (246,214) -- (196,214) -- cycle ;
        \draw  [fill={rgb, 255:red, 255; green, 77; blue, 77 }  ,fill opacity=0.5 ] (146,114) -- (196,114) -- (196,164) -- (146,164) -- cycle ;
        \draw  [fill={rgb, 255:red, 184; green, 233; blue, 134 }  ,fill opacity=1 ] (196,114) -- (246,114) -- (246,164) -- (196,164) -- cycle ;
        \draw  [fill={rgb, 255:red, 184; green, 233; blue, 134 }  ,fill opacity=1 ] (146,164) -- (196,164) -- (196,214) -- (146,214) -- cycle ;
        \draw (171,139) node    {$a$};
        \draw (221,189) node    {$b$};
    \end{tikzpicture}
    , $\left(qu^{-2}\frac{v_b}{v_a};q\right)_{k_b-k_a}$ can develop a pole if $k_b<k_a$. But if this is the case, there are at least two zeros coming from considering $a,b$ in relation to the two green boxes corresponding to the cases above. Such a configuration of $\{k\}$ is always zero. We must have $k_b \geq k_a$.
\end{itemize}
None of the other q-Pochhammers can develop poles or zeros. Therefore $k_a$ must increase along the rows (going to the right) and columns (going down) of the Young diagram, and forms a reverse plane partition.

\paragraph{Perturbative contribution} We now describe the perturbative piece of the hemisphere partition function $I(0,q,t,\{u^{-1}v_a\},z,\zeta)$. This undergoes significant cancellations. Naively we see that in the contribution of the vector multiplet, $\left(u^2\frac{s_a}{s_b};q\right)_{\infty}$ gives a zero whenever there is a pair of boxes $a,b$ in the configuration \tikzset{every picture/.style={line width=0.75pt}}: \begin{tikzpicture}[x=0.2pt,y=0.2pt,yscale=-1,xscale=1]

\draw  [fill={rgb, 255:red, 74; green, 144; blue, 226 }  ,fill opacity=0.5 ] (246,214) -- (296,214) -- (296,264) -- (246,264) -- cycle ;
\draw  [fill={rgb, 255:red, 255; green, 77; blue, 77 }  ,fill opacity=0.5 ] (196,164) -- (246,164) -- (246,214) -- (196,214) -- cycle ;

\draw (221,189) node    {$a$};
\draw (271,239) node    {$b$};

\end{tikzpicture} . We will see that these are cancelled. First consider the terms:
\begin{equation}
\begin{split}
     &\prod_{a,b\,\in \lambda} \frac{1}{\left(q\frac{v_a}{v_b};q\right)'_{\infty}}
    \prod_{a\in \lambda }  \left(q u^{-2} v_a ;q\right)'_{\infty} \\
    &\prod_{\substack{ a,b\in \lambda \text{ s.t. } \\  (b \in \lambda^{B} )\cap( i_{a}> i_b) \\ \text{ or } (b \notin \lambda^{B}) }} \left(qz^{-1}u^{-1}\frac{v_a}{v_b};q\right)'_{\infty}
    \prod_{\substack{a, b \in \lambda \text{ s.t. } \\(a \in \lambda^{B} )\cap( i_{b} \leq i_a)}}  \left(qzu^{-1}\frac{v_a}{v_b};q\right)'_{\infty}\\
    \equiv&   
    \prod\limits_{\substack{a, b \in \lambda \text{ s.t. }\\ b \neq (1,1)}} \mathbf{k}(a,b)^{-1}
    \prod\limits_{\substack{a, b \in \lambda \text{ s.t. } \\(a \in \lambda^{B} )\cap( i_{b} \leq i_a)}} \mathbf{f}(a,b)   
    \prod\limits_{\substack{ a, b \in \lambda \text{ s.t. } \\  (b \in \lambda^{B} )\cap( i_{a}> i_b) \\ \text{ or } (b \notin \lambda^{B}) }} \mathbf{h}(a,b).
\end{split}
\end{equation}
where hopefully the definition of the functions $\mathbf{h},\mathbf{k},\mathbf{f}$ are obvious. We note the following identity:
\begin{equation}
    \mathbf{h}(a,b) = \mathbf{k}( Ta, T\downarrow b) = \mathbf{k}( T\uparrow a, T  b),
\end{equation}
where T is any translation, and $\uparrow,\downarrow$ are shifts up and down by one box in the Young diagram. Using this, we can see upon making successive cancellations:
\begin{equation}
    \frac{\prod\limits_{a, b \in \lambda} \mathbf{h}(a,b)}{\prod\limits_{\substack{a, b \in \lambda\text{ s.t. }\\b\neq (1,1) }} \mathbf{k}(a,b)} = \frac{\prod\limits_{\substack{a, b \in \lambda \text{ s.t. }\\ b \in \lambda^B}} \mathbf{h}(a,b)}{\prod\limits_{\substack{a, b \in \lambda\text{ s.t. }\\b\ \in (1,\cdot)'}} \mathbf{k}(a,b)} = \frac{\prod\limits_{\substack{a, b \in \lambda \text{ s.t. }\\ (b \in \lambda^B)\cap(i_a \leq i_b) }} \mathbf{h}(a,b)}{\prod\limits_{\substack{a, b \in \lambda\text{ s.t. }\\ (b \in (1,\cdot)')\cap (a\notin \tilde{\lambda}_b)}} \mathbf{k}(a,b)}\,.
\end{equation}
Here $(1,\cdot)'$ is the set of boxes in the first row of $\lambda$ except for $(1,1)$. $\lambda^B$ the bottom-most boxes in the diagram.  $\tilde{\lambda}_b$ for some $b 
\in (1,\cdot)'$ is defined as the set of boxes obtained by shifting all boxes $c$ such that $i_c \geq \lambda_{j_b}^{\vee}$ all the way to the top. See figure \ref{fig:youngdiagramtildeYb} for an example.
\begin{figure}
    \centering
        \begin{tikzpicture}[x=0.3pt,y=0.3pt,yscale=-1,xscale=1]
        
        \draw  [fill={rgb, 255:red, 184; green, 233; blue, 134 }  ,fill opacity=1 ] (171,62) -- (221,62) -- (221,112) -- (171,112) -- cycle ;
        \draw  [fill={rgb, 255:red, 184; green, 233; blue, 134 }  ,fill opacity=1 ] (221,62) -- (271,62) -- (271,112) -- (221,112) -- cycle ;
        \draw   (271,62) -- (321,62) -- (321,112) -- (271,112) -- cycle ;
        \draw   (321,62) -- (371,62) -- (371,112) -- (321,112) -- cycle ;
        \draw   (371,62) -- (421,62) -- (421,112) -- (371,112) -- cycle ;
        \draw  [fill={rgb, 255:red, 184; green, 233; blue, 134 }  ,fill opacity=1 ] (171,112) -- (221,112) -- (221,162) -- (171,162) -- cycle ;
        \draw   (171,162) -- (221,162) -- (221,212) -- (171,212) -- cycle ;
        \draw   (171,212) -- (221,212) -- (221,262) -- (171,262) -- cycle ;
        \draw   (171,262) -- (221,262) -- (221,312) -- (171,312) -- cycle ;
        \draw   (221,212) -- (271,212) -- (271,262) -- (221,262) -- cycle ;
        \draw   (221,162) -- (271,162) -- (271,212) -- (221,212) -- cycle ;
        \draw   (271,162) -- (321,162) -- (321,212) -- (271,212) -- cycle ;
        \draw   (321,162) -- (371,162) -- (371,212) -- (321,212) -- cycle ;
        \draw   (221,112) -- (271,112) -- (271,162) -- (221,162) -- cycle ;
        \draw   (271,112) -- (321,112) -- (321,162) -- (271,162) -- cycle ;
        \draw   (321,112) -- (371,112) -- (371,162) -- (321,162) -- cycle ;
        \draw   (421,62) -- (471,62) -- (471,112) -- (421,112) -- cycle ;
        
        \draw (346,87) node    {$b$};
        \draw (296,87) node    {$b$};

        \end{tikzpicture}
    \caption{The green highlighted boxes forming $\tilde{\lambda}_b$ for either of the boxes marked $b$.}
    \label{fig:youngdiagramtildeYb}
\end{figure}
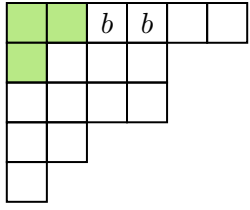
In particular this implies that:
\begin{equation}
    \prod\limits_{\substack{ a, b \in \lambda \text{ s.t. } \\  (b \in \lambda^{B} )\cap( i_{a}> i_b) \\ \text{ or } (b \notin \lambda^{B}) }} \mathbf{h}(a,b) \prod\limits_{\substack{a, b \in \lambda \text{ s.t. }\\ b \neq (1,1)}} \mathbf{k}(a,b)^{-1} = \prod\limits_{\substack{a, b \in \lambda\text{ s.t. }\\ (b \in (\cdot,1)')\cap (a\notin \tilde{\lambda}_b)}} \mathbf{k}(a,b)^{-1}.
\end{equation}
Now we claim that:
\begin{equation}\label{eq:ADHMholoblockcancellation}
    \prod\limits_{\substack{a, b \in \lambda \text{ s.t. } \\(a \in \lambda^{B} )\cap( i_{b} \leq i_a)}} \mathbf{f}(a,b)
    \prod\limits_{\substack{a, b \in \lambda\text{ s.t. }\\ (b \in (\cdot,1)')\cap (a\notin \tilde{\lambda}_b)}} \mathbf{k}(a,b)^{-1}
    = \prod_{\substack{a, b \in \lambda \text{ s.t. } \\ (a\in \lambda^B)\cap(b\in\lambda^R) \\ \cap(i_b \leq i_a) }} \mathbf{f}(a,b),
\end{equation}
where $\lambda^{R}$ are the rightmost boxes in $\lambda$. To justify this, note we have:
\begin{equation}
    \mathbf{f}(a,b)=\mathbf{k}(T\leftarrow a, Tb)= \mathbf{k}( T a, T\rightarrow b).
\end{equation}
Define the subset of pairs of boxes in $\lambda$:
\begin{equation}
\begin{split}
    S&=\left\{ (a,b)\,\rvert\, (a\in \lambda^B)\cap(b\notin \lambda^{R})\cap(i_b\leq i_a)\right\}\\
    S'&=\left\{ (a,b)\,\rvert\, (b \in (1,\cdot)')\cap (a\notin \tilde{\lambda}_b) \right\}
\end{split}
\end{equation}
where
\begin{equation}
    |S| = |S'| = \left(\sum_{j=1}^{\lambda_1} \sum_{i=1}^{\lambda_j^{\vee}} \lambda_i \right) - N,
\end{equation}
and the map:
\begin{equation}
    M: S \rightarrow S',\quad M: (a,b) \mapsto (a',b')\equiv ( \uparrow^{i_b-1} a, \uparrow^{i_b-1} \rightarrow b).
\end{equation}
We can show that $M$ truly maps $S$ into $S'$. Take $(a,b)\in S$. For $j_b<j_a$ it is clear. For $j_b\geq j_a$ note that $i_a - i_{\rightarrow b } \geq \lambda_{j_a}^{\vee} - \lambda_{j_{\rightarrow b}}^{\vee}$, which implies $i_{a'} - i_{b'} \geq \lambda_{j_{a'}}^{\vee} - \lambda_{j_{b'}}^{\vee}$. The latter implies that $a' \notin \tilde{\lambda}_{b'}$. Noting that $\mathbf{f}(a,b) = \mathbf{k}(a',b')$, and that $M$ clearly injects, we arrive at (\ref{eq:ADHMholoblockcancellation}). So all together:
\begin{equation}\label{eq:1-looppiececancellatinos}
     \prod\limits_{\substack{ a, b \in \lambda \text{ s.t. } \\  (b \in \lambda^{B} )\cap( i_{a}> i_b) \\ \text{ or } (b \notin \lambda^{B}) }} \mathbf{h}(a,b)
    \prod\limits_{\substack{a, b \in \lambda \text{ s.t. } \\(a \in \lambda^{B} )\cap( i_{b} \leq i_a)}} \mathbf{f}(a,b) \prod\limits_{\substack{a, b \in \lambda \text{ s.t. }\\ b \neq (1,1)}} \mathbf{k}(a,b)^{-1} =\prod_{\substack{a, b \in \lambda \text{ s.t. } \\ (a\in \lambda^{B})\cap(b\in\lambda^{R}) \\ \cap(i_b \leq i_a) }} \mathbf{f}(a,b) \,.
\end{equation}
From the arguments of the other q-Pochhamers appearing in $I(0,q,t,\{u^{-1}v_a\},z,\zeta)$, we see there are identical cancellations from the remaining terms (cancelling any zeros from the denominator of the 1-loop determinant for the $\mathcal{N}=4$ vector multiplet), and at the end of the day we have:
\begin{equation}
    I(0,q,t,\{u^{-1}v_a\},z,\zeta)
    = e^{\frac{\log\zeta}{\log q} \left(\sum_{a\in\lambda }\log u^{-1}v_a\right)} 
    \prod_{\substack{a, b \in \lambda \text{ s.t. } \\ (a\in \lambda^{B})\cap(b\in\lambda^{R}) \\ \cap(i_b \leq i_a) }} 
    \frac{e^{-\mathcal{E}\left[-\log\left(qzu^{-1}\frac{v_a}{v_b}\right)\right]} \left(qzu^{-1}\frac{v_a}{v_b};q\right)_{\infty} }{e^{-\mathcal{E}\left[-\log\left( zu \frac{v_a}{v_b}\right)\right]} \left( zu \frac{v_a}{v_b};q\right)_{\infty}}.
\end{equation}
Note $a,b \in \lambda$ such that  $(a\in \lambda^{B})\cap(b\in\lambda^{R})  \cap(i_b \leq i_a)$ uniquely defines a box $s$ in the same column as $a$ and row as $b$, and this identification is 1-1. In particular $i_a-i_b = l_{\lambda}(s)$ and $j_a-j_b = -a_{\lambda}(s)$ in terms of arm and leg lengths. In total then, we can write:
\begin{equation}
    \mathcal{Z}^{\lambda}_{S^1\times D} =  \mathcal{Z}^{\lambda}_{\text{Classical}}  \mathcal{Z}^{\lambda}_{\text{1-loop}} \mathcal{Z}^{\lambda}_{\text{Vortex}}
\end{equation}
where:
\begin{equation}
    \mathcal{Z}^{\lambda}_{\text{1-loop}} = \prod_{s\in \lambda} \frac{\left(q z^{a_{\lambda}(s)+l_{\lambda}(s)+1}u^{-a_{\lambda}(s)+l_{\lambda}(s)-1};q\right)_{\infty}}{\left(z^{a_{\lambda}(s)+l_{\lambda}(s)+1}u^{-a_{\lambda}(s)+l_{\lambda}(s)+1};q\right)_{\infty}}
\end{equation}
and:
\begin{equation}
    \mathcal{Z}^{\lambda}_{\text{Classical}} = e^{-\left[\sum_{s\in\lambda}c(s) \right]\frac{\log\zeta\log z}{\log q}}
    e^{\left[\sum_{s\in\lambda}h(s) \right]\frac{\log v \log z}{\log q}}
    e^{\left[\sum_{s\in\lambda}h(s) \right]\frac{\log u \log \zeta}{\log q}}
    e^{-\left[\sum_{s\in\lambda}c(s) \right]\frac{\log u \log v}{\log q}}
\end{equation}
where we defined $v=q^{\frac{1}{4}}t^{-\frac{1}{2}}$. In writing the classical piece, the explicit form of $\mathcal{E}$ (\ref{eq:zetafunctionreg}) and identities (\ref{eq:sumofcontentshooks}) have been used.

\paragraph{Superconformal index}

We demonstrate that the above hemisphere partition function fuses \textit{exactly} to the superconformal index computed in \cite{Choi:2019zpz}, which we rewrite in our notation as:
\begin{equation}
    \mathcal{Z}^{\text{S.C.}}_{S^1\times S^2}(q,t,x,\zeta) = \sum_{\lambda} \mathcal{Z}^{\text{S.C.},\lambda}_{\text{Pert}} \mathcal{Z}^{\lambda}_{\text{Vortex}}(q,t,z,\zeta) \mathcal{Z}^{\lambda}_{\text{Vortex}}(q^{-1},t^{-1},z^{-1},\zeta^{-1}) 
\end{equation}
where $\mathcal{Z}^{\text{S.C.}}_{\text{pert}}$ is given in (2.49) of \cite{Choi:2019zpz} as:
\begin{equation}
\begin{split}
    \mathcal{Z}^{\text{S.C.},\lambda}_{\text{Pert}} = &\prod_{a\in\lambda} \frac{\left(q v_a^{-1};q\right)_{\infty}}{\left(u^2 v_a^{-1};q\right)_{\infty} }
    \frac{\left(q u^{-2} v_a;q\right)_{\infty}}{\left( v_a ;q\right)_{\infty}}
    \prod_{a,b \in \lambda} 
    \frac{\left(\frac{v_b}{v_a} ;q\right)_{\infty}}{\left(q\frac{v_a}{v_b} ;q\right)_{\infty}} 
     \frac{\left(u^2\frac{v_a}{v_b} ;q\right)_{\infty}}{\left(qu^{-2}\frac{v_b}{v_a} ;q\right)_{\infty}}\\
    &\prod_{a,b \in \lambda}
    \frac{\left(qzu^{-1}\frac{v_a}{v_b} ;q\right)_{\infty}}{\left(z^{-1}u\frac{v_b}{v_a} ;q\right)_{\infty}} 
    \frac{\left(qz^{-1}u^{-1}\frac{v_a}{v_b} ;q\right)_{\infty}}{\left(z u\frac{v_b}{v_a} ;q\right)_{\infty}}
\end{split}
\end{equation}
and implicitly any vanishing factors in the q-Pochhammers are discarded. In fact, this assumption can be dropped as we can write:
\begin{equation}
\begin{split}
    \mathcal{Z}^{\text{S.C.},\lambda}_{\text{Pert}}  = & \left\lVert \prod_{a,b\,\in \lambda}  \frac{\left(u^2\frac{v_a}{v_b};q\right)_{\infty}}{\left(q\frac{v_a}{v_b};q\right)_{\infty}}
    \prod_{a\in \lambda }  \frac{\left(q u^{-2} v_a ;q\right)_{\infty}}{\left(v_a ;q\right)_{\infty}}\right.\\
    &\prod_{\substack{a,b\in \lambda \text{ s.t. } \\  (b \in \lambda^{B} )\cap( i_{a}> i_b) \\ \text{ or } (b \notin \lambda^{B})}} \frac{\left(qz^{-1}u^{-1}\frac{v_a}{v_b};q\right)_{\infty}}{\left(z^{-1}u\frac{v_a}{v_b};q\right)_{\infty}}
    \left.\prod_{\substack{a,b\in \lambda \text{ s.t. } \\ (a \in \lambda^{B} )\cap( i_{b} \leq i_a)}}  \frac{\left(qzu^{-1}\frac{v_a}{v_b};q\right)_{\infty}}{\left(zu\frac{v_a}{v_b};q\right)_{\infty}} \right\rVert_{\text{S.C.}}^2 \\
     = &  \left\Vert \mathcal{Z}^{\lambda}_{\text{1-loop}} \right\rVert_{\text{S.C.}}^2\,.
\end{split}
\end{equation}
where the gluing $\left\lVert \cdot \right\rVert_{\text{S.C.}}^2$ is given by (\ref{eq:SCindexgluing}) and we have used the analytic continuation of the q-Pochhammer (\ref{eq:qpochhameranalyticcontinuation}). In the second line identical cancellations to the perturbative piece of the hemisphere partition function have been made. We also have $\left\lVert\mathcal{Z}^{\lambda}_{\text{Classical}}\right\rVert_{\text{S.C.}}^2  = 1$. In total then:
\begin{equation}
    \mathcal{Z}^{\text{S.C.}}_{S^1\times S^2}(q,t,x,\zeta) =  \sum_{\lambda} \left\lVert
    \mathcal{Z}^{\lambda}_{\text{Classical}}  \mathcal{Z}^{\lambda}_{\text{1-loop}} \mathcal{Z}^{\lambda}_{\text{Vortex}}
    \right\rVert_{\text{S.C.}}^2
\end{equation}
The inclusion of fluxes in the superconformal index can be achieved as usual by shifting fugacities $z\rightarrow zq^{-\frac{\mathfrak{n}_z}{2}}$, $\zeta \rightarrow \zeta q^{-\frac{\mathfrak{n}_\zeta}{2}}$ in one block, and the opposite in the other.  The factorisation is still exact but now the classical piece glues non-trivially. 

\subsection{Geometry of the boundary condition}\label{appendix:geometricboundarycondition}
Here we give evidence to support the claim in section \ref{sec:hemispherelocalisation} that the holomorphic Lagrangian defined by the image of the boundary conditions \label{eq:exceptionalDirichletBCADHM} on the Higgs branch coincides with the attracting Lagrangian $\mathcal{L}_{\lambda}$ of the fixed point labelled by $\lambda$. 

We first show that the gauge group is completely broken in a neighbourhood of any fixed point $\lambda$ of the ADHM theory described in section \ref{sec:background}, i.e. there are no non-trivial $GL(N,\mathbb{C})$ orbits. More precisely, if $\theta \in \mathfrak{gl}(N,\mathbb{C})\simeq \text{Hom}(V,V)$ is such that:
\begin{equation}\label{eq:boundaryconditionbreaking}
    \big( A+[\theta,A], B+[\theta,B], I+\theta I, J-J\theta\big) \in \mathcal{B}_{\lambda} \quad \Rightarrow \quad \theta = 0.
\end{equation}
for the values $(A, B, I)$ at the fixed point $\lambda$ (we will use this notation throughout the remainder of this section). The left hand side is an infinitesimal gauge transformation. This is line with the argument in \cite{Bullimore:2016nji} that there should be no non-trivial gauge orbits in $\mathcal{B}_{\lambda}$, at least locally around the fixed point. For the left side of (\ref{eq:boundaryconditionbreaking}) to hold we must have:
\begin{equation}\label{eq:boudnaryconditionbreakingequations}
\begin{split}
P_{ab} \equiv [\theta,A]_{ab} = 0 \qquad &\forall\,a, b \in \lambda \text{ s.t. }  (b \in \lambda^{B} )\cap( i_{a}> i_b) \text{ or } (b \notin \lambda^{B})\\
Q_{ab} \equiv [\theta,B]_{ab}  = 0 \qquad &\forall\,a, b \in \lambda \text{ s.t. }  (a \in \lambda^{B} )\cap( i_{b} \leq i_a) \\
(\theta i)_{a}=0 \qquad &\forall\, a.
\end{split}
\end{equation}
The last equation implies that, $\theta_{a(1,1)} = 0$ $\forall\, a$. If $\lambda$ has more than one row i.e. $\lambda_{1}^{\vee} > 0 $ then $\forall a$:
\begin{equation}
\begin{split}
    P_{a (1,1)} = \theta_{a (2,1)} &{A}_{(2,1)(1,1)} - {A}_{a b}\theta_{b(1,1)} = \theta_{a (2,1) } = 0\\
    \Rightarrow\, &P_{a (2,1)} = \theta_{a (3,1) } = 0\, ...
\end{split}
\end{equation}
so:
\begin{equation}\label{eq:Youngdiagramfirstcolumn}
    \theta_{ab} = 0 \quad \forall \,b \in (\cdot,1).
\end{equation}
where $(\cdot,1)$ is the first column. Next, we consider a general $\theta_{uv}$ and show that it must be zero. We enumerate the equations in (\ref{eq:boudnaryconditionbreakingequations}) that it can appear in, also noting the restrictions on indices on the RHS:
\begin{enumerate}[label=(\alph*)]
    \item If $ v \notin  (1,\cdot)$, $ u \in (1,\cdot)$ then $P_{u, \uparrow v} = \theta_{uv} = 0$.
    \item If $ v \notin  (1,\cdot)$, $ u \notin (1,\cdot)$ then $P_{u, \uparrow v} = \theta_{uv} - \theta_{\uparrow u, \uparrow v}= 0$.
    \item If $v \notin \lambda^{B}$, $u \notin \lambda^{B}$ then  $P_{\downarrow u,v} = \theta_{\downarrow u,\downarrow v } - \theta_{uv} = 0$
    \item If $v\in  \lambda^{B}$,  $u \notin \lambda^{B}$ and $i_u \geq i_v$ then $P_{\downarrow u,v} = - \theta_{uv} = 0$.
    \item If $u \in \lambda^{B}$, $u \notin (\cdot,1)$, $v \notin (\cdot,1)$ and $i_u\geq i_v$ then $Q_{u,\leftarrow v} = \theta_{uv} - \theta_{\leftarrow u,\leftarrow v}=0$.
    \item If $u \in \lambda^{B}$, $u \in (\cdot,1)$, $v \notin (\cdot,1)$ and $i_u \geq i_v$ then $Q_{u,\leftarrow v} = \theta_{uv} =0$.
    \item If $u\notin \lambda^{R}$, $(\rightarrow u) \in \lambda^{B}$, $v \notin \lambda^{R}$ and $i_u \geq i_v$ then $Q_{\rightarrow u, v} = \theta_{\rightarrow u, \rightarrow v} - \theta_{uv} = 0$.
    \item If $u\notin \lambda^{R}$, $(\rightarrow u) \in \lambda^{B}$, $v \in \lambda^{R}$ and $i_u\geq i_v$ then $Q_{\rightarrow u, v} = - \theta_{uv} = 0$.
\end{enumerate}
All the non-zero $(A,B,I)$ are normalised to 1 here. Note that the above already demand some of the $\{\theta_{uv}\}$ to be $0$. We consider different cases, assuming $v\notin (\cdot,1)$:
\begin{itemize}
    \item If $i_v > i_u$, repeated applying (b):
            \begin{equation}
                \theta_{uv} - \theta_{\uparrow^{i_u-1}u, \uparrow^{i_u-1} v} = \theta_{uv}  = 0
            \end{equation}
            where the last equality comes from (a). So:
            \begin{equation}
                \theta_{uv} = 0 \quad \forall \quad u,v\in \lambda\,\rvert\,i_u < i_v.
            \end{equation}
            
    \item If $i_u \geq i_v$ and also $l(u) >  l(v)$ (necessarily requiring that $j_v > j_u$) then repeated application of (c) gives:
        \begin{equation}\label{eq:Youngdiagramdownshift}
            \theta_{\downarrow^{l(v)}u, \downarrow^{l(v)}v} - \theta_{uv}= -\theta_{uv} =  0,
        \end{equation}
        where the last equality follows from (d). 
        
        \item It remains to show that $\theta_{uv}=0$ where $i_u \geq i_v$ and $l(u) \leq l(v)$. Repeated application of (c) gives:
        \begin{equation}
            \theta_{uv} = \theta_{u' v'}
        \end{equation}
        where $u'\equiv (\downarrow^{l(u)} u) \, \in \lambda^{B}$, and $v'\equiv (\downarrow^{l(u)}v)$. If $u' \in (\cdot,1)$ i.e. $u'$ is the lowest box in the first column of the Young diagram, we are done since $\theta_{u'v'} = \theta_{uv} = 0$ by (f) (if $v \in (\cdot,1)$ then (\ref{eq:Youngdiagramfirstcolumn}) implies $\theta_{uv}=0$ already). Otherwise, applying (e) we have:
        \begin{equation}
            \theta_{uv} = \theta_{u'v'} = \theta_{\leftarrow u',\leftarrow v'} \equiv \theta_{u'', v''}.
        \end{equation}
        If $v'' \in (\cdot,1)$ we are done. Otherwise, we still clearly have $i_{u''}\geq i_{v''}$ and if $l(u'') > l(v'')$, then from (\ref{eq:Youngdiagramdownshift}) we have:
        \begin{equation}
            \theta_{u''v''} = \theta_{uv} = 0.
        \end{equation}
        If instead $l(u'') \leq l(v'')$, we can iterate what we just did to $u''$ and $v''$, decreasing the $j$ coordinate of $u$ and $v$ each time. Eventually we must have that:
        \begin{equation}
            \theta_{uv} = \theta_{\tilde{u}\tilde{v}}
        \end{equation}
        where either $l(\tilde{u}) > l(\tilde{v})$, or either $\tilde{u}$ or $\tilde{v} \in (\cdot,1)$, and thus $\theta_{uv}= 0$. 
\end{itemize}
Therefore we have shown (\ref{eq:boundaryconditionbreaking}), i.e. that $\theta_{uv} = 0$ $\forall \,u,v$.

\paragraph{Tangent space character} 
With this information, we now prove:
\begin{equation}\label{eq:appendixtangentspacecharacter}
    T_{\lambda}\mathcal{B}_{\lambda}^{\text{Bulk}} =\sum_{s\in\lambda} z^{a_{\lambda}(s)+l_{\lambda}(s)+1} t^{\frac{1}{2}(-a_{\lambda}(s)+l_{\lambda}(s)+1)} =  T_{\lambda}(\mathcal{L}_{\lambda}),
\end{equation}
\text{i.e.} $T_{\lambda}\mathcal{B}_{\lambda}^{\text{Bulk}} \subset T_{\lambda}\text{Hilb}^N(\mathbb{C}^2) $ is precisely the positive weight subspace under the $z$-action (since hook-length is always positive), which is locally the attracting Lagrangian submanifold. We can describe the tangent space to $\lambda$ in $\mathcal{B}_{\lambda}^{\text{Bulk}}$ (which we have shown to coincide with $\mathcal{B}_{\lambda}\cap \mu_{\mathbb{C}}^{-1}(0)$ in a neighbourhood of the fixed point), as the kernel of the differential of the complex moment map $d\mu$:
\begin{equation}\label{eq:tangentspacecomplex}
    \begin{split}
    \bigoplus\limits_{\substack{a, b \in \lambda \text{ s.t. } \\(a \in \lambda^{B} )\cap( i_{b} \leq i_a)}} \text{Hom}&(V_a,Q_1 \otimes V_b)\\[-15pt]
    &\bigoplus\\
    \bigoplus\limits_{\substack{ a, b \in \lambda \text{ s.t. } \\  (b \in \lambda^{B} )\cap( i_{a}> i_b) \\ \text{ or } (b \notin \lambda^{B}) }} \text{Hom}&(V_a,Q_2 \otimes V_b)\\[-15pt]
    &\bigoplus\\
    \text{Hom}(V, &\bigwedge^2 Q \otimes W)
    \end{split}
    \quad\overset{d\mu}{\longrightarrow}\quad
    \begin{split}
    \text{Hom}(V,V)\otimes\bigwedge^{2}Q
    \end{split}
\end{equation}
where $Q_i$ is the 1-dimensional module of $T_i$, $i=1,2$ (with fugacities $t_1,t_2$), and $Q$ is the 2-dimensional module of $T^2 = T_1\times T_2$. They have been inserted so that $d\mu$ is an equivariant map. No quotient by the gauge action is needed by the result just proven. That is, for $(\tilde{A}, \tilde{B}, \tilde{J})$ in the vector space on the left hand side of (\ref{eq:tangentspacecomplex}):
\begin{equation}
    d\mu (\tilde{A},\tilde{B}, \tilde{J}) =\left[A,\tilde{B} \right] + \left[ \tilde{A},B\right] + I \tilde{J}.
\end{equation}
and $T_{\lambda}\mathcal{B}_{\lambda}^{\text{Bulk}} = \text{Ker}(d\mu)$. $d\mu$ must be surjective since $\text{Ker}(d\mu)$ is $N$ (complex) dimensional as the tangent space to $\mathcal{B}_{\lambda}^{\text{Bulk}}$, there are no-nontrivial gauge orbits in it, and $\mu^{-1}(0)$ provides at most $N^2$ constraints.  Notice that $\lambda(t_1,t_2)$ makes $V$ (and in fact each $V_a$) into a $T^2$-module. We abuse notation by representing the weight of an eigenspace with the eigenspace itself, e.g. $V_a = t_1^{-i_a+1} t_2^{-j_a+1} = z^{-i_a+j_b}t^{\frac{1}{2}(-i_a-j_a+2)} \equiv t \tilde{v}_a^{-1}$, and compute the $T^2$-character of the tangent space:
\begin{equation}
\begin{split}
    T_{\lambda}\mathcal{B}_{\lambda}^{\text{Bulk}} =& \sum\limits_{\substack{a, b \in \lambda \text{ s.t. } \\(a \in \lambda^{B} )\cap( i_{b} \leq i_a)}} t_1 V_a^{*} \otimes V_b +
    \sum \limits_{\substack{ a, b \in \lambda \text{ s.t. } \\  (b \in \lambda^{B} )\cap( i_{a}> i_b) \\ \text{ or } (b \notin \lambda^{B}) }}  t_2 V_a^{*} \otimes V_b\\
    &\quad +t_1t_2 \sum_{a\in \lambda}V_a^{*} - t_1t_2 \sum_{a, b \in \lambda} V_a^{*}\otimes V_b \\
    =&\quad t\, \Bigg( \sum\limits_{\substack{a, b \in \lambda \text{ s.t. } \\(a \in \lambda^{B} )\cap( i_{b} \leq i_a)}} zt^{-\frac{1}{2}} \frac{\tilde{v}_a}{\tilde{v}_b}+
    \sum \limits_{\substack{ a, b \in \lambda \text{ s.t. } \\  (b \in \lambda^{B} )\cap( i_{a}> i_b) \\ \text{ or } (b \notin \lambda^{B}) }}  z^{-1}t^{-\frac{1}{2}} \frac{\tilde{v}_a}{\tilde{v}_b} -   \sum\limits_{\substack{a, b \in \lambda \text{ s.t. }\\ b \neq (1,1)}} \frac{\tilde{v}_a}{\tilde{v}_b}  \Bigg)\\
    =&\quad \sum_{\substack{a, b \in \lambda \text{ s.t. } \\ (a\in \lambda^{B})\cap(b\in\lambda^{R}) \\ \cap(i_b \leq i_a) }} zt^{\frac{1}{2}} \frac{\tilde{v}_a}{\tilde{v}_b}
\end{split}
\end{equation}
where we have performed precisely the same cancellations as in (\ref{eq:1-looppiececancellatinos}). Using the aforementioned correspondence between a pair of boxes $(a,b)$ where $a\in \lambda^{B}$ and $ b \in \lambda^{R}$ with a single box $s$, we arrive at (\ref{eq:appendixtangentspacecharacter}).

\bibliographystyle{JHEP}
\bibliography{blocksandvortices}

\providecommand{\href}[2]{#2}\begingroup\raggedright\begin{thebibliography}{10}

\bibitem{Bullimore:2016hdc}
M.~Bullimore, T.~Dimofte, D.~Gaiotto, J.~Hilburn and H.-C. Kim, \emph{{Vortices
  and Vermas}}, \href{https://doi.org/10.4310/ATMP.2018.v22.n4.a1}{\emph{Adv.
  Theor. Math. Phys.} {\bfseries 22} (2018) 803}
  [\href{https://arxiv.org/abs/1609.04406}{{\ttfamily 1609.04406}}].

\bibitem{Aganagic:2017smx}
M.~Aganagic, E.~Frenkel and A.~Okounkov, \emph{{Quantum $q$-Langlands
  Correspondence}}, \href{https://doi.org/10.1090/mosc/278}{\emph{Trans. Moscow
  Math. Soc.} {\bfseries 79} (2018) 1}
  [\href{https://arxiv.org/abs/1701.03146}{{\ttfamily 1701.03146}}].

\bibitem{Bullimore:2015lsa}
M.~Bullimore, T.~Dimofte and D.~Gaiotto, \emph{{The Coulomb Branch of 3d
  ${\mathcal{N}= 4}$ Theories}},
  \href{https://doi.org/10.1007/s00220-017-2903-0}{\emph{Commun. Math. Phys.}
  {\bfseries 354} (2017) 671}
  [\href{https://arxiv.org/abs/1503.04817}{{\ttfamily 1503.04817}}].

\bibitem{Rozansky:1996bq}
L.~Rozansky and E.~Witten, \emph{{HyperKahler geometry and invariants of three
  manifolds}}, \href{https://doi.org/10.1007/s000290050016}{\emph{Selecta
  Math.} {\bfseries 3} (1997) 401}
  [\href{https://arxiv.org/abs/hep-th/9612216}{{\ttfamily hep-th/9612216}}].

\bibitem{Nakajima:2015txa}
H.~Nakajima, \emph{{Towards a mathematical definition of Coulomb branches of
  $3$-dimensional $\mathcal{N}=4$ gauge theories, I}},
  \href{https://doi.org/10.4310/ATMP.2016.v20.n3.a4}{\emph{Adv. Theor. Math.
  Phys.} {\bfseries 20} (2016) 595}
  [\href{https://arxiv.org/abs/1503.03676}{{\ttfamily 1503.03676}}].

\bibitem{Braverman:2016wma}
A.~Braverman, M.~Finkelberg and H.~Nakajima, \emph{{Towards a mathematical
  definition of Coulomb branches of $3$-dimensional $\mathcal{N} = 4$ gauge
  theories, II}}, \href{https://doi.org/10.4310/ATMP.2018.v22.n5.a1}{\emph{Adv.
  Theor. Math. Phys.} {\bfseries 22} (2018) 1071}
  [\href{https://arxiv.org/abs/1601.03586}{{\ttfamily 1601.03586}}].

\bibitem{Cremonesi:2013lqa}
S.~Cremonesi, A.~Hanany and A.~Zaffaroni, \emph{{Monopole operators and Hilbert
  series of Coulomb branches of $3d$ $\mathcal{N} = 4$ gauge theories}},
  \href{https://doi.org/10.1007/JHEP01(2014)005}{\emph{JHEP} {\bfseries 01}
  (2014) 005} [\href{https://arxiv.org/abs/1309.2657}{{\ttfamily 1309.2657}}].

\bibitem{Bullimore:2018jlp}
M.~Bullimore, A.~Ferrari and H.~Kim, \emph{{Twisted Indices of 3d ${\mathcal N}
  = 4$ Gauge Theories and Enumerative Geometry of Quasi-Maps}},
  \href{https://doi.org/10.1007/JHEP07(2019)014}{\emph{JHEP} {\bfseries 07}
  (2019) 014} [\href{https://arxiv.org/abs/1812.05567}{{\ttfamily
  1812.05567}}].

\bibitem{Bullimore:2019qnt}
M.~Bullimore, A.~E. Ferrari and H.~Kim, \emph{{The 3d Twisted Index and
  Wall-Crossing}},  \href{https://arxiv.org/abs/1912.09591}{{\ttfamily
  1912.09591}}.

\bibitem{Costello:2018swh}
K.~Costello, T.~Creutzig and D.~Gaiotto, \emph{{Higgs and Coulomb branches from
  vertex operator algebras}},
  \href{https://doi.org/10.1007/JHEP03(2019)066}{\emph{JHEP} {\bfseries 03}
  (2019) 066} [\href{https://arxiv.org/abs/1811.03958}{{\ttfamily
  1811.03958}}].

\bibitem{Kapustin:2010xq}
A.~Kapustin, B.~Willett and I.~Yaakov, \emph{{Nonperturbative Tests of
  Three-Dimensional Dualities}},
  \href{https://doi.org/10.1007/JHEP10(2010)013}{\emph{JHEP} {\bfseries 10}
  (2010) 013} [\href{https://arxiv.org/abs/1003.5694}{{\ttfamily 1003.5694}}].

\bibitem{Gang:2011xp}
D.~Gang, E.~Koh, K.~Lee and J.~Park, \emph{{ABCD of 3d ${\cal N}=8$ and 4
  Superconformal Field Theories}},
  \href{https://arxiv.org/abs/1108.3647}{{\ttfamily 1108.3647}}.

\bibitem{Benini:2015eyy}
F.~Benini, K.~Hristov and A.~Zaffaroni, \emph{{Black hole microstates in
  AdS$_{4}$ from supersymmetric localization}},
  \href{https://doi.org/10.1007/JHEP05(2016)054}{\emph{JHEP} {\bfseries 05}
  (2016) 054} [\href{https://arxiv.org/abs/1511.04085}{{\ttfamily
  1511.04085}}].

\bibitem{Benini:2016rke}
F.~Benini, K.~Hristov and A.~Zaffaroni, \emph{{Exact microstate counting for
  dyonic black holes in AdS4}},
  \href{https://doi.org/10.1016/j.physletb.2017.05.076}{\emph{Phys. Lett. B}
  {\bfseries 771} (2017) 462}
  [\href{https://arxiv.org/abs/1608.07294}{{\ttfamily 1608.07294}}].

\bibitem{Choi:2019dfu}
S.~Choi and C.~Hwang, \emph{{Universal 3d Cardy Block and Black Hole Entropy}},
  \href{https://doi.org/10.1007/JHEP03(2020)068}{\emph{JHEP} {\bfseries 03}
  (2020) 068} [\href{https://arxiv.org/abs/1911.01448}{{\ttfamily
  1911.01448}}].

\bibitem{Choi:2019zpz}
S.~Choi, C.~Hwang and S.~Kim, \emph{{Quantum vortices, M2-branes and black
  holes}},  \href{https://arxiv.org/abs/1908.02470}{{\ttfamily 1908.02470}}.

\bibitem{Nakajima:1994nid}
H.~Nakajima, \emph{{Instantons on ALE spaces, quiver varieties, and Kac-Moody
  algebras}}, \href{https://doi.org/10.1215/S0012-7094-94-07613-8}{\emph{Duke
  Math. J.} {\bfseries 76} (1994) 365}.

\bibitem{Braden:2014iea}
T.~Braden, A.~Licata, N.~Proudfoot and B.~Webster, \emph{{Quantizations of
  conical symplectic resolutions II: category $\mathcal O$ and symplectic
  duality}},  \href{https://arxiv.org/abs/1407.0964}{{\ttfamily 1407.0964}}.

\bibitem{MR3594663}
T.~Braden, A.~Licata, N.~Proudfoot and B.~Webster, \emph{Quantizations of
  conical symplectic resolutions}, {\emph{Ast{\'e}risque} (2016)
  iii{\textendash}iv}.

\bibitem{nakajima1998}
H.~Nakajima, \emph{Quiver varieties and kac-moody algebras},
  \href{https://doi.org/10.1215/S0012-7094-98-09120-7}{\emph{Duke Math. J.}
  {\bfseries 91} (1998) 515}.

\bibitem{Aganagic:2017gsx}
M.~Aganagic and A.~Okounkov, \emph{{Quasimap counts and Bethe eigenfunctions}},
  {\emph{Moscow Math. J.} {\bfseries 17} (2017) 565}
  [\href{https://arxiv.org/abs/1704.08746}{{\ttfamily 1704.08746}}].

\bibitem{Smirnov:2016cqz}
A.~Smirnov, \emph{{Quantum difference equations for quiver varieties}}, Ph.D.
  thesis, Columbia U., 2016.
\newblock 10.7916/D8RN37T6.

\bibitem{Koroteev:2017nab}
P.~Koroteev, P.~P. Pushkar, A.~Smirnov and A.~M. Zeitlin, \emph{{Quantum
  K-theory of Quiver Varieties and Many-Body Systems}},
  \href{https://arxiv.org/abs/1705.10419}{{\ttfamily 1705.10419}}.

\bibitem{Beem:2012mb}
C.~Beem, T.~Dimofte and S.~Pasquetti, \emph{{Holomorphic Blocks in Three
  Dimensions}}, \href{https://doi.org/10.1007/JHEP12(2014)177}{\emph{JHEP}
  {\bfseries 12} (2014) 177} [\href{https://arxiv.org/abs/1211.1986}{{\ttfamily
  1211.1986}}].

\bibitem{Pasquetti:2011fj}
S.~Pasquetti, \emph{{Factorisation of N = 2 Theories on the Squashed
  3-Sphere}}, \href{https://doi.org/10.1007/JHEP04(2012)120}{\emph{JHEP}
  {\bfseries 04} (2012) 120} [\href{https://arxiv.org/abs/1111.6905}{{\ttfamily
  1111.6905}}].

\bibitem{Hwang:2012jh}
C.~Hwang, H.-C. Kim and J.~Park, \emph{{Factorization of the 3d superconformal
  index}}, \href{https://doi.org/10.1007/JHEP08(2014)018}{\emph{JHEP}
  {\bfseries 08} (2014) 018} [\href{https://arxiv.org/abs/1211.6023}{{\ttfamily
  1211.6023}}].

\bibitem{Cabo-Bizet:2016ars}
A.~Cabo-Bizet, \emph{{Factorising the 3D Topologically Twisted Index}},
  \href{https://doi.org/10.1007/JHEP04(2017)115}{\emph{JHEP} {\bfseries 04}
  (2017) 115} [\href{https://arxiv.org/abs/1606.06341}{{\ttfamily
  1606.06341}}].

\bibitem{Crew:2020jyf}
S.~Crew, N.~Dorey and D.~Zhang, \emph{{Factorisation of 3d $\mathcal{N}=4$
  Twisted Indices and the Geometry of Vortex Moduli Space}},
  \href{https://doi.org/10.1007/JHEP08(2020)015}{\emph{JHEP} {\bfseries 08}
  (2020) 015} [\href{https://arxiv.org/abs/2002.04573}{{\ttfamily
  2002.04573}}].

\bibitem{Benini:2013yva}
F.~Benini and W.~Peelaers, \emph{{Higgs branch localization in three
  dimensions}}, \href{https://doi.org/10.1007/JHEP05(2014)030}{\emph{JHEP}
  {\bfseries 05} (2014) 030} [\href{https://arxiv.org/abs/1312.6078}{{\ttfamily
  1312.6078}}].

\bibitem{Fujitsuka:2013fga}
M.~Fujitsuka, M.~Honda and Y.~Yoshida, \emph{{Higgs branch localization of 3d
  \ensuremath{\mathscr{N}} = 2 theories}},
  \href{https://doi.org/10.1093/ptep/ptu158}{\emph{PTEP} {\bfseries 2014}
  (2014) 123B02} [\href{https://arxiv.org/abs/1312.3627}{{\ttfamily
  1312.3627}}].

\bibitem{Bullimore:2020jdq}
M.~Bullimore, S.~Crew and D.~Zhang, \emph{{Boundaries, Vermas, and
  Factorisation}},  \href{https://arxiv.org/abs/2010.09741}{{\ttfamily
  2010.09741}}.

\bibitem{Bullimore:2016nji}
M.~Bullimore, T.~Dimofte, D.~Gaiotto and J.~Hilburn, \emph{{Boundaries, Mirror
  Symmetry, and Symplectic Duality in 3d $\mathcal{N}=4$ Gauge Theory}},
  \href{https://doi.org/10.1007/JHEP10(2016)108}{\emph{JHEP} {\bfseries 10}
  (2016) 108} [\href{https://arxiv.org/abs/1603.08382}{{\ttfamily
  1603.08382}}].

\bibitem{Atiyah:1978ri}
M.~Atiyah, N.~J. Hitchin, V.~Drinfeld and Y.~Manin, \emph{{Construction of
  Instantons}}, \href{https://doi.org/10.1016/0375-9601(78)90141-X}{\emph{Phys.
  Lett. A} {\bfseries 65} (1978) 185}.

\bibitem{Nekrasov:1998ss}
N.~Nekrasov and A.~S. Schwarz, \emph{{Instantons on noncommutative R**4 and
  (2,0) superconformal six-dimensional theory}},
  \href{https://doi.org/10.1007/s002200050490}{\emph{Commun. Math. Phys.}
  {\bfseries 198} (1998) 689}
  [\href{https://arxiv.org/abs/hep-th/9802068}{{\ttfamily hep-th/9802068}}].

\bibitem{Aharony:2008ug}
O.~Aharony, O.~Bergman, D.~L. Jafferis and J.~Maldacena, \emph{{N=6
  superconformal Chern-Simons-matter theories, M2-branes and their gravity
  duals}}, \href{https://doi.org/10.1088/1126-6708/2008/10/091}{\emph{JHEP}
  {\bfseries 10} (2008) 091} [\href{https://arxiv.org/abs/0806.1218}{{\ttfamily
  0806.1218}}].

\bibitem{Costello:2017fbo}
K.~Costello, \emph{{Holography and Koszul duality: the example of the $M2$
  brane}},  \href{https://arxiv.org/abs/1705.02500}{{\ttfamily 1705.02500}}.

\bibitem{Kodera:2016faj}
R.~Kodera and H.~Nakajima, \emph{{Quantized Coulomb branches of Jordan quiver
  gauge theories and cyclotomic rational Cherednik algebras}}, {\emph{Proc.
  Symp. Pure Math.} {\bfseries 98} (2018) 49}
  [\href{https://arxiv.org/abs/1608.00875}{{\ttfamily 1608.00875}}].

\bibitem{Yoshida:2014ssa}
Y.~Yoshida and K.~Sugiyama, \emph{{Localization of 3d $\mathcal{N}=2$
  Supersymmetric Theories on $S^1 \times D^2$}},
  \href{https://arxiv.org/abs/1409.6713}{{\ttfamily 1409.6713}}.

\bibitem{nakajima1999lectures}
H.~Nakajima, \emph{Lectures on Hilbert Schemes of Points on Surfaces},
  University lecture series. American Mathematical Society, 1999.

\bibitem{Smirnov:2018drm}
A.~Smirnov, \emph{{Elliptic stable envelope for Hilbert scheme of points in the
  plane}},  \href{https://arxiv.org/abs/1804.08779}{{\ttfamily 1804.08779}}.

\bibitem{macdonald1998symmetric}
I.~G. Macdonald, \emph{Symmetric functions and Hall polynomials}. Oxford
  University Press, 1998.

\bibitem{deBoer:1996mp}
J.~de~Boer, K.~Hori, H.~Ooguri and Y.~Oz, \emph{{Mirror symmetry in
  three-dimensional gauge theories, quivers and D-branes}},
  \href{https://doi.org/10.1016/S0550-3213(97)00125-9}{\emph{Nucl. Phys. B}
  {\bfseries 493} (1997) 101}
  [\href{https://arxiv.org/abs/hep-th/9611063}{{\ttfamily hep-th/9611063}}].

\bibitem{Porrati:1996xi}
M.~Porrati and A.~Zaffaroni, \emph{{M theory origin of mirror symmetry in
  three-dimensional gauge theories}},
  \href{https://doi.org/10.1016/S0550-3213(97)00061-8}{\emph{Nucl. Phys. B}
  {\bfseries 490} (1997) 107}
  [\href{https://arxiv.org/abs/hep-th/9611201}{{\ttfamily hep-th/9611201}}].

\bibitem{Intriligator:1996ex}
K.~A. Intriligator and N.~Seiberg, \emph{{Mirror symmetry in three-dimensional
  gauge theories}},
  \href{https://doi.org/10.1016/0370-2693(96)01088-X}{\emph{Phys. Lett. B}
  {\bfseries 387} (1996) 513}
  [\href{https://arxiv.org/abs/hep-th/9607207}{{\ttfamily hep-th/9607207}}].

\bibitem{Bobev:2015kza}
N.~Bobev, M.~Bullimore and H.-C. Kim, \emph{{Supersymmetric Casimir Energy and
  the Anomaly Polynomial}},
  \href{https://doi.org/10.1007/JHEP09(2015)142}{\emph{JHEP} {\bfseries 09}
  (2015) 142} [\href{https://arxiv.org/abs/1507.08553}{{\ttfamily
  1507.08553}}].

\bibitem{Dimofte:2017tpi}
T.~Dimofte, D.~Gaiotto and N.~M. Paquette, \emph{{Dual boundary conditions in
  3d SCFT\textquoteright{}s}},
  \href{https://doi.org/10.1007/JHEP05(2018)060}{\emph{JHEP} {\bfseries 05}
  (2018) 060} [\href{https://arxiv.org/abs/1712.07654}{{\ttfamily
  1712.07654}}].

\bibitem{Dinkins:2019pwj}
H.~Dinkins and A.~Smirnov, \emph{{Characters of tangent spaces at torus fixed
  points and $3d$-mirror symmetry}},
  \href{https://arxiv.org/abs/1908.01199}{{\ttfamily 1908.01199}}.

\bibitem{hunterpending}
S.~Crew, H.~Dinkins and D.~Zhang, \emph{{Hemisphere Blocks and Mirror Symmetry
  of Twisted Indices (in preparation)}}, .

\bibitem{Okounkov:2015spn}
A.~Okounkov, \emph{{Lectures on K-theoretic computations in enumerative
  geometry}},  \href{https://arxiv.org/abs/1512.07363}{{\ttfamily 1512.07363}}.

\bibitem{CIOCANFONTANINE2012268}
I.~Ciocan-Fontanine, M.~Konvalinka and I.~Pak, \emph{Quantum cohomology of
  hilbn(c2) and the weighted hook walk on young diagrams},
  \href{https://doi.org/https://doi.org/10.1016/j.jalgebra.2011.10.011}{\emph{Journal
  of Algebra} {\bfseries 349} (2012) 268 }.

\bibitem{Smirnov:2016vaw}
A.~Smirnov, \emph{{Rationality of capped descendent vertex in $K$-theory}},
  \href{https://arxiv.org/abs/1612.01048}{{\ttfamily 1612.01048}}.

\bibitem{Okounkov:2016sya}
A.~Okounkov and A.~Smirnov, \emph{{Quantum difference equation for Nakajima
  varieties}},  \href{https://arxiv.org/abs/1602.09007}{{\ttfamily
  1602.09007}}.

\bibitem{Okounkov:2018huu}
A.~Okounkov, \emph{{On the crossroads of enumerative geometry and geometric
  representation theory}},  \href{https://arxiv.org/abs/1801.09818}{{\ttfamily
  1801.09818}}.

\bibitem{Kononov:2019fni}
Y.~Kononov, A.~Okounkov and A.~Osinenko, \emph{{The 2-leg vertex in K-theoretic
  DT theory}},  \href{https://arxiv.org/abs/1905.01523}{{\ttfamily
  1905.01523}}.

\bibitem{Iqbal:2007ii}
A.~Iqbal, C.~Kozcaz and C.~Vafa, \emph{{The Refined topological vertex}},
  \href{https://doi.org/10.1088/1126-6708/2009/10/069}{\emph{JHEP} {\bfseries
  10} (2009) 069} [\href{https://arxiv.org/abs/hep-th/0701156}{{\ttfamily
  hep-th/0701156}}].

\bibitem{Bonelli:2013mma}
G.~Bonelli, A.~Sciarappa, A.~Tanzini and P.~Vasko, \emph{{Vortex partition
  functions, wall crossing and equivariant Gromov-Witten invariants}},
  \href{https://doi.org/10.1007/s00220-014-2193-8}{\emph{Commun. Math. Phys.}
  {\bfseries 333} (2015) 717}
  [\href{https://arxiv.org/abs/1307.5997}{{\ttfamily 1307.5997}}].

\bibitem{Bonelli:2013rja}
G.~Bonelli, A.~Sciarappa, A.~Tanzini and P.~Vasko, \emph{{The Stringy Instanton
  Partition Function}},
  \href{https://doi.org/10.1007/JHEP01(2014)038}{\emph{JHEP} {\bfseries 01}
  (2014) 038} [\href{https://arxiv.org/abs/1306.0432}{{\ttfamily 1306.0432}}].

\bibitem{Gaiotto:2019wcc}
D.~Gaiotto and J.~Oh, \emph{{Aspects of $\Omega$-deformed M-theory}},
  \href{https://arxiv.org/abs/1907.06495}{{\ttfamily 1907.06495}}.

\bibitem{Gaiotto:2019mmf}
D.~Gaiotto and T.~Okazaki, \emph{{Sphere correlation functions and Verma
  modules}}, \href{https://doi.org/10.1007/JHEP02(2020)133}{\emph{JHEP}
  {\bfseries 02} (2020) 133}
  [\href{https://arxiv.org/abs/1911.11126}{{\ttfamily 1911.11126}}].

\bibitem{braverman2014macdonald}
A.~{Braverman}, M.~{Finkelberg} and J.~{Shiraishi}, \emph{{Macdonald
  polynomials, Laumon spaces and perverse coherent sheaves}}, {\emph{arXiv
  e-prints} (2012) arXiv:1206.3131}
  [\href{https://arxiv.org/abs/1206.3131}{{\ttfamily 1206.3131}}].

\bibitem{GANSNER198171}
E.~R. Gansner, \emph{The hillman-grassl correspondence and the enumeration of
  reverse plane partitions},
  \href{https://doi.org/https://doi.org/10.1016/0097-3165(81)90041-8}{\emph{Journal
  of Combinatorial Theory, Series A} {\bfseries 30} (1981) 71 }.

\bibitem{Awata:2008ed}
H.~Awata and H.~Kanno, \emph{{Refined BPS state counting from Nekrasov's
  formula and Macdonald functions}},
  \href{https://doi.org/10.1142/S0217751X09043006}{\emph{Int. J. Mod. Phys. A}
  {\bfseries 24} (2009) 2253}
  [\href{https://arxiv.org/abs/0805.0191}{{\ttfamily 0805.0191}}].

\bibitem{Aganagic:2003db}
M.~Aganagic, A.~Klemm, M.~Marino and C.~Vafa, \emph{{The Topological vertex}},
  \href{https://doi.org/10.1007/s00220-004-1162-z}{\emph{Commun. Math. Phys.}
  {\bfseries 254} (2005) 425}
  [\href{https://arxiv.org/abs/hep-th/0305132}{{\ttfamily hep-th/0305132}}].

\bibitem{Iqbal:2004ne}
A.~Iqbal and A.-K. Kashani-Poor, \emph{{The Vertex on a strip}},
  \href{https://doi.org/10.4310/ATMP.2006.v10.n3.a2}{\emph{Adv. Theor. Math.
  Phys.} {\bfseries 10} (2006) 317}
  [\href{https://arxiv.org/abs/hep-th/0410174}{{\ttfamily hep-th/0410174}}].

\bibitem{Taki:2007dh}
M.~Taki, \emph{{Refined Topological Vertex and Instanton Counting}},
  \href{https://doi.org/10.1088/1126-6708/2008/03/048}{\emph{JHEP} {\bfseries
  03} (2008) 048} [\href{https://arxiv.org/abs/0710.1776}{{\ttfamily
  0710.1776}}].

\bibitem{Closset:2016arn}
C.~Closset and H.~Kim, \emph{{Comments on twisted indices in 3d supersymmetric
  gauge theories}}, \href{https://doi.org/10.1007/JHEP08(2016)059}{\emph{JHEP}
  {\bfseries 08} (2016) 059}
  [\href{https://arxiv.org/abs/1605.06531}{{\ttfamily 1605.06531}}].

\bibitem{Benini:2015noa}
F.~Benini and A.~Zaffaroni, \emph{{A topologically twisted index for
  three-dimensional supersymmetric theories}},
  \href{https://doi.org/10.1007/JHEP07(2015)127}{\emph{JHEP} {\bfseries 07}
  (2015) 127} [\href{https://arxiv.org/abs/1504.03698}{{\ttfamily
  1504.03698}}].

\bibitem{Dimofte:2009bv}
T.~Dimofte and S.~Gukov, \emph{{Refined, Motivic, and Quantum}},
  \href{https://doi.org/10.1007/s11005-009-0357-9}{\emph{Lett. Math. Phys.}
  {\bfseries 91} (2010) 1} [\href{https://arxiv.org/abs/0904.1420}{{\ttfamily
  0904.1420}}].

\bibitem{Bershadsky:1995qy}
M.~Bershadsky, C.~Vafa and V.~Sadov, \emph{{D-branes and topological field
  theories}}, \href{https://doi.org/10.1016/0550-3213(96)00026-0}{\emph{Nucl.
  Phys. B} {\bfseries 463} (1996) 420}
  [\href{https://arxiv.org/abs/hep-th/9511222}{{\ttfamily hep-th/9511222}}].

\bibitem{Hosseini:2016ume}
S.~M. Hosseini and N.~Mekareeya, \emph{{Large $N$ topologically twisted index:
  necklace quivers, dualities, and Sasaki-Einstein spaces}},
  \href{https://doi.org/10.1007/JHEP08(2016)089}{\emph{JHEP} {\bfseries 08}
  (2016) 089} [\href{https://arxiv.org/abs/1604.03397}{{\ttfamily
  1604.03397}}].

\bibitem{Polychronakos:2001mi}
A.~P. Polychronakos, \emph{{Quantum Hall states as matrix Chern-Simons
  theory}}, \href{https://doi.org/10.1088/1126-6708/2001/04/011}{\emph{JHEP}
  {\bfseries 04} (2001) 011}
  [\href{https://arxiv.org/abs/hep-th/0103013}{{\ttfamily hep-th/0103013}}].

\bibitem{Dorey:2016mxm}
N.~Dorey, D.~Tong and C.~Turner, \emph{{Matrix model for non-Abelian quantum
  Hall states}}, \href{https://doi.org/10.1103/PhysRevB.94.085114}{\emph{Phys.
  Rev. B} {\bfseries 94} (2016) 085114}
  [\href{https://arxiv.org/abs/1603.09688}{{\ttfamily 1603.09688}}].

\bibitem{Dorey:2016hoj}
N.~Dorey, D.~Tong and C.~Turner, \emph{{A Matrix Model for WZW}},
  \href{https://doi.org/10.1007/JHEP08(2016)007}{\emph{JHEP} {\bfseries 08}
  (2016) 007} [\href{https://arxiv.org/abs/1604.05711}{{\ttfamily
  1604.05711}}].

\bibitem{Hanany:2003hp}
A.~Hanany and D.~Tong, \emph{{Vortices, instantons and branes}},
  \href{https://doi.org/10.1088/1126-6708/2003/07/037}{\emph{JHEP} {\bfseries
  07} (2003) 037} [\href{https://arxiv.org/abs/hep-th/0306150}{{\ttfamily
  hep-th/0306150}}].

\bibitem{Nakajima:2011yq}
H.~Nakajima, \emph{{Handsaw quiver varieties and finite W-algebras}},
  {\emph{Moscow Math. J.} {\bfseries 12} (2012) 633}
  [\href{https://arxiv.org/abs/1107.5073}{{\ttfamily 1107.5073}}].

\bibitem{Pestun:2016qko}
V.~Pestun, \emph{{Review of localization in geometry}},
  \href{https://doi.org/10.1088/1751-8121/aa6161}{\emph{J. Phys. A} {\bfseries
  50} (2017) 443002} [\href{https://arxiv.org/abs/1608.02954}{{\ttfamily
  1608.02954}}].

\bibitem{Ekholm:2018eee}
T.~Ekholm, P.~Kucharski and P.~Longhi, \emph{{Physics and geometry of
  knots-quivers correspondence}},
  \href{https://arxiv.org/abs/1811.03110}{{\ttfamily 1811.03110}}.

\bibitem{Awata:1995eh}
H.~Awata, S.~Odake and J.~Shiraishi, \emph{{Integral representations of the
  Macdonald symmetric functions}},
  \href{https://doi.org/10.1007/BF02100101}{\emph{Commun. Math. Phys.}
  {\bfseries 179} (1996) 647}
  [\href{https://arxiv.org/abs/q-alg/9506006}{{\ttfamily q-alg/9506006}}].

\bibitem{di2018difference}
P.~Di~Francesco and R.~Kedem, \emph{Difference equations for graded characters
  from quantum cluster algebra}, {\emph{Transformation Groups} {\bfseries 23}
  (2018) 391}.

\bibitem{kirillov1996affine}
A.~N. Kirillov and M.~Noumi, \emph{Affine hecke algebras and raising operators
  for macdonald polynomials}, {\emph{arXiv preprint q-alg/9605004} (1996) }.

\bibitem{Zenkevich:2017ylb}
A.~Nedelin, S.~Pasquetti and Y.~Zenkevich, \emph{{T[SU(N)] duality webs: mirror
  symmetry, spectral duality and gauge/CFT correspondences}},
  \href{https://doi.org/10.1007/JHEP02(2019)176}{\emph{JHEP} {\bfseries 02}
  (2019) 176} [\href{https://arxiv.org/abs/1712.08140}{{\ttfamily
  1712.08140}}].

\bibitem{Dorey:2019kaf}
N.~Dorey and D.~Zhang, \emph{{Superconformal quantum mechanics on K\"ahler
  cones}}, \href{https://doi.org/10.1007/JHEP05(2020)115}{\emph{JHEP}
  {\bfseries 05} (2020) 115}
  [\href{https://arxiv.org/abs/1911.06787}{{\ttfamily 1911.06787}}].

\bibitem{nakajima2012handsaw}
H.~Nakajima, \emph{{Handsaw quiver varieties and finite W-algebras}},
  {\emph{Moscow Math.J.} {\bfseries 12} (2012) 633}
  [\href{https://arxiv.org/abs/1107.5073}{{\ttfamily 1107.5073}}].

\end{thebibliography}\endgroup

\end{document}